\documentclass[onefignum,onetabnum]{siamonline0516}
\usepackage{amsfonts, amssymb}
\usepackage{bbm}

\usepackage{mathtools}
\usepackage{etextools}
\usepackage{ifthen}

\usepackage{mathrsfs}

\usepackage{pgfplotstable}
\usepackage{cleveref}

\usepackage[caption=false]{subfig}
\usepackage{diagbox}
\usepackage{makecell}
\usepackage{paralist}

\newcommand{\var}[1]{{#1}} 

\newcommand{\set}[1]{{#1}} 

\newcommand{\tens}[1]{\mathcal{#1}}


\newcommand{\matr}[1]{\mathbf{#1}}

\newcommand{\vect}[1]{\mathbf{#1}} 

\makeatletter
\newcommand{\rrank}[1]{\mathop{\operator@font rrank}\{#1\}}   
\newcommand{\krank}[1]{\mathop{\operator@font krank}\{#1\}}
\newcommand{\trace}[1]{\mathop{\operator@font trace}\{#1\}}
\newcommand{\Diag}[1]{\mathop{\operator@font Diag}\{#1\}}    
\newcommand{\Span}[1]{\mathop{\operator@font Span}\{#1\}}    
\renewcommand{\vec}{\mathop{\operator@font vec}} 
\newcommand{\codim}[1]{\mathop{\operator@font codim} #1}    
\makeatother

\newcommand{\eqdef}{\stackrel{\mathrm{def}}{=}}

\newcommand{\NN}{\mathbb{N}} 
\newcommand{\KK}{\mathbb{K}} 
\newcommand{\RR}{\mathbb{R}} 
\newcommand{\CC}{\mathbb{C}} 
\newcommand{\PP}{\mathbb{P}} 
\newcommand{\bmx}{\begin{bmatrix}}
\newcommand{\emx}{\end{bmatrix}}
\newcommand{\bsm}{\left(\begin{smallmatrix}}
\newcommand{\esm}{\end{smallmatrix}\right)}


\providecommand{\rank}[1]{\operatorname{rank}(#1)} 
\providecommand{\xrank}[1]{\operatorname{rank}_{X}(#1)} 

\DeclarePairedDelimiter{\ceil}{\lceil}{\rceil}

\DeclarePairedDelimiter{\brackets}{[}{]} 


\def\ambspace{\set{A}}
\def\vars{\vect{u}}
\def\varsk#1{u_{#1}}

\def\rGenSdV{\rho}


\newsiamremark{remark}{Remark}

\newcommand{\newsiammyremark}[2]{
  \theoremstyle{plain}
  \theoremheaderfont{\normalfont\itshape}
  \theorembodyfont{\normalfont}
  \theoremseparator{.}
  \theoremsymbol{}
  \newtheorem{#1}{#2}
}

\newcommand{\xparam}{\mathscr{X}}
\newcommand{\pspace}[2]{\Pi_{#1}^{#2}}

\newsiammyremark{example}{Example}
\newsiammyremark{assumption}{Assumption}
\crefname{assumption}{Assumption}{Assumptions}

\title{Identifiability of an X-rank decomposition of polynomial maps\thanks{This work is supported by the ERC project ``DECODA'' no.320594, in the frame of the European program FP7/2007-2013.}}
\author{Pierre Comon, Yang Qi, and Konstantin Usevich\\{\tt pierre.comon@gipsa-lab.fr, yangqi@galton.uchicago.edu, konstantin.usevich@gipsa-lab.fr}}

\begin{document}

\maketitle
\newcommand{\slugmaster}{%
\slugger{siaga}{xxxx}{xx}{x}{x--x}}

\begin{abstract}
In this paper, we study a polynomial decomposition model that arises in problems of system identification, signal processing and machine learning.
We show that this decomposition is a special case of the  X-rank decomposition --- a powerful novel concept in algebraic geometry that generalizes the tensor CP decomposition.
We prove new results on generic/maximal rank  and on identifiability of a particular polynomial decomposition model.
In the paper, we try to make  results and basic tools accessible for  general audience (assuming no knowledge of algebraic geometry or its prerequisites).
\end{abstract}

\begin{keywords}X-rank, identifiability, polynomial decomposition, Waring decomposition,  generic rank\end{keywords}
\begin{AMS}12E05; 
14M12;
15A21;
15A69
\end{AMS}

\section{Introduction: polynomial decompositions}\label{sec:intro}
\subsection{Notation}
We use boldface letters ($\vect{a}, \vect{b}$, ...) for vectors, and boldface capital letters ($\matr{A}$, $\matr{B}$, ...) for matrices.
Given an $m$-dimensional vector space $A$ over a field $\KK$, fix a basis for $A$, then a vector $\vect{a} \in A$ can be identified with an $m \times 1$ matrix, i.e., $\vect{a} = \bmx a_1 & \cdots & a_m\emx^{\top}$, where $\cdot^{\top}$ denotes the transpose.
Thus, $\vect{a}^{\top} \vect{b}$ stands for the matrix multiplication\footnote{Note that this is not the inner product in the case $\KK=\CC$} 
$\vect{a}^{\top} \vect{b} = a_1 b_1 + \cdots + a_m b_m$. 
By $\pspace{m}{d}$  we denote the space of multivariate polynomials in $m$ variables of total degree $\le d$, and we write an element of $\pspace{m}{d}$ in the form $f(\vars)$, where $\vars= \bmx \varsk{1} & \cdots & \varsk{m} \emx^{\top}$.

Standardly, we use $\times$ for Cartesian product of sets, and a shorthand notation $A^{\times d} = A \times \cdots \times A$.
We use $A\oplus B$  for the direct sum\footnote{\emph{i.e.} the Cartesian product $A\times B$ equipped with the vector space structure} of vector spaces, and $\otimes$ for the tensor product.
By $S^d(V)$ or $S^{d}V$ we denote the space of $d$-th order symmetric tensors on an $m$-dimensional vector space $V$ (\emph{i.e.}, $m\times \cdots \times m$ symmetric tensors). In $S^d V$, $\vect{v}^{d}$ means $\vect{v} \otimes \cdots \otimes \vect{v}$.

\subsection{Model and examples}
Let $\KK$ be $\RR$ or $\CC$.
Consider  a multivariate polynomial map $\vect{f}: \KK^{m} \to \KK^{n}$, \emph{i.e.,} a vector
 $\vect{f}(\vars) = \bmx f_1(\vars) & \cdots & f_{n}(\vars) \emx^{\top} \in (\pspace{m}{d})^{\times n}$ of multivariate polynomials of total degree $\le d$ in $m$ variables, (\emph{i.e.}, each $f_i \in \pspace{m}{d}$).
Without loss of generality, in this paper, we assume that $f_k(\vect{0}) = \vect{0}$ (\emph{i.e.}, the constant part of $\vect{f}$ is zero).

Following \cite{Dreesen.etal14-Decoupling}, we say that  $\vect{f}$ has a \textit{decoupled representation}, if it can be expressed as
\begin{equation}\label{eq:decoupling_additive}
 \vect{f}(\vars) = \vect{w}_1 g_1(\vect{v}^{\top}_1\vars) + \cdots + \vect{w}_r g_r(\vect{v}^{\top}_r \vars), 
\end{equation}
where  $\vect{v}_k \in \KK^{m}$, $\vect{w}_k \in \KK^{n}$, and where $g_k(t) =  c_{1,k}t + \ldots + c_{d,k} t^d$ are univariate polynomials over $\KK$.
The problem is often to find a decoupled representation \eqref{eq:decoupling_additive} with $r$ minimum.

\begin{example}[$d=1$]\label{ex:d_1}\quad
In this case, $\vect{f}$ is a linear map, \emph{i.e.} $\vect{f}(\vars) = \matr{F}\cdot\vars$ with $\matr{F}\in \KK^{n\times m}$.
Without loss of generality  we can assume $g_k(t) = t$, and  \eqref{eq:decoupling_additive} becomes a low-rank factorization\footnote{\cref{ex:d_1} shows that  \eqref{eq:decoupling_additive} can be interpreted as a ``low-rank factorization'' of a nonlinear map.} 
\[
\matr{F} =  \vect{w}_1 \vect{v}^{\top}_1 + \cdots + \vect{w}_r \vect{v}^{\top}_r.
\]

\end{example}

The next special case is one of the key examples in this paper.
\begin{example}[$n=1$]\label{ex:n_1}
In this case $\vect{f}$ is a single polynomial $\vect{f}(\vars) = f(\vars)$, and \eqref{eq:decoupling_additive} becomes
\begin{equation}\label{eq:decoupling_one_output}
{f}(\vars) =  g_1(\vect{v}^{\top}_1\vars) + \cdots +  g_r(\vect{v}^{\top}_r\vars),
\end{equation}
since we can assume that $\vect{w}_k = [1]$.
An example of \eqref{eq:decoupling_one_output} is shown in \cref{fig:polydec}.
\end{example}
The decomposition \eqref{eq:decoupling_one_output}%
\begin{itemize}
\item
is known as sum of  \emph{ridge functions} or plane waves \cite{Logan.Shepp75D-Optimal,Oskolkov02SMJ-Representations} in approximation theory;

\item
corresponds to ridge polynomial  neural networks \cite{Shin.Ghosh96IToNN-Ridge} (RPNs) in machine learning;

\item
appears in blind source separation problems in signal processing \cite{Comon.etal15conf-polynomial}.
\end{itemize}
\begin{figure}[hbtp]
\centering
\subfloat[$f(x,y)= 6xy^2 + 4xy$]{\includegraphics[height=2.8cm]{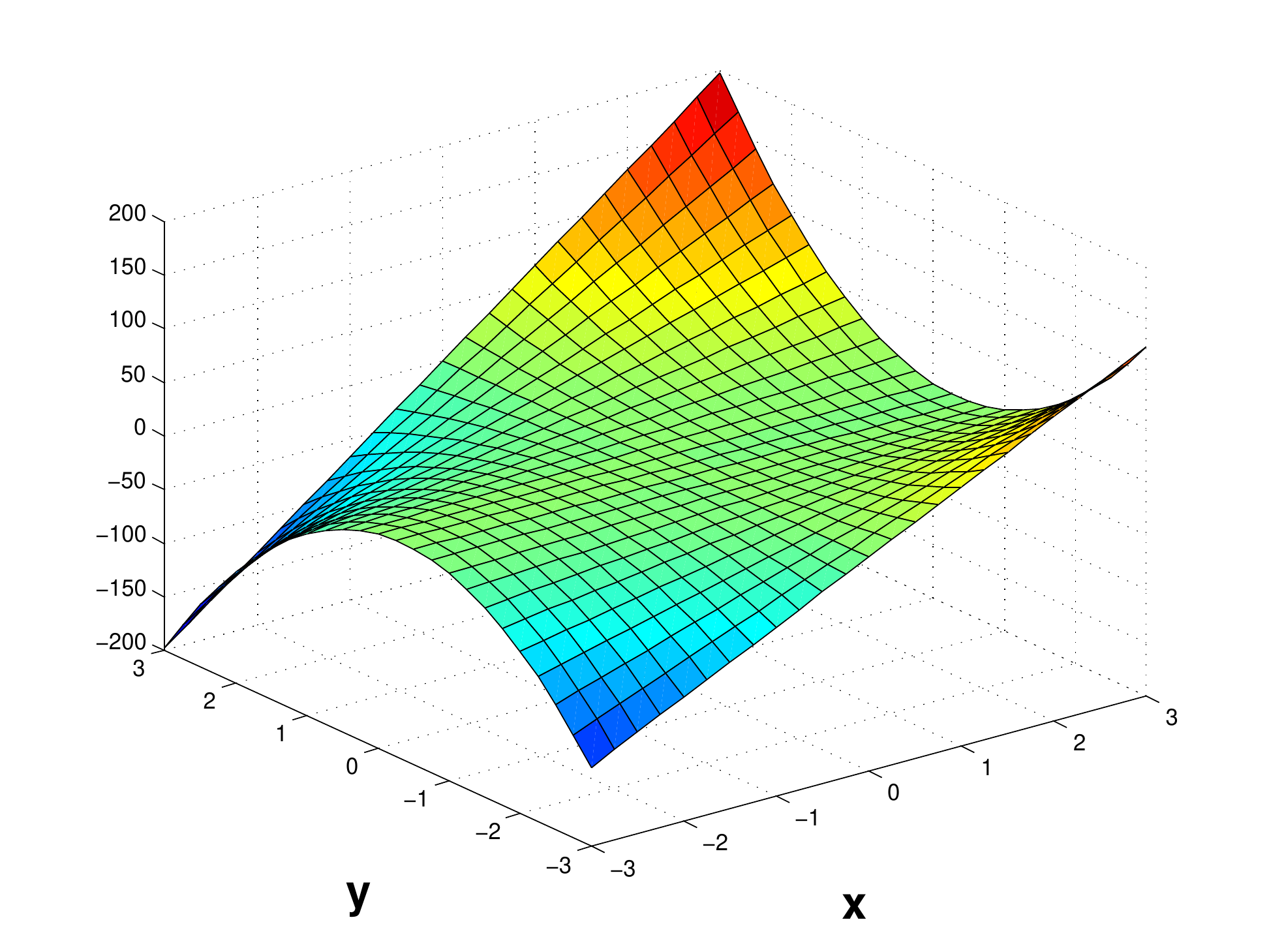}}
\subfloat[$g_1(x+y)$]{\includegraphics[height=2.8cm]{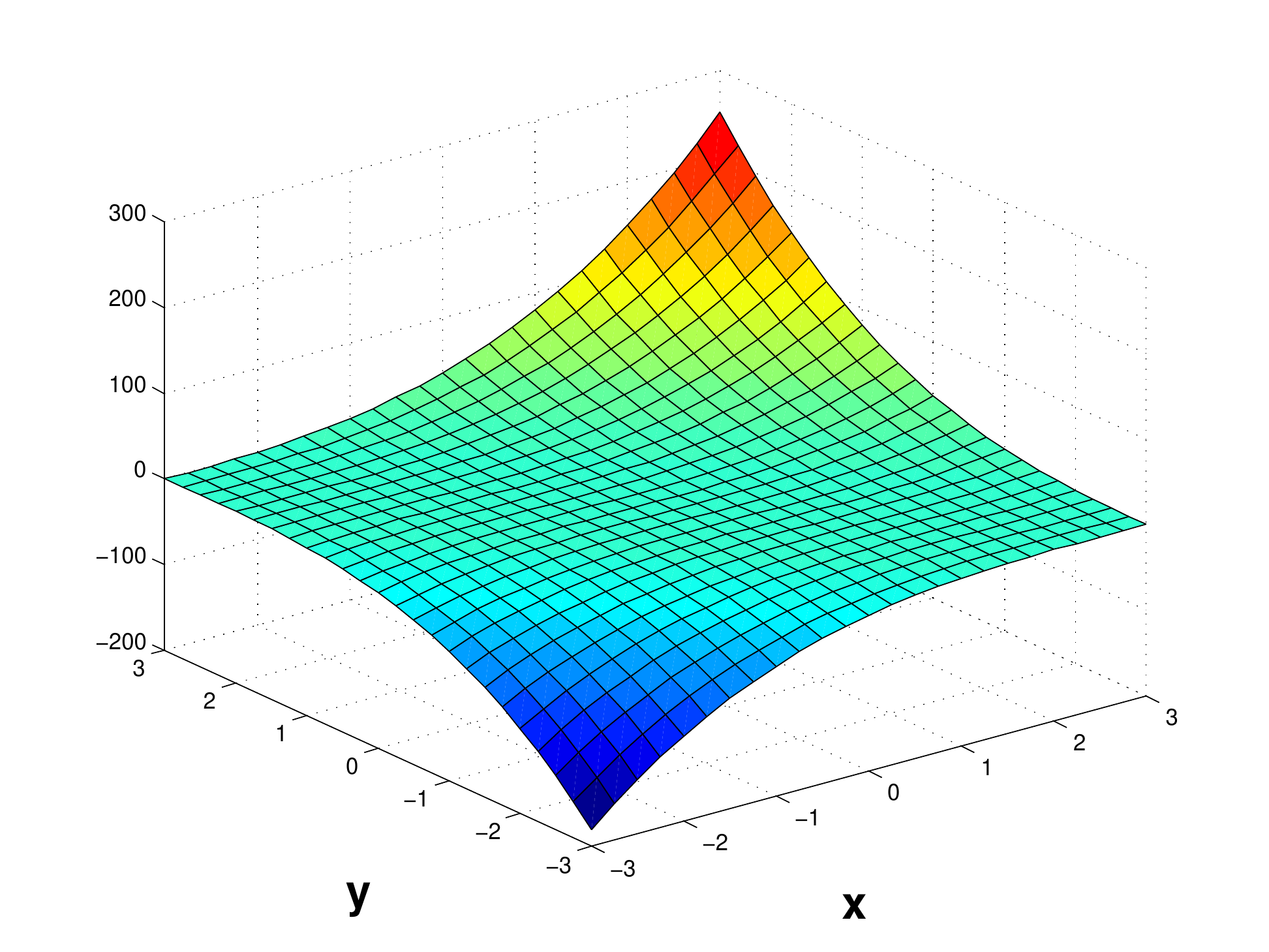}}
\subfloat[$g_2(x-y)$]{\includegraphics[height=2.8cm]{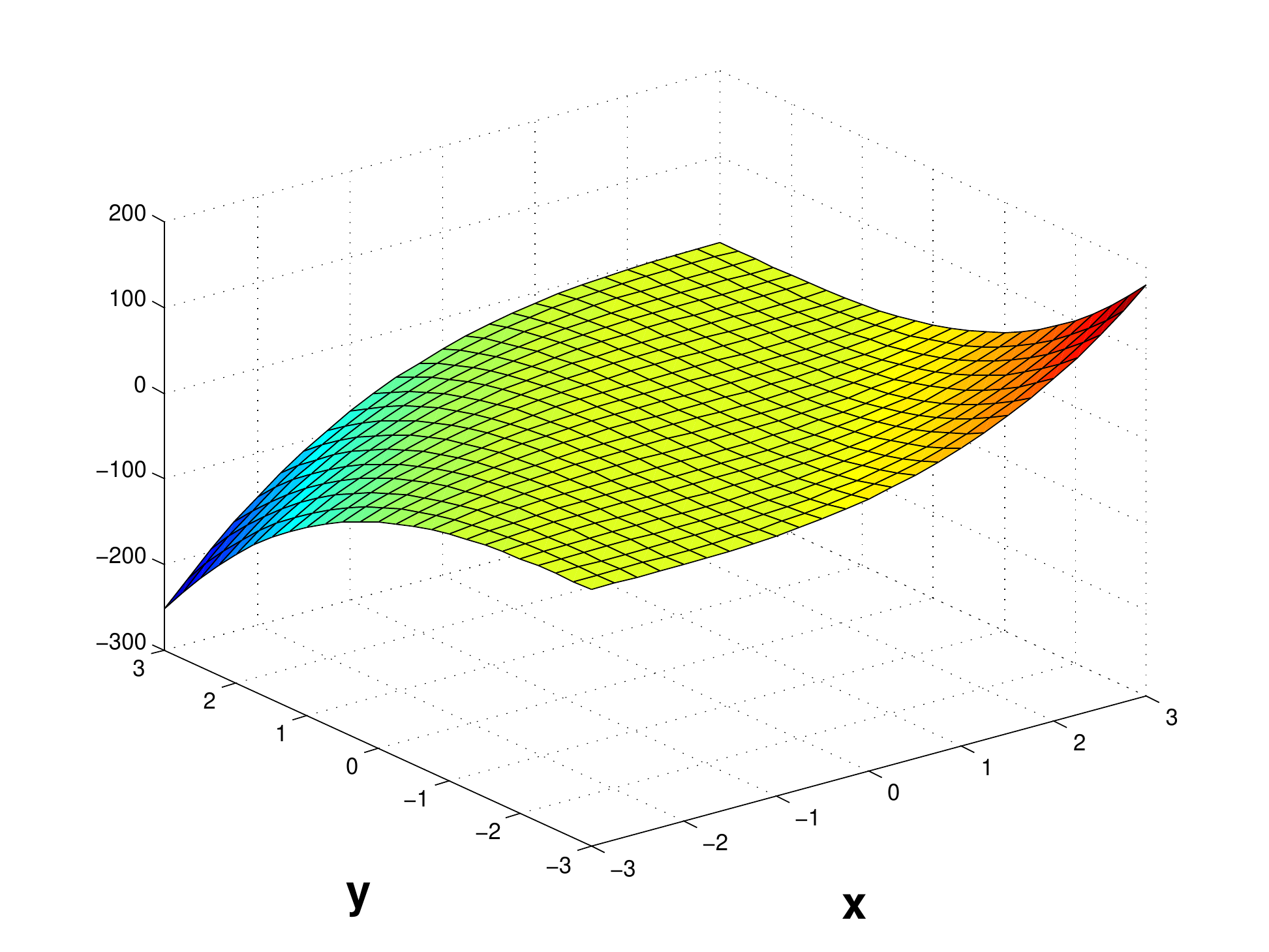}}
\subfloat[$g_3(x)$]{\includegraphics[height=2.8cm]{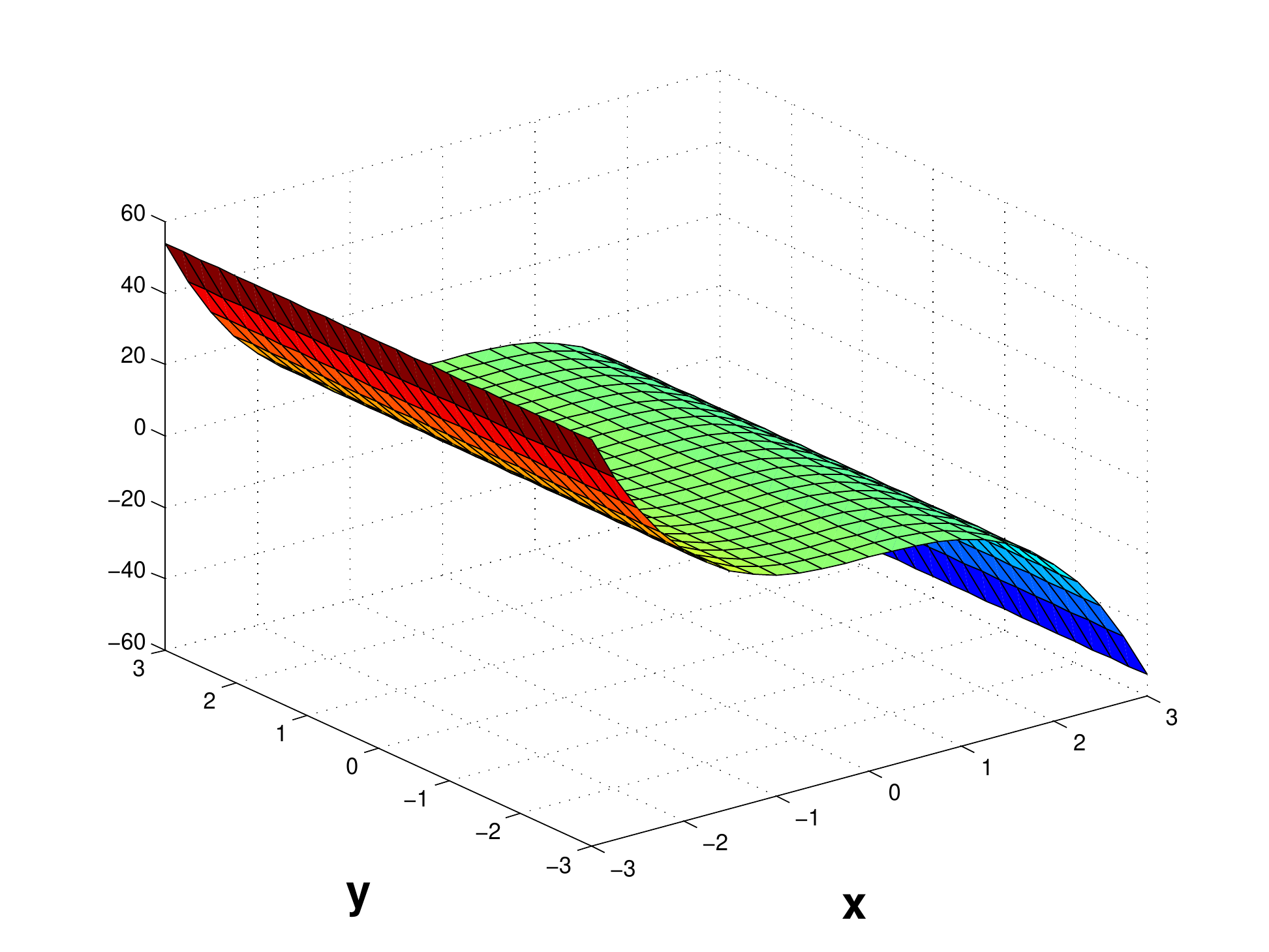}}
\label{fig:polydec}
\caption{$f(x,y) = g_1(x+y) + g_2(x-y) + g_3(x)$, $g_1(t) = t^3+t^2$, $g_2(t) = t^3-t^2$, $g_3(t) = -2t^3$.}
\end{figure}

Next, the homogeneous versions of \cref{eq:decoupling_additive} and \cref{eq:decoupling_one_output} are well-known in algebraic geometry.
\begin{example}[$n=1$, $f$ --- homogeneous]\label{ex:waring}
If $f$ is homogeneous of degree $d$, then $g_k(t)$ should be also homogeneous, \emph{i.e.}  $g_k(t) = c_k t^d$. Hence, the decomposition \eqref{eq:decoupling_additive} becomes
\begin{equation}\label{eq:waring_decomposition}
 f(\vars) =  c_1\cdot (\vect{v}^{\top}_1\vars)^d + \cdots +  c_r \cdot  (\vect{v}^{\top}_r \vars)^d.
\end{equation}
The decomposition \eqref{eq:waring_decomposition} is known as Waring decomposition, and was subject to numerous studies in the literature \cite{Iarobbino.Kanev99-Power,AlexanderHirschowitz95jag-nondefectivity}.
Via the correspondence between homogeneous polynomials  and symmetric tensors (see \cref{sec:symtdec}), \eqref{eq:waring_decomposition} becomes the symmetric tensor decomposition
\begin{equation}\label{eq:symtdec}
f = c_1 \vect{v}_1^d + \cdots + c_r \vect{v}_r^d,
\end{equation}
where $f\in S^dV$ is the symmetric tensor corresponding to the polynomial in $f(\vars)$.
\end{example}
For homogeneous case, the general decomposition (for $n>1$) was also already considered.
\begin{example}[$n>1$, $f$ --- homogeneous]\label{ex:sim_waring}
As in \cref{ex:waring}, \eqref{eq:decoupling_additive} can be rewritten as
\begin{equation}\label{eq:sim_waring_decomposition}
 f(\vars) =  \vect{w}_1\cdot (\vect{v}^{\top}_1\vars)^d + \cdots +  \vect{w}_r \cdot  (\vect{v}^{\top}_r \vars)^d.
\end{equation}
The decomposition \eqref{eq:sim_waring_decomposition} is exactly the simultaneous Waring decomposition of homogeneous polynomials $f_1(\vars), \ldots, f_n(\vars)$ (equivalently, CP decomposition of a partially symmetric tensor). 
\end{example}

\begin{example}[the general case, $n > 1$, $\vect{f}$ --- non-homogeneous]
As summarized in \cite{Dreesen.etal14-Decoupling}, the general decomposition 
\eqref{eq:decoupling_additive} appears in the field of nonlinear system identification \cite{Schoukens.etal14conf-System,Giri.Bai10-Block}.
A common problem in identification (parameter estimation) for several challenging nonlinear block-structured systems (parallel Wiener-Hammerstein \cite{Schoukens.etal14conf-System}  and nonlinear feedback \cite{VanMulders.etal14conf-Identification} models) is to decompose a nonlinear function (represented by a polynomial) in the form \eqref{eq:decoupling_additive}.
\end{example}

\begin{remark}
In the system identification literature (\cite{Dreesen.etal14-Decoupling}), the decomposition \eqref{eq:decoupling_additive} is often written in a compact form
\[
\vect{f}(\vars) = \matr{W} \vect{g}(\matr{V}^{\top} \vars),
\]
where $\matr{V} = \bmx \vect{v}_1 & \cdots & \vect{v}_r \emx \in \KK^{m\times r}$,  $\matr{W} = \bmx \vect{w}_1 & \cdots & \vect{w}_r \emx \in \KK^{n\times r}$ and $\vect{g}: \KK^{r} \to \KK^{r}$ defined as $\vect{g}(t_1,\ldots, t_r) = \bmx g_1(t_1) & \cdots & g_r(t_r)\emx^{\top}$.
Also, a block-diagram for  decomposition \eqref{eq:decoupling_additive} (given in \cref{fig:decoupled}) is often used, where the ``input'' variables $\vect{u}$ are transformed by a linear transformation, followed by component-wise nonlinear transformations. The ``outputs'' are obtained by linear combinations of the results of the nonlinear transformation.
\end{remark}
\begin{figure}[htb!]
    \centering
    \tikz \node [scale=0.8]{
        \begin{tikzpicture}
            \node (u1) at (0,3) {$u_1$};
            \node at (0,2.15) {$\vdots$};
            \node (um) at (0,1) {$u_m$};
            \draw [thick,fill=black!20, rounded corners=5pt] (1,0.5) rectangle (5,3.5); \node (F) at (3,2) {$\vect{f}(u_1,\ldots,u_m)$};
            \draw [->, thick, label=] (5,3) -- (5.65,3) node[right] {$y_1$};
            \node at (5.95,2.15) {$\vdots$};
            \draw [->, thick] (5,1) -- (5.65,1) node[right] {$y_n$};
            \draw [->, thick] (u1) -- (1,3);
            \draw [->, thick] (um) -- (1,1);
        \end{tikzpicture}
        \quad
        \raisebox{6\height}{\Large$=$}
        \quad
        \begin{tikzpicture}
            \node (u1) at (0,3) {$u_1$};
            \node at (0,2.15) {$\vdots$};
            \node (um) at (0,1) {$u_m$};
            \draw [thick] (1,0.5) rectangle (2,3.5); \node (L) at (1.5,2) {$\matr{V}^{\top}$};
            \draw [->, thick] (u1) -- (1,3);
            \draw [->, thick] (um) -- (1,1);
            \node [shape=rectangle,draw,thick,fill=black!20,rounded corners=5pt] (g1) at (4,3) {$g_1(t_1)$};
            \draw [->, thick, label=] (2,3) -- (g1) node[above,midway] {$t_1$};
            \node at (4,2.15) {$\vdots$};
            \node [shape=rectangle,draw,thick,fill=black!20,rounded corners=5pt] (gr) at (4,1) {$g_r(t_r)$};
            \draw [->, thick] (2,1) -- (gr) node[above,midway] {$t_r$};
            \draw [thick] (6,0.5) rectangle (7,3.5); \node (R) at (6.5,2) {$\matr{W}$};
            \draw [->, thick] (g1) -- (6,3) node[above,midway] {$g_1$};
            \draw [->, thick] (gr) -- (6,1) node[above,midway] {$g_r$};
            \node (y1) at (8,3) {$y_1$};
            \node at (8,2.15) {$\vdots$};
            \node (yn) at (8,1) {$y_n$};
            \draw [->, thick] (7,3) -- (y1);
            \draw [->, thick] (7,1) -- (yn);
        \end{tikzpicture} 
    };
    \caption{Representation of a polynomial decomposition.}\label{fig:decoupled}
		 \end{figure}
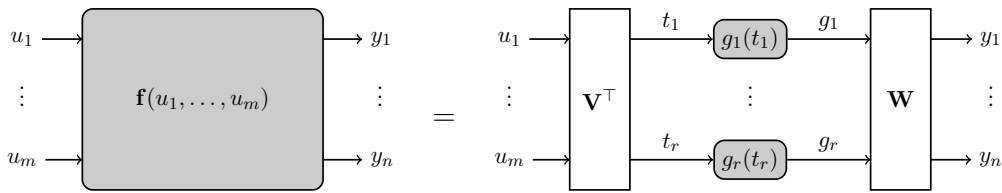

\subsection{Goals and previous works}\label{sec:goals}
When using  model \eqref{eq:decoupling_additive}, a few natural theoretical questions arise that are important to understand the limits of the applicability of the model.

\begin{enumerate}
\item When is the model \emph{identifiable}? (\emph{i.e.}, when is the decomposition \eqref{eq:decoupling_additive}  unique?).
\item What is the upper bound on $r$ in \eqref{eq:decoupling_additive} needed to represent any polynomial?
\item What is the typical (for a ``random'' $\vect{f}$) behavior of $r$ in the shortest decomposition?  
\end{enumerate}

As for the special (homogeneous) cases of decomposition \eqref{eq:decoupling_additive} (Examples 1,3,4), all the three cases were a subject of rapid development in the last two decades, and many results are available.
In this paper, we address the non-homogeneous case (Examples 2 and 5), where very few results are available (listed below).
\paragraph{Bounds on  $r$ and typical behavior} This question was considered only for $n=1$, in the papers 
 \cite{Schinzel02JdTdNdB-decomposition,Schinzel02CM-decomposition,Bialynicki-Birula.Schinzel08CM-Representations}.
The best result shows that  any $f \in \pspace{m}{d}$ can be decomposed as \eqref{eq:decoupling_one_output} whenever
\begin{equation}\label{eq:bound_rmax_BBS}
r \le \binom{m+d-2}{d-1},
\end{equation}
where the bound\footnote{Bound  \eqref{eq:bound_rmax_BBS} is better than a naive bound  $\binom{m+d-1}{d}$ (number of monomials in the highest degree part of $f$).} \eqref{eq:bound_rmax_BBS} is valid for $\RR$, $\CC$ and for certain finite fields.
The typical behavior of $r$ in the shortest decomposition is known only for the case $m=2$ and $n=1$ \cite{Schinzel02JdTdNdB-decomposition} (the case of bivariate polynomials).

\paragraph{Uniqueness} The uniqueness in representations \eqref{eq:decoupling_additive} was almost not studied.
The authors of \cite{Dreesen.etal14-Decoupling} suggested to construct a structured tensor from the coefficients of polynomials.
Based on a Kruskal-type condition for unstructured tensors, they propose a bound for generic uniqueness that depends on $r,m,d$. 
This bound is, however, applicable only to unstructured tensors, and not to the decomposition \eqref{eq:decoupling_additive}, as we argue in \cref{rem:DIS_generic_uniqueness}.

\subsection{Contribution and structure of this paper}
In this paper, we show that that the decomposition \eqref{eq:decoupling_additive} can be viewed as a special case of $X$-rank decomposition.
The notion of $X$-rank (or rank with respect to a variety $\widehat{X}$) is a powerful concept developed in  the field of algebraic geometry that generalizes
 matrix rank, tensor rank, symmetric tensor rank and other notions of rank.
The questions raised in \cref{sec:goals} can be addressed in the framework of X-rank and correspond to finding maximal, typical, generic ranks and to checking $r$-identifiability (generic uniqueness).
In particular, we:
\begin{enumerate}
\item Obtain results on identifiability and partial identifiability of \eqref{eq:decoupling_additive}.
\item Determine the value of generic rank for some special cases of $n=1$.
\item Obtain a new bound on $r_{max}$ (for $\KK =\RR$ or $\CC$) that is better than \eqref{eq:bound_rmax_BBS}.

\end{enumerate}

Although in this paper we do not develop  decomposition algorithms (see \cite{Dreesen.etal14-Decoupling}, \cite{VanMulders.etal14conf-Identification},\cite{Usevich14conf-Decomposing} for  available algorithms),
we believe that the ideas may lead to new or improved algorithms. 

In \cref{sec:xrank}, we introduce the concept of X-rank decompositions and make a review of recent results.
We prefer a very simplistic exposition  and hope that \cref{sec:xrank} may serve as an entry point to the literature on X-rank for a wider audience, including applied mathematicians and engineers.
In \cref{sec:uniqueness}, we recall the definition and known results on generic uniqueness (identifiability), and prove equivalence of different definitions appearing in the literature.
 In \cref{sec:algebraic}, we introduce Veronese scrolls, show that decompositions \eqref{eq:decoupling_additive} and \eqref{eq:decoupling_one_output} are related to $X$-rank decompositions for Veronese scrolls, and give defining equations for this variety.
\Cref{sec:scrolls} contains the main results of the paper, including identifiability of Veronese scrolls and polynomial decompositions, dimensions of secant varieties, and results on generic ranks.

\section{X-rank decompositions}\label{sec:xrank}
The concept of $X$-rank (or rank with respect to a variety) was probably first proposed in \cite{Zak2004}, and popularized in \cite{Blekherman.Teitler14-maximum,Landsberg12-Tensors}.
In this section we give key definitions and basic results, in a simplified form.
In particular, we avoid the use of projective varieties whenever possible.

\subsection{X-rank: definitions}
Consider an $N$-dimensional vector space\footnote{For simplicity, one can think that $\ambspace= \KK^{N}$.} $\ambspace$ over $\KK$, where $\KK$ is $\RR$ or $\CC$.
Assume that a subset $\widehat{X} \subset \ambspace$ is fixed that satisfies the following conditions.
\begin{assumption}\label{as:scale_invariant}
 $\widehat{X}$ is  \emph{scale-invariant}, \emph{i.e.} $\vect{v} \in \widehat{X}$ and $\alpha \in \KK$ implies $\alpha \vect{v}  \in \widehat{X}$.
\end{assumption}%
\begin{assumption}\label{as:non_degenerate}
$\widehat{X}$ is \emph{non-degenerate}, \emph{i.e.} it is not contained in any hyperplane of $\ambspace$.
\end{assumption}
\begin{assumption}\label{as:algebraic_variety}
$\widehat{X}$ is an \emph{algebraic variety}, \emph{i.e.} the zero set of a system of polynomial equations (see also \cref{sec:varieties}).
\end{assumption}

\begin{definition}
Given a subset $\widehat{X} \subset \ambspace$, the \emph{$X$-rank} of any vector $\vect{v}  \in \ambspace$ is defined as  the smallest number of rank-one elements, such that $\vect{v}$ can be represented as their sum: 
\begin{equation}\label{eq:xrank}
\xrank{\vect{v} } = \min r: \vect{v}  = \vect{x}_1 + \cdots +\vect{x}_r, \quad \vect{x}_k \in \widehat{\set{X}}.
\end{equation}
Such a decomposition  with the minimal possible number of terms  is called  the  $X$-rank decomposition. (The rank of $\vect{0}\in \ambspace$, by convention, is zero.)
\end{definition}

\cref{as:scale_invariant} guarantees that the $X$-rank is compatible with linear operations, whereas \cref{as:non_degenerate} ensures that  any vector has an $X$-rank decomposition and that the $X$-rank does not exceed $N$.
The \cref{as:algebraic_variety} allows for an algebraic analysis of $X$-rank decompositions.

The X-rank decomposition can be illustrated in \cref{fig:xrank}. It is also similar in spirit to sparse (atomic) decompositions, that appeared recently in other branches of applied mathematics \cite{Chen.etal01SR-Atomic}.

\begin{figure}[htbp]
  \centering
  \includegraphics[height=1.8cm]{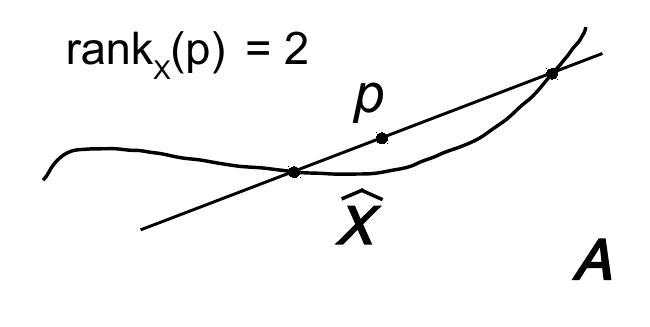}
  \caption{Vector $p$ can be decomposed into the sum of 2 elements of the variety $\widehat{X}$.}
  \label{fig:xrank}
\end{figure}

In fact, \cref{as:scale_invariant,as:algebraic_variety} imply that $\widehat{X}$ is an affine cone of a projective algebraic variety\footnote{where $\PP \ambspace$ is the projective space.} $X \subset \PP \ambspace$.
The projective variety $X$ is the usual starting point in the definition of $X$-rank, see \cite{Zak2004,Blekherman.Teitler14-maximum,Landsberg12-Tensors}.
 In this paper, however, we prefer to work and give definitions in terms of the affine variety $\widehat{X}$, which simplifies some expressions (as we will show later).
One only has to bear in mind that $\dim{X} = \dim{\widehat{X}} -1$.
To avoid pathological phenomena and also for convenience of using algebraic geometry, the following assumption is often imposed.
\begin{assumption}\label{as:irreducible}
$\widehat{X}$ is an irreducible variety (see \cref{sec:varieties}).
\end{assumption}
Finally, for real varieties, the following assumption is often added, to avoid unexpected phenomena and make use of the powerful tools from complex algebraic geometry. 
\begin{assumption}\label{as:smooth_real_point}
The complex variety $\widehat{X}_{\CC}$ is defined by  polynomial equations with real coefficients.
In addition, the corresponding real variety $\widehat{X}_{\RR} = \widehat{X}_{\CC} \cap \RR^{N}$ contains a smooth point of $\widehat{X}_{\CC}$ (see  \cref{sec:varieties}).
\end{assumption}

\subsection{Examples}
The basic examples, considered in \cref{ex:d_1}, \cref{ex:waring} and \cref{ex:sim_waring} fit in the framework of $X$-rank, and are explained in \cref{tab:xrank_examples}. All these examples in \cref{tab:xrank_examples} satisfy \crefrange{as:scale_invariant}{as:smooth_real_point}.

\begin{table}[!hbt]
\caption{Varieties and $X$-ranks}\label{tab:xrank_examples}
\begin{center}
\begin{tabular}{|c|c|c|c|}\hline
Ambient space  ($A$)   & $\dim(A)$ & variety  $\widehat{X}$  & $\dim (\widehat{X})$ \\\hline
$\KK^{I_1} \otimes \cdots \otimes \KK^{I_d}$ &  ${I_1\cdots I_d}$  & $Seg(\KK^{I_1}\times \cdots\times \KK^{I_d}) = \{\vect{a}_1 \otimes \cdots \otimes \vect{a}_d \}$  & $\sum_{k=1}^d I_k - d + 1$ \\
tensor & &  Segre variety & \\\hline
$S^d(\KK^{m})$ &  ${\left(\begin{smallmatrix}m+d-1\\d\end{smallmatrix}\right)}$  & $\nu_d(\KK^{m}) = \{c \vect{a}^{d}\}$ & $m$  \\
symmetric tensor & &  Veronese variety & \\\hline
$ (S^d(\KK^{m}))^{\times n}$ &  ${n \left(\begin{smallmatrix}m+d-1\\d\end{smallmatrix}\right)}$  & $Seg(\KK^{n} \times \nu_d(\KK^{m})) = \{\vect{w} \otimes \vect{a}^{d}\}$ & $m + n-1$  \\
several  & &  Segre-Veronese variety & \\
 symmetric tensors & &   & \\\hline
\end{tabular}
\end{center}
\end{table}

The dimension of the variety of rank-one elements $\widehat{X}$ reflects the number of degrees of freedom in the parameterization of $\widehat{X}$.
Take, for instance, the case of non-symmetric tensors (1-st row in \cref{tab:xrank_examples}). It is 
parameterized by $I_1 + \cdots +I_d$ parameters, but there are $d-1$ redundancies since any element of $\widehat{X}$ has many representations in the form $\vect{a}_1 \otimes \cdots \otimes \vect{a}_d$, due to exchange of scaling.
The other examples in \cref{tab:xrank_examples} follow the same pattern: the dimension of $\widehat{X}$ is equal to the number of parameters minus the number of ``dependencies''.

\subsection{Maximal, typical ranks and  basic relations}
First, we introduce two notations:
\begin{align*}
\Sigma_{\le r,\widehat{\set{X}}} \eqdef \{\mathbf{v} \in A \,|\, \xrank{\vect{v}} \le r \}, \\
\Sigma_{r,\widehat{\set{X}}} \eqdef \{\mathbf{v} \in A \,|\, \xrank{\vect{v}} = r \}. 
\end{align*}
\begin{definition}[Maximal rank]
The maximal $X$-rank is defined as the smallest $r$ such that $\Sigma_{\le r,\widehat{\set{X}}} = \ambspace$, and denoted by $r_{max}$.
\end{definition}
\begin{definition}
A rank $r$ is called typical if  $\Sigma_{ r,\widehat{\set{X}}}$ contains an open Euclidean ball in $\ambspace$.
\end{definition}

Since $\Sigma_{ r,\widehat{\set{X}}}$ is a semialgebraic set \cite{Qi.etal16SJMAA-Semialgebraic}, a rank $r$ is typical if and only if  $\Sigma_{ r,\widehat{\set{X}}}$ has nonzero Lebesgue measure.
Hence, a rank is typical, if and only if it appears with nonzero probability (if the vectors of $\ambspace$ are drawn from an absolutely continuous probability distribution).
The following properties of typical ranks over $\CC$ and $\RR$ are known.
\begin{lemma}\label{lem:generic_rank}
If $\KK = \CC$, there exists only one typical rank, which is called \emph{generic rank}, and denoted by $r_{gen}$.
Moreover, the elements or rank $r_{gen}$ are Zariski-dense in $\ambspace$, \emph{i.e.} there exists an algebraic subvariety $Z \subsetneq \ambspace$ such that $\xrank{\vect{v}} = r_{gen}$ for any $\vect{v} \in A \setminus Z$.
\end{lemma}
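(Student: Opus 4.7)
My plan is to prove the lemma via the classical observation that over $\CC$, the sublevel sets $\Sigma_{\le r,\widehat{X}}$ are constructible sets whose Zariski closures form the secant varieties of $\widehat{X}$, and to exploit the dichotomy that a constructible set either is contained in a proper subvariety or contains a Zariski-dense open set.

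First I would introduce the $r$-th \emph{secant variety} $\sigma_r(\widehat{X}) \eqdef \overline{\Sigma_{\le r,\widehat{X}}}$ (Zariski closure). The set $\Sigma_{\le r,\widehat{X}}$ is the image of the polynomial summation map $\widehat{X}^{\times r} \to \ambspace$, so by Chevalley's theorem it is constructible, and since $\widehat{X}$ is irreducible (\cref{as:irreducible}) so is $\widehat{X}^{\times r}$, making $\sigma_r(\widehat{X})$ an irreducible subvariety of $\ambspace$. The chain $\sigma_1(\widehat{X}) \subseteq \sigma_2(\widehat{X}) \subseteq \cdots \subseteq \ambspace$ is non-decreasing, and \cref{as:non_degenerate} together with scale-invariance forces $\sigma_r(\widehat{X}) = \ambspace$ for some finite $r$. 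Define $r_{gen}$ to be the smallest such $r$.

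Next I would extract the Zariski-open dense subset. Since $\Sigma_{\le r_{gen},\widehat{X}}$ is constructible and its Zariski closure is the irreducible variety $\ambspace$, a standard fact on constructible sets in irreducible varieties guarantees that $\Sigma_{\le r_{gen},\widehat{X}}$ contains a Zariski-open dense subset $U \subseteq \ambspace$. Now set
\[
Z \eqdef \sigma_{r_{gen}-1}(\widehat{X}) \cup (\ambspace \setminus U).
\]
Both pieces are proper algebraic subvarieties of $\ambspace$ (the first by minimality of $r_{gen}$, the second by construction), so $Z \subsetneq \ambspace$ is a proper subvariety. For any $\vect{v} \in \ambspace\setminus Z$, membership in $U$ yields $\xrank{\vect{v}} \le r_{gen}$, while the condition $\vect{v} \notin \sigma_{r_{gen}-1}(\widehat{X}) \supseteq \Sigma_{\le r_{gen}-1,\widehat{X}}$ yields $\xrank{\vect{v}} \ge r_{gen}$. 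Hence $\xrank{\vect{v}} = r_{gen}$ on $\ambspace \setminus Z$, proving the Zariski-density claim.

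Finally, I would derive uniqueness of the typical rank from this density statement together with the fact that any proper complex algebraic subvariety of $\ambspace \cong \CC^{N}$ has real codimension at least $2$, hence Lebesgue measure zero. If $r < r_{gen}$, then $\Sigma_{r,\widehat{X}} \subseteq \sigma_{r_{gen}-1}(\widehat{X})$ has measure zero; if $r > r_{gen}$, then $\Sigma_{r,\widehat{X}} \subseteq Z$ also has measure zero. In either case $r$ cannot be typical, so $r_{gen}$ is the unique typical rank. The main conceptual obstacle is the step invoking that a constructible set whose Zariski closure is the entire irreducible ambient space must contain a Zariski-open dense subset; this is standard but uses both Chevalley's theorem and irreducibility, so I would cite it explicitly rather than reprove it.
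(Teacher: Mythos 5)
The paper states this lemma without proof (it is presented as a known fact, with the surrounding discussion deferring to the literature), so there is no in-paper argument to compare against; your proof is correct and is the standard one. Each step checks out: $\Sigma_{\le r,\widehat{X}}$ is the image of the summation map on $\widehat{X}^{\times r}$ (using $0\in\widehat{X}$ from scale-invariance to pad shorter decompositions), hence constructible by Chevalley; non-degeneracy plus scale-invariance give a basis of $\ambspace$ inside $\widehat{X}$, so the secant chain reaches $\ambspace$ by step $N$; and a constructible set whose closure is the irreducible space $\ambspace$ contains a dense Zariski-open $U$, which together with $\sigma_{r_{gen}-1}(\widehat{X})\subsetneq\ambspace$ yields the set $Z$ and the density claim. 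The only step you leave implicit is that $r_{gen}$ is itself typical: this follows because $\ambspace\setminus Z\subseteq\Sigma_{r_{gen},\widehat{X}}$ is the complement of a proper Zariski-closed set, hence Euclidean-open and nonempty, so it contains a Euclidean ball as the paper's definition requires. With that one sentence added, the argument is complete.
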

\begin{theorem}[\cite{Bernardi.etal16arxiv-real}]
Over the real field, the typical ranks form a contiguous set, \emph{i.e.} there exist the numbers $r_{typ,min}$ and $r_{typ,max}$ such that: 
\begin{itemize}
\item Any $r_1$ such that $r_{typ,min} \le r_1 \le r_{typ,max}$ is typical;
\item Any $r_1$ such that $r_1 < r_{typ,min}$ or $r_1 > r_{typ,max}$ is not typical.
\end{itemize} 
\end{theorem}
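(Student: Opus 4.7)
The plan is to reduce the statement to the single assertion that the set of typical ranks is gap-free. Let $r_{typ,min}$ and $r_{typ,max}$ denote the smallest and largest typical ranks respectively; both exist because typical ranks lie in the finite set $\{0,1,\ldots,r_{max}\}$. The claim that any $r$ outside $[r_{typ,min},r_{typ,max}]$ is non-typical is then immediate from their extremal definition. The nontrivial content is the no-gap statement: if $r_1 < r_2$ are both typical, then every integer $r$ with $r_1 \le r \le r_2$ is also typical.

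First I would reformulate typicality in Euclidean terms. Since $\Sigma_{r,\widehat{X}}$ is semialgebraic (cited just before the statement), $r$ is typical iff its Euclidean interior $\Sigma_r^\circ$ is nonempty, and the ambient space splits as
\[
\ambspace \;=\; \Bigl(\bigsqcup_{r=0}^{r_{max}} \Sigma_r^\circ\Bigr) \;\sqcup\; Z,
\]
where $Z$ is a semialgebraic set of Euclidean codimension at least one (the union of the topological boundaries of the nested sets $\Sigma_{\le r,\widehat{X}}$). Because $Z$ is nowhere dense and $\ambspace$ is connected, any two points in the open chambers can be joined by a continuous path that meets $Z$ only transversally at finitely many smooth boundary points. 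The strategy is to follow such a path from a point of $\Sigma_{r_1}^\circ$ to a point of $\Sigma_{r_2}^\circ$ and track how the rank evolves at each crossing.

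The heart of the argument is a one-step jump property: when a generic path passes transversally through a smooth boundary point from a chamber of rank $r_a$ into an adjacent chamber of rank $r_b$, one has $|r_a - r_b| \le 1$. The inequality $r_b \le r_a + 1$ is the easier direction, since crossing out of $\Sigma_{\le r_a,\widehat{X}}$ can always be compensated by adding one element of $\widehat{X}$. The reverse inequality $r_b \ge r_a - 1$ is the delicate one and is where \cref{as:smooth_real_point} becomes essential: using the smoothness of $\widehat{X}_\CC$ at a real point, the implicit function theorem lets small perturbations of each summand stay in $\widehat{X}_\RR$, so that the local structure of $\Sigma_{\le k,\widehat{X}}$ near a smooth boundary point is the image of a submersion, which forces the rank to drop by at most one at a transversal crossing. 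Once the one-step property is in hand, concatenating jumps along the chosen path forces every integer between $r_1$ and $r_2$ to appear as the rank of some open chamber, which is precisely the required contiguity.

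The main obstacle is formalizing the one-step property rigorously in the real setting, where the gap between the Zariski and Euclidean topologies makes dimension-counting arguments subtle. Controlling the direction of the jump at a crossing requires a Terracini-style analysis of the tangent space of $\Sigma_{\le k,\widehat{X}}$ at a smooth boundary point, carried out over $\RR$; this is the technical core of the proof and is where I would invoke the machinery of \cite{Bernardi.etal16arxiv-real} rather than redo it from scratch.
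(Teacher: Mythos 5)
There is a genuine gap, and it sits exactly where you put the weight of the argument: the ``one-step jump property'' at wall crossings. Neither half of it is established. For the half you call easy ($r_b \le r_a+1$), the justification ``crossing out of $\Sigma_{\le r_a,\widehat{X}}$ can always be compensated by adding one element of $\widehat{X}$'' does not work: a point $q$ in the adjacent chamber near a boundary point $p$ of rank $r_a$ satisfies $q=p+(q-p)$ with $q-p$ a small \emph{generic} vector, not an element of $\widehat{X}$, so subadditivity gives no bound on $\xrank{q}$ in terms of $\xrank{p}$. For the delicate half you propose to ``invoke the machinery of \cite{Bernardi.etal16arxiv-real}'' --- but that is the very reference whose theorem is being proved, and it does not establish a wall-crossing jump bound at all; it proves contiguity by a different route. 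So the key lemma of your proposal is neither proved nor correctly attributable, and I am not aware of any result guaranteeing that two adjacent full-dimensional chambers of constant rank differ in rank by one.

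The known proof is much shorter and avoids walls entirely. Suppose $r$ is typical with $r>r_{typ,min}$, and let $U\subseteq\Sigma_{r,\widehat{X}}$ be a nonempty open set. Every $f\in U$ has a decomposition $f=x_1+\cdots+x_r$ with $x_i\in\widehat{X}$; put $g=x_1+\cdots+x_{r-1}$. Then $\xrank{g}\le r-1$ trivially, while $f=g+x_r$ and subadditivity give $\xrank{g}\ge \xrank{f}-1=r-1$, so $\xrank{g}=r-1$ \emph{exactly}. The set of all such $g$ is the image, under the $(r-1)$-fold sum map $\widehat{X}^{\times(r-1)}\to A$, of the (nonempty, open) projection to the first $r-1$ factors of the preimage of $U$ under the $r$-fold sum map. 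Since $r-1\ge r_{typ,min}$ forces $\sigma_{r-1}(\widehat{X})=A$ (a real variety containing a Euclidean ball is everything), the $(r-1)$-fold sum map is dominant onto $A$, and the semialgebraic Sard argument already used in the proof of \cref{cor:param_space} shows the image of that open set has nonempty interior. Hence $r-1$ is typical, and downward induction from any typical rank gives contiguity. Note finally that the paper itself does not prove this statement --- it is quoted from \cite{Bernardi.etal16arxiv-real} --- so the relevant comparison is with that reference, whose argument is the one just sketched rather than a chamber-crossing analysis.
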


Next, the following theorem relates maximal and typical/generic ranks.
\begin{theorem}[ \cite{Blekherman.Teitler14-maximum}]\label{thm:r_gen_r_max}
\begin{itemize}
\item If $\KK = \RR$, then $r_{max} \le 2r_{typ,min}$.
\item If $\KK = \CC$, then $r_{max} \le 2r_{gen}$.
\end{itemize}
\end{theorem}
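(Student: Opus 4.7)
The plan is to prove both statements by the same identity, $\vect{v}=(\vect{v}+t\vect{w})+(-t\vect{w})$, where $\vect{w}$ is a judiciously chosen vector of rank at most $r_{gen}$ (respectively $r_{typ,min}$) and $t\in\KK\setminus\{0\}$ is chosen so that the first summand \emph{also} has rank at most $r_{gen}$ (respectively $r_{typ,min}$). Since $X$-rank is subadditive under sums, the decomposition then yields the desired bound of $2r_{gen}$ or $2r_{typ,min}$. In both cases scale-invariance (\cref{as:scale_invariant}) is what lets me freely replace $\vect{w}$ by $t\vect{w}$ and preserve rank.

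For the complex case, I would first note that $\Sigma_{\le r_{gen},\widehat{X}}$ is constructible: it is the image of the polynomial map $\widehat{X}^{\times r_{gen}}\to\ambspace$, $(\vect{x}_1,\dots,\vect{x}_{r_{gen}})\mapsto \vect{x}_1+\cdots+\vect{x}_{r_{gen}}$, so Chevalley's theorem applies. By \cref{lem:generic_rank} its Zariski closure is $\ambspace$, so it must contain a Zariski-open dense subset $U\subset\ambspace$. Picking $\vect{w}\in U$, the preimage of $U$ under the map $t\mapsto \vect{v}+t\vect{w}$ is Zariski-open in $\CC$, and nonempty because $t=1$ lands us (after a trivial shift) near $\vect{w}\in U$; hence it is cofinite, and in particular contains some $t_0\ne 0$ with $\vect{v}+t_0\vect{w}\in U\subset \Sigma_{\le r_{gen}}$. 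By scale-invariance $-t_0\vect{w}$ also has rank $\le r_{gen}$, and the identity above gives $\xrank{\vect{v}}\le 2r_{gen}$.

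For the real case, by the very definition of ``typical'' the set $\Sigma_{r_{typ,min},\widehat{X}}$ contains a Euclidean ball; let $\vect{w}$ be its centre. The key trick is the rescaling $\vect{v}+t\vect{w}=t(\vect{w}+t^{-1}\vect{v})$: as $|t|\to\infty$, $\vect{w}+t^{-1}\vect{v}$ converges to $\vect{w}$, and so for $|t|$ sufficiently large it lies in the open ball around $\vect{w}$ contained in $\Sigma_{\le r_{typ,min}}$. Applying \cref{as:scale_invariant} once more, both $\vect{v}+t_0\vect{w}=t_0(\vect{w}+t_0^{-1}\vect{v})$ and $-t_0\vect{w}$ have rank at most $r_{typ,min}$, yielding $\xrank{\vect{v}}\le 2r_{typ,min}$.

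The main subtlety is the reconciliation of the two genuinely different notions of ``typical'' (Zariski-dense over $\CC$ vs.\ positive Euclidean measure over $\RR$). Over $\RR$ one cannot a priori intersect the line $\vect{v}+t\vect{w}$ with a Zariski-dense set (real Zariski-dense sets can fail to intersect a given real line); instead, scale-invariance is used in an essential way to turn the \emph{local} Euclidean neighbourhood of $\vect{w}$ into intersections with the ``far'' part of the line, via the $t\leftrightarrow t^{-1}$ rescaling. This asymmetry between the two cases is the only real obstacle; once it is identified, both proofs reduce to the same subadditivity argument.
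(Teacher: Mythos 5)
The paper does not prove \cref{thm:r_gen_r_max} at all --- it is imported from Blekherman--Teitler --- so there is no internal proof to compare against. Your strategy is in fact the standard argument from that reference: write $\vect{v}=(\vect{v}+t\vect{w})+(-t\vect{w})$ and use scale-invariance of the rank to control both summands. Your real case is complete and correct as written.

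The one genuine weak point is the non-emptiness step in the complex case. You assert that the Zariski-open set $T=\{t\in\CC:\vect{v}+t\vect{w}\in U\}$ is nonempty ``because $t=1$ lands us (after a trivial shift) near $\vect{w}\in U$''. This is not an argument: $\vect{v}+\vect{w}$ need not be anywhere near $\vect{w}$, and a Zariski-open subset of $\CC$ can perfectly well be empty, because an affine line can be entirely contained in a proper subvariety even when the direction vector $\vect{w}$ is not --- e.g.\ $\vect{v}+t\vect{w}=(t,1)$ lies in $\{y=1\}\subset\CC^2$ while $\vect{w}=(1,0)$ does not. What actually rescues the step is exactly the rescaling you deploy over $\RR$: a nonempty Zariski-open set is Euclidean-open, so $\vect{w}+t^{-1}\vect{v}\to\vect{w}\in U$ gives $\vect{w}+t^{-1}\vect{v}\in U$ for $|t|$ large, whence $\vect{v}+t\vect{w}=t(\vect{w}+t^{-1}\vect{v})\in\Sigma_{\le r_{gen},\widehat{X}}$ by \cref{as:scale_invariant} (note this only puts $\vect{v}+t\vect{w}$ in the cone over $U$, not in $U$ itself, which is all you need). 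So the complex case should be run verbatim like the real one; the Chevalley detour is also unnecessary, since \cref{lem:generic_rank} already supplies the dense Zariski-open set $U=\ambspace\setminus Z$ directly. With that substitution the proof is correct.
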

Finally, there is a relation between real typical ranks and generic complex ranks.

\begin{theorem}[\cite{Blekherman.Teitler14-maximum}]
Let $\widehat{X}_{\RR} = \widehat{X}$ be a real variety satisfying \crefrange{as:scale_invariant}{as:smooth_real_point}, and $\widehat{X}_{\CC} = \widehat{X}_{\RR} \otimes \CC$ be its complexification.
Then it holds that
\[
r_{typ,min} (\widehat{X}_{\RR}) = r_{gen} (\widehat{X}_{\CC}),
\]
\emph{i.e.} the smallest typical real rank is equal to the complex generic rank.
\end{theorem}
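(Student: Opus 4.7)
The plan is to prove the two inequalities $r_{typ,min}(\widehat{X}_{\RR}) \ge r_{gen}(\widehat{X}_{\CC})$ and $r_{typ,min}(\widehat{X}_{\RR}) \le r_{gen}(\widehat{X}_{\CC})$ separately.

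For the first inequality, I would argue by contradiction. Suppose some real typical rank $r$ satisfies $r < r_{gen}(\widehat{X}_{\CC})$. By \cref{lem:generic_rank}, there exists a proper Zariski-closed subset $Z \subsetneq A_{\CC}$ outside of which every complex vector has rank exactly $r_{gen}(\widehat{X}_{\CC})$, so $\Sigma_{\le r, \widehat{X}_{\CC}} \subseteq Z$. Since every real rank-$r$ decomposition is in particular a complex rank-$r$ decomposition, $\Sigma_{\le r, \widehat{X}_{\RR}} \subseteq Z \cap A_{\RR}$. Because $Z$ is proper, it is cut out by non-zero complex polynomial equations; splitting each such polynomial into its real and imaginary parts yields real polynomials not all identically zero, so $Z \cap A_{\RR}$ is a proper real algebraic subset of $A_{\RR}$ and thus has Lebesgue measure zero. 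This contradicts $\Sigma_{r, \widehat{X}_{\RR}}$ having positive Lebesgue measure.

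For the reverse inequality, I would show that some $r' \le r_{gen}(\widehat{X}_{\CC})$ is real-typical. Fix $r = r_{gen}(\widehat{X}_{\CC})$ and consider the summation map
\[
\phi: \widehat{X}^{\times r} \to A, \qquad (x_1, \ldots, x_r) \mapsto x_1 + \cdots + x_r.
\]
By the definition of the complex generic rank, $\phi_{\CC}$ is dominant; by generic smoothness, its differential has full rank $N = \dim A$ on a non-empty Zariski-open subset $U \subseteq (\widehat{X}_{\CC}^{sm})^{\times r}$. The core step is to exhibit a real $r$-tuple in $U$. By \cref{as:smooth_real_point}, $\widehat{X}_{\RR}$ contains a smooth point $p_0$ of $\widehat{X}_{\CC}$, and near $p_0$ the real implicit function theorem (applied to the Jacobian of the real defining equations) shows that $\widehat{X}_{\RR}$ is a real-analytic submanifold of $A_{\RR}$ of real dimension $d = \dim_{\CC}\widehat{X}_{\CC}$. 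Its Zariski closure in $A_{\CC}$ is a complex subvariety of $\widehat{X}_{\CC}$ of complex dimension at least $d$, hence equal to $\widehat{X}_{\CC}$ by \cref{as:irreducible}; so $\widehat{X}_{\RR}$ is Zariski-dense in $\widehat{X}_{\CC}$, and therefore $\widehat{X}_{\RR}^{\times r}$ is Zariski-dense in the irreducible variety $\widehat{X}_{\CC}^{\times r}$. Consequently $U \cap \widehat{X}_{\RR}^{\times r}$ is non-empty. Picking any $(p_1, \ldots, p_r)$ in this intersection, the Jacobian of $\phi$ at this real point is a real matrix whose complex rank is $N$, so its real rank is also $N$; the submersion theorem then shows that $\phi_{\RR}$ sends an open Euclidean neighborhood of $(p_1, \ldots, p_r)$ onto an open Euclidean neighborhood of $\phi_{\RR}(p_1, \ldots, p_r)$ in $A_{\RR}$. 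Thus $\Sigma_{\le r, \widehat{X}_{\RR}}$ has non-empty Euclidean interior, so at least one $r' \le r$ is typical, giving $r_{typ,min}(\widehat{X}_{\RR}) \le r_{gen}(\widehat{X}_{\CC})$.

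The main obstacle is the middle paragraph of the second direction: ensuring that the complex-dominance of $\phi_{\CC}$ transfers to a real statement requires the Zariski-density of $\widehat{X}_{\RR}$ in $\widehat{X}_{\CC}$, and this is exactly where \cref{as:smooth_real_point} and \cref{as:irreducible} are used together in an essential way. Once this density is in hand, the remainder is routine: a real matrix has the same rank over $\RR$ and over $\CC$, and the submersion theorem converts the rank-$N$ condition into openness of the image.
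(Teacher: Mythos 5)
This theorem is quoted from \cite{Blekherman.Teitler14-maximum}; the paper itself gives no proof, so the only meaningful comparison is with the standard argument in that reference, which your proof reproduces correctly. Both directions are sound: the inequality $r_{typ,min}\ge r_{gen}$ follows because the locus of complex rank $<r_{gen}$ sits inside a proper subvariety whose real trace has Lebesgue measure zero, and the inequality $r_{typ,min}\le r_{gen}$ follows by transferring dominance of the complex addition map to a real submersion at a real $r$-tuple of smooth points, which is exactly where \cref{as:irreducible} and \cref{as:smooth_real_point} are needed (to make $\widehat{X}_{\RR}$ Zariski-dense in $\widehat{X}_{\CC}$).
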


All the varieties that we consider in this paper satisfy \crefrange{as:scale_invariant}{as:smooth_real_point}.

\subsection{Secant varieties and border rank}
The $r$-th secant variety\footnote{Here we again prefer using affine varieties. For projective definitions, we invite the reader to consult \cite{Landsberg12-Tensors}.} is, by definition, the Zariski closure of the elements of rank $\le r$:
\[
{\sigma}_r(\widehat{X}) \eqdef \overline{\Sigma_{\le r,\widehat{X}}} \subseteq \ambspace.
\]
The following properties of ${\sigma}_r(\widehat{X})$ are known,  see for example \cite[Section~5.1]{Landsberg12-Tensors} and \cite[Theorem~4.3]{AngeliniBocciChiantini16RICI} for more details. 
\begin{theorem}\label{lem:general_element_secant}
$\phantom{x}$

\begin{itemize}
\item If $\KK = \CC$, then ${\sigma}_r(\widehat{X})$ is the Euclidean closure of $\Sigma_{\le r,\widehat{\set{X}}}$.
\item If $\KK = \CC$, and $\dim {\sigma}_{r-1}(\widehat{X}) < \dim {\sigma}_{r}(\widehat{X})$, then a general point in $\widehat{\sigma}_r(\widehat{X})$ has rank $r$, \emph{i.e.} there exist a subvariety $Y \subsetneq \widehat{\sigma}_r(\widehat{X})$, such that 
\[
{\sigma}_r(\widehat{X}) \setminus Y  \subset \Sigma_{r,\widehat{\set{X}}}.
\]

\item If $\KK = \RR$, it is not the case: there may exist a nonempty Euclidean open subset of ${\sigma}_r(\widehat{X})$ such that each point in this open subset has $X$-rank strictly larger than $r$.
\end{itemize}
\end{theorem}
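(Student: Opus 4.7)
The three items all hinge on viewing $\Sigma_{\le r, \widehat{X}}$ as the image of the polynomial (hence regular) map $\widehat{X}^{\times r} \to \ambspace$ given by $(\vect{x}_1,\ldots,\vect{x}_r) \mapsto \vect{x}_1 + \cdots + \vect{x}_r$. By Chevalley's theorem this image is a constructible subset of $\ambspace$, whose Zariski closure is ${\sigma}_r(\widehat{X})$ by definition; under \cref{as:irreducible}, $\widehat{X}^{\times r}$ is irreducible and so ${\sigma}_r(\widehat{X})$ inherits irreducibility.

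For the first bullet, I would invoke the classical theorem that a constructible subset of $\CC^{N}$ has identical Zariski and Euclidean closures. Applied to $\Sigma_{\le r, \widehat{X}}$, this immediately gives that ${\sigma}_r(\widehat{X})$ coincides with the Euclidean closure of $\Sigma_{\le r, \widehat{X}}$. For the second bullet, the key input is that a constructible set dense in an irreducible variety $V$ contains a nonempty Zariski-open subset of $V$; applied to $\Sigma_{\le r, \widehat{X}} \subseteq {\sigma}_r(\widehat{X})$, this produces a Zariski-open $U \subseteq {\sigma}_r(\widehat{X})$ with $U \subseteq \Sigma_{\le r, \widehat{X}}$. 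I would then set
\[
Y \eqdef ({\sigma}_r(\widehat{X}) \setminus U) \cup {\sigma}_{r-1}(\widehat{X}),
\]
which is a proper closed subset of ${\sigma}_r(\widehat{X})$: ${\sigma}_r(\widehat{X}) \setminus U$ is closed and proper by irreducibility, while ${\sigma}_{r-1}(\widehat{X})$ is a proper subvariety thanks to the dimension hypothesis. Any $\vect{v} \in {\sigma}_r(\widehat{X}) \setminus Y$ lies in $U \subseteq \Sigma_{\le r, \widehat{X}}$, hence has $X$-rank $\le r$, and lies outside ${\sigma}_{r-1}(\widehat{X}) \supseteq \Sigma_{\le r-1, \widehat{X}}$, hence has $X$-rank $\ge r$; combining, $\vect{v}$ has $X$-rank exactly $r$.

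For the third bullet, a negative statement, I would exhibit a concrete counterexample rather than argue abstractly. A canonical choice is the Segre variety $\widehat{X} = Seg(\RR^{2} \times \RR^{2} \times \RR^{2})$ of real rank-one $2 \times 2 \times 2$ tensors: the complex generic rank is $2$, so ${\sigma}_2(\widehat{X}) = \RR^{2 \times 2 \times 2}$, yet the classical results of Kruskal and ten Berge show that $3$ is also a typical real rank, furnishing a Euclidean-open subset of ${\sigma}_2(\widehat{X})$ on which the $X$-rank strictly exceeds $2$. The main obstacle across the whole plan is in the second bullet, specifically the need to guarantee that $Y$ is a \emph{proper} subvariety of ${\sigma}_r(\widehat{X})$; this is exactly where irreducibility and the dimension hypothesis are put to use, and together they deliver the conclusion without further work.
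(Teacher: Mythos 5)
Your argument is correct. Note, however, that the paper does not prove this statement at all: it is presented as a known result with citations to \cite[Section~5.1]{Landsberg12-Tensors} and \cite[Theorem~4.3]{AngeliniBocciChiantini16RICI}, so there is no in-paper proof to compare against. What you have written is essentially the standard argument from that literature, and it is sound: Chevalley's theorem makes $\Sigma_{\le r,\widehat{X}}$ constructible as the image of the addition map on $\widehat{X}^{\times r}$ (note that $\vect{0}\in\widehat{X}$ by \cref{as:scale_invariant}, so sums of fewer than $r$ terms are indeed covered); the coincidence of Zariski and Euclidean closures for constructible sets over $\CC$ gives the first bullet; and a dense constructible subset of the irreducible ${\sigma}_r(\widehat{X})$ contains a nonempty Zariski-open set, so your $Y = ({\sigma}_r(\widehat{X})\setminus U)\cup{\sigma}_{r-1}(\widehat{X})$ is a proper subvariety (union of two proper closed subsets of an irreducible variety), outside of which the rank is pinned to exactly $r$. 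The $2\times 2\times 2$ real tensor example for the third bullet is the canonical one (typical real ranks $2$ and $3$), and it does satisfy \crefrange{as:scale_invariant}{as:smooth_real_point}, so it is a legitimate witness. The only stylistic remark is that you rely on \cref{as:irreducible} to get irreducibility of ${\sigma}_r(\widehat{X})$, which the theorem statement does not list explicitly but which is a standing assumption in the surrounding discussion, so this is consistent with the paper's intent.
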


Nevertheless, there is a correspondence between real and complex varieties \cite{Qi.etal16SJMAA-Semialgebraic}: Let $\widehat{X}_{\RR} = \widehat{X}$ be a real variety satisfying \crefrange{as:scale_invariant}{as:smooth_real_point}, and $\widehat{X}_{\CC} = \widehat{X}_{\RR} \otimes \CC$.
Then for all $r$ the secant variety ${\sigma}_r(\widehat{X}_{\RR})$ satisfies  \crefrange{as:scale_invariant}{as:smooth_real_point},
and ${\sigma}_r(\widehat{X}_{\CC})$ is a complexification of ${\sigma}_r(\widehat{X}_{\RR})$.

\subsection{Defectivity, expected dimension and generic rank}
In this subsection, we only consider the case $\KK = \CC$, and we assume that $\widehat{X}$ satisfies  \crefrange{as:scale_invariant}{as:irreducible}.

A direct consequence of Theorem~\ref{lem:general_element_secant} is that 
the dimensions of $\sigma_{r} (\widehat{X})$ are increasing until $r = r_{gen}$, \emph{i.e.},
\begin{align*}
\dim \widehat{X} = \dim \sigma_{1} (\widehat{X}) & < \dim \sigma_{2} (\widehat{X}) < \cdots  <  \dim \sigma_{r_{gen}-1} (\widehat{X}) \\
&  < \dim \sigma_{r_{gen}} (\widehat{X}) =  \dim\sigma_{r_{gen}+1} (\widehat{X})  = \cdots = \dim A,
\end{align*} 
and tells us that we are able to find the generic rank by looking at dimensions of $\sigma_{r} (\widehat{X})$. For this, a useful concept, i.e., the expected dimension, is introduced.
\begin{definition}[Expected dimension]\label{eq:expdim}
The expected dimension of $\sigma_{r} (\widehat{X})$ is defined as
\[
\exp\dim \sigma_{r} (\widehat{X}) \eqdef \min\{r\dim \widehat{X},\dim A\}
\]
\end{definition}

The intuition behind \cref{eq:expdim} is that if we add in \eqref{eq:xrank} vectors from the variety of dimension $\dim{\widehat{X}}$, we  obtain an object of dimension $r$ times larger. 
In general, \[
\exp\dim \sigma_{r} (\widehat{X}) \ge \dim \sigma_{r} (\widehat{X}).
\]
If there is a strict inequality, $\sigma_{r} (\widehat{X})$ is called defective. Otherwise $\sigma_{r} (\widehat{X})$ is called non-defective.

\begin{corollary}\label{cor:defectivity_r_gen}
The following bound on $r_{gen}$ can be given:
\begin{equation}\label{eq:rank_generic}
r_{gen} \ge \left\lceil\frac{\dim A}{\dim \widehat{X}}\right\rceil
\end{equation}
In particular, if all $\sigma_r(\widehat{X})$ are non-defective, then
$r_{gen} = \left\lceil\frac{\dim A}{\dim \widehat{X}}\right\rceil$.
\end{corollary}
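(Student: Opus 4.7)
The plan is to read off the bound directly from the chain of strict inequalities for $\dim \sigma_r(\widehat{X})$ that is stated just before the corollary, combined with the dimension inequality $\dim \sigma_r(\widehat{X}) \le \exp\dim \sigma_r(\widehat{X})$ implicit in \cref{eq:expdim}.

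First I would observe that, by \cref{lem:generic_rank} and the strict-increase chain preceding the corollary, $r_{gen}$ is characterized as the smallest $r$ such that $\sigma_r(\widehat{X}) = A$, equivalently the smallest $r$ with $\dim \sigma_r(\widehat{X}) = \dim A$. Using the trivial bound $\dim \sigma_r(\widehat{X}) \le \min\{r \dim \widehat{X}, \dim A\}$, the equality $\dim \sigma_r(\widehat{X}) = \dim A$ forces $r \dim \widehat{X} \ge \dim A$, i.e.\ $r \ge \dim A/\dim \widehat{X}$. Since $r$ is an integer, this yields
\[
r_{gen} \ge \left\lceil \frac{\dim A}{\dim \widehat{X}} \right\rceil,
\]
which is the first claim.

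For the second claim, assume all $\sigma_r(\widehat{X})$ are non-defective, so $\dim \sigma_r(\widehat{X}) = \min\{r\dim \widehat{X}, \dim A\}$ for every $r$. Setting $r^\star = \lceil \dim A/\dim \widehat{X}\rceil$, we get $\dim \sigma_{r^\star}(\widehat{X}) = \dim A$. Since $\widehat{X}$ is irreducible (\cref{as:irreducible}), each $\sigma_r(\widehat{X})$ is irreducible as well (it is the closure of the image of the irreducible variety $\widehat{X}^{\times r}$ under the summation map). An irreducible subvariety of $A$ of dimension $\dim A$ must coincide with $A$, hence $\sigma_{r^\star}(\widehat{X}) = A$. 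Combined with the lower bound, this gives $r_{gen} = r^\star$.

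The only non-routine step is the passage from ``$\dim \sigma_{r^\star}(\widehat{X}) = \dim A$'' to ``$\sigma_{r^\star}(\widehat{X}) = A$'', which genuinely uses irreducibility; everything else is bookkeeping on the expected-dimension inequality and the monotonicity of the secant dimensions recorded immediately before the corollary. I would simply cite \cref{as:irreducible} and the fact that the image/closure of an irreducible variety is irreducible, rather than proving it from scratch.
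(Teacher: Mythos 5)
Your proof is correct and follows exactly the route the paper intends: the paper gives no explicit proof, presenting the corollary as an immediate consequence of the strict-increase chain for $\dim\sigma_r(\widehat X)$ and the inequality $\dim\sigma_r(\widehat X)\le\exp\dim\sigma_r(\widehat X)$, which is precisely the bookkeeping you carry out. Your appeal to irreducibility to pass from full dimension to $\sigma_{r^\star}(\widehat X)=A$ is a valid (and slightly more careful) way to close the argument; one could equivalently just read off from the displayed chain that $r_{gen}$ is the smallest $r$ with $\dim\sigma_r(\widehat X)=\dim A$.
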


The Alexander-Hirschowitz theorem \cite{AlexanderHirschowitz95jag-nondefectivity} states that for $\widehat{X}= \nu_d(\CC^{m})$, all the secant varieties are non-defective except a finite number of exceptions.
Hence, by Corollary~\ref{cor:defectivity_r_gen} and \cref{tab:xrank_examples}, the generic rank $r_{gen}$ is equal to $\lceil r_{1}(m,d) \rceil$, where
\[
r_1(m,d) \eqdef \frac{\binom{m+d-1}{d}}{m},
\]
except $(m,d) \in\{(3,3), (4,3), (4,5), (4,6)\}$, where $r_{gen}$ is increased by $1$.

\section{Uniqueness and identifiability}\label{sec:uniqueness}
\subsection{Uniqueness of a decomposition}
First, we introduce the notion of uniqueness.
\begin{definition}
An $X$-rank decomposition \eqref{eq:xrank} is \emph{unique} if all the other decompositions  of the form  \eqref{eq:xrank} differ only by permutation of the summands in \eqref{eq:xrank}.
\end{definition}

This definition corresponds to the standard definition of uniqueness of tensor decompositions.
For instance, a tensor decomposition
\begin{equation}\label{eq:tens_decomposition_2}
\tens{T} = \vect{a}_1 \otimes \vect{b}_1 \otimes \vect{c}_1 + \vect{a}_2 \otimes \vect{b}_2 \otimes \vect{c}_2
\end{equation}
is unique if it is unique up to permutation of summands and exchange of scaling in the vectors. 
In this paper, we study the notion of generic uniqueness, or uniqueness of ``almost all'' decompositions.
The following  algebraic definition is often adopted in the literature.
\begin{definition}\label{def:rident_xrank}
A variety $\widehat{\set{X}} \subset \ambspace$ is called $r$-identifiable if a general element in $\Sigma_{r,\widehat{\set{X}}}$ has a unique rank-$r$ decomposition, \emph{i.e.} there exists a semialgebraic subset $Z\subsetneq \Sigma_{r,\widehat{\set{X}}}$ of strictly smaller dimension such that any element in $\Sigma_{r,\widehat{\set{X}}} \setminus Z$  has a unique rank-$r$ decomposition.
\end{definition}

First, we remark on the relation between real and complex identifiability.
\begin{lemma}[\cite{Qi.etal16SJMAA-Semialgebraic}]\label{lem:cplxridenimpliesrealriden}
Assume that $\widehat{X}$ satisfies \crefrange{as:scale_invariant}{as:smooth_real_point},  $r < r_{gen}$ and $\widehat{X}_{\CC}$ is $r$-identifiable. Then $\widehat{X}_{\RR}$ is also $r$-identifiable.
\end{lemma}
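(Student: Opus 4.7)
The approach is to transfer uniqueness from the complex variety to the real one by observing that every real rank-$r$ decomposition is automatically a complex rank-$r$ decomposition. So, once we know that a generic real point of $\Sigma_{r,\widehat{X}_{\RR}}$ avoids the complex ``bad locus'' on which complex uniqueness fails, complex uniqueness immediately forces real uniqueness.

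First I would exploit the hypothesis $r < r_{gen}$. By the dimension chain in \cref{cor:defectivity_r_gen} and \cref{lem:general_element_secant} (second bullet), this forces $\dim\sigma_{r-1}(\widehat{X}_{\CC}) < \dim\sigma_r(\widehat{X}_{\CC})$, so a general point of $\sigma_r(\widehat{X}_{\CC})$ has complex rank exactly $r$ and $\Sigma_{r,\widehat{X}_{\CC}}$ differs from $\sigma_r(\widehat{X}_{\CC})$ only by a proper subvariety. Combined with the complex $r$-identifiability of $\widehat{X}_{\CC}$, this yields a proper Zariski-closed subset $W_{\CC} \subsetneq \sigma_r(\widehat{X}_{\CC})$ outside of which every point has complex rank exactly $r$ and a unique complex rank-$r$ decomposition (up to permutation of summands).

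Next I would transfer to the real side using the $\RR/\CC$ secant correspondence stated just before the lemma: $\sigma_r(\widehat{X}_{\RR})$ satisfies \crefrange{as:scale_invariant}{as:smooth_real_point} with complexification $\sigma_r(\widehat{X}_{\CC})$. In particular $\sigma_r(\widehat{X}_{\RR})$ contains a smooth point of $\sigma_r(\widehat{X}_{\CC})$, hence it cannot lie inside the proper subvariety $W_{\CC}$; consequently $W_{\RR} \eqdef W_{\CC} \cap \sigma_r(\widehat{X}_{\RR})$ has strictly smaller real dimension than $\sigma_r(\widehat{X}_{\RR})$. By the same correspondence applied to $\sigma_{r-1}(\widehat{X})$, the points of real rank strictly less than $r$ form a subvariety of $\sigma_r(\widehat{X}_{\RR})$ of strictly smaller dimension. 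Removing both of these small sets from $\Sigma_{r,\widehat{X}_{\RR}}$ leaves a semialgebraically large subset $U_{\RR} \subset \Sigma_{r,\widehat{X}_{\RR}}$.

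Finally, for any $v \in U_{\RR}$ the condition $v \notin W_{\CC}$ gives a unique complex rank-$r$ decomposition up to permutation, while $v \in \Sigma_{r,\widehat{X}_{\RR}}$ supplies at least one real rank-$r$ decomposition; the latter is a complex decomposition, so by complex uniqueness any two real decompositions of $v$ must agree up to permutation, proving real $r$-identifiability. The main obstacle will be the second step: one must check that $U_{\RR}$ really is a ``general'' subset of $\Sigma_{r,\widehat{X}_{\RR}}$ in the semialgebraic sense of \cref{def:rident_xrank}, i.e., that its complement has strictly smaller dimension inside $\Sigma_{r,\widehat{X}_{\RR}}$. This reduces to showing that $\Sigma_{r,\widehat{X}_{\RR}}$ attains full real dimension $\dim_{\CC}\sigma_r(\widehat{X}_{\CC})$ inside $\sigma_r(\widehat{X}_{\RR})$ — which is not automatic when $r < r_{typ,min}$ and $\Sigma_{r,\widehat{X}_{\RR}}$ fails to be Euclidean-open in $A$ — and requires analyzing the summation map $\phi: \widehat{X}^{\times r} \to A$ via the smooth-real-point hypothesis (so that its real Jacobian has the same generic rank as its complex one). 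This is the technical step carried out in~\cite{Qi.etal16SJMAA-Semialgebraic}.
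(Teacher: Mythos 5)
The paper does not prove this lemma: it is imported directly from \cite{Qi.etal16SJMAA-Semialgebraic}, so there is no in-text argument to compare yours against. Taken on its own, your outline is the standard and essentially correct one: restrict to real points of $\Sigma_{r,\widehat{X}_{\RR}}$ lying outside the complex bad locus $W_{\CC}$, observe that any real decomposition into $r$ elements of $\widehat{X}_{\RR}$ is in particular a complex decomposition into $r$ elements of $\widehat{X}_{\CC}$, and invoke complex uniqueness. Your dimension count for $W_{\CC}\cap\RR^N$ is also fine: the real points of a proper complex subvariety of $\sigma_r(\widehat{X}_{\CC})$ have real dimension at most its complex dimension, which is strictly less than $\dim_{\RR}\sigma_r(\widehat{X}_{\RR})$ by the secant correspondence quoted just before the lemma.

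The step you leave open --- that $\Sigma_{r,\widehat{X}_{\RR}}$ is full-dimensional inside $\sigma_r(\widehat{X}_{\RR})$ --- is indeed the crux, but you need not defer it wholesale to \cite{Qi.etal16SJMAA-Semialgebraic}; it follows from facts already quoted in the paper. The set $\Sigma_{\le r,\widehat{X}_{\RR}}$ is semialgebraic with Zariski closure $\sigma_r(\widehat{X}_{\RR})$, and the dimension of a semialgebraic set equals that of its Zariski closure, so $\dim\Sigma_{\le r,\widehat{X}_{\RR}}=\dim\sigma_r(\widehat{X}_{\RR})$. Since $r\le r_{gen}$, the correspondence $\dim_{\RR}\sigma_j(\widehat{X}_{\RR})=\dim_{\CC}\sigma_j(\widehat{X}_{\CC})$ together with the strict dimension chain up to $r_{gen}$ gives $\dim\sigma_{r-1}(\widehat{X}_{\RR})<\dim\sigma_r(\widehat{X}_{\RR})$, so the lower-rank stratum $\Sigma_{\le r-1,\widehat{X}_{\RR}}$ cannot account for the full dimension and $\Sigma_{r,\widehat{X}_{\RR}}$ must. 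With that inserted, your argument closes. One small caution on phrasing: your opening claim that every real rank-$r$ decomposition is automatically a complex rank-$r$ decomposition is only true once the complex rank of the point is known to equal $r$, which you obtain only after excluding $W_{\CC}$; the body of your proof orders this correctly, but the opening sentence overstates it.
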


Next, we give some interpretation to \cref{def:rident_xrank}.
The following  lemma (\cref{lem:equiv_definition}) states that $\widehat{X}$ is $r$-identifiable if for  ``randomly chosen'' $\vect{p}_1, \dots, \vect{p}_r \in \widehat{X}$  their sum has a  unique $X$-rank decomposition. 
The following proposition  (\cref{cor:param_space}) gives an equivalent definition of identifiability in the parameter space. The proof of both results is given in \cref{sec:basic_generic_equivalence}.
\begin{lemma}\label{lem:equiv_definition}
Let $\KK = \CC$,  $\widehat{X}$ satisfy \crefrange{as:scale_invariant}{as:irreducible}. Then $\widehat{X}$ is $r$-identifiable if and only if
\begin{equation}\label{eq:uniqueness_X}
\text{ for }r\text{ general points }\vect{p}_1, \dots, \vect{p}_r \in \widehat{X}, \vect{p}_1 + \cdots + \vect{p}_r\text{ has a unique rank-}r\text{ decomposition}.
 \end{equation}
\end{lemma}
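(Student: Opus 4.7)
The plan is to leverage the summation map
\[
\Phi: \widehat{X}^{\times r} \to \sigma_r(\widehat{X}), \qquad (\vect{p}_1, \ldots, \vect{p}_r) \mapsto \vect{p}_1 + \cdots + \vect{p}_r,
\]
which by \cref{as:irreducible} has irreducible source and, by the definition of the secant variety, is dominant onto the irreducible $\sigma_r(\widehat{X})$. The two notions of ``general'' in the statement live on different spaces (one on $\widehat{X}^{\times r}$, one on $\Sigma_{r,\widehat{X}}$), and passing through $\Phi$ is what translates between them. Throughout I will be in the interesting regime $r \le r_{gen}$, since for $r > r_{gen}$ both sides of the equivalence fail.

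The crucial preliminary step is to invoke \cref{lem:general_element_secant}: over $\CC$, provided the dimension of $\sigma_r(\widehat{X})$ strictly exceeds that of $\sigma_{r-1}(\widehat{X})$, there is a proper subvariety $Y \subsetneq \sigma_r(\widehat{X})$ outside which every element has X-rank exactly $r$. Thus $\Sigma_{r,\widehat{X}}$ and $\sigma_r(\widehat{X})$ agree up to the proper subvariety $Y$, which is what will let me freely identify ``general in $\Sigma_{r,\widehat{X}}$'' with ``general in $\sigma_r(\widehat{X})$''.

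For the forward direction, I would take the exceptional set $Z \subsetneq \Sigma_{r,\widehat{X}}$ supplied by $r$-identifiability, form $Z' = \overline{Z} \cup Y$ (a proper closed subvariety of $\sigma_r(\widehat{X})$), and observe that dominance of $\Phi$ forces $\Phi^{-1}(Z')$ to be a proper closed subvariety of $\widehat{X}^{\times r}$. Any $(\vect{p}_1, \ldots, \vect{p}_r)$ outside $\Phi^{-1}(Z')$ then satisfies $\Phi(\vect{p}_1, \ldots, \vect{p}_r) \in \Sigma_{r,\widehat{X}} \setminus Z$, which by hypothesis admits a unique rank-$r$ decomposition. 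For the converse, I would denote by $V \subset \widehat{X}^{\times r}$ the Zariski-open complement of the exceptional subvariety in the hypothesis. A short argument (if $\overline{\Phi(V)}$ were proper, $\Phi^{-1}(\overline{\Phi(V)})$ would be a proper closed subset containing the dense open $V$) shows $\Phi(V)$ is Zariski dense in $\sigma_r(\widehat{X})$; Chevalley's theorem makes it constructible, hence it contains a Zariski-open dense subset $U$ of $\sigma_r(\widehat{X})$. Then $U \cap (\sigma_r(\widehat{X}) \setminus Y)$ is open dense, contained in $\Sigma_{r,\widehat{X}}$, and consists of uniquely decomposable points (each arises as $\Phi$ of some point of $V$, on which uniqueness is assumed). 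Its complement in $\Sigma_{r,\widehat{X}}$ is a semialgebraic set of strictly smaller dimension, giving $r$-identifiability.

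The main obstacle is the bookkeeping just described: ensuring that the exceptional sets on source and target have strictly smaller dimension, and that the rank-$r$ locus $\Sigma_{r,\widehat{X}}$ is not ``accidentally'' too small a piece of $\sigma_r(\widehat{X})$ for the translation via $\Phi$ to be honest. Both concerns are handled uniformly by \cref{lem:general_element_secant} together with dominance of $\Phi$; no other algebraic input appears to be needed beyond irreducibility.
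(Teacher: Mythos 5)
Your proof is correct and follows essentially the same route as the paper: the paper works with the incidence variety $\operatorname{Sec}_r^{\circ}(\widehat{X})$, on which $\pi_1$ is an isomorphism, so its projection $\pi_2$ is exactly your summation map $\Phi$, and both arguments come down to dominance of this map onto the irreducible $\sigma_r(\widehat{X})$ together with constructibility of the image. You merely spell out the exceptional-set bookkeeping (via \cref{lem:general_element_secant}, identifying general points of $\Sigma_{r,\widehat{X}}$ with general points of $\sigma_r(\widehat{X})$) that the paper's birationality phrasing leaves implicit.
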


\begin{proposition}\label{cor:param_space}
Let $\widehat{X}$ be an algebraic variety  over $\KK$ ($\KK =\RR$ or $\CC$)  satisfying \crefrange{as:scale_invariant}{as:smooth_real_point}.
 Assume  that there exists a polynomial map  $\mathscr{X}:  \KK^{M} \to \ambspace$ such that $\widehat{X} = \mathscr{X}(\KK^{M})$. 
Then $\widehat{X}$ is $r$-identifiable if and only if
for a general point $(\vect{z}_1,\ldots,\vect{z}_r) \in  (\KK^{M})^{\times r}$, the
decomposition
\begin{equation}\label{eq:xrank_param_dec}
v = \xparam(\vect{z}_1) + \cdots + \xparam(\vect{z}_r) 
\end{equation}
is unique, i.e., the semialgebraic set 
\begin{equation}\label{eq:set_param_nonunique}
Y = \{ (\vect{z}_1,\ldots,\vect{z}_r) \in  (\KK^{M})^{\times r}  \,|\, \vect{v} \text{ in } \cref{eq:xrank_param_dec} \mbox{ has nonunique decompositions} \}.
\end{equation}
has Lebesgue measure zero.
\end{proposition}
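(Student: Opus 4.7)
The plan is to transport the nonuniqueness locus from the parameter space $(\KK^M)^{\times r}$ to the product variety $\widehat{X}^{\times r}$ along the product parametrization, and then invoke \Cref{lem:equiv_definition}. Concretely, set $\tilde{\mathscr{X}} \eqdef \mathscr{X}^{\times r}\colon (\KK^M)^{\times r} \to \widehat{X}^{\times r}$, which is surjective by the hypothesis $\widehat{X} = \mathscr{X}(\KK^M)$, and define the intrinsic bad set
\[
\widetilde{Y} \eqdef \{ (\vect{p}_1,\ldots,\vect{p}_r) \in \widehat{X}^{\times r} : \vect{p}_1+\cdots+\vect{p}_r \text{ has nonunique rank-}r\text{ decompositions}\}.
\]
By construction $Y = \tilde{\mathscr{X}}^{-1}(\widetilde{Y})$, so the parameter-space bad set is pulled back from $\widetilde{Y}$.

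By \Cref{lem:equiv_definition} (over $\CC$; the real case is handled analogously using semialgebraic dimensions and \Cref{lem:cplxridenimpliesrealriden} to transfer between $\widehat{X}_\RR$ and its complexification), $\widehat{X}$ is $r$-identifiable if and only if $\widetilde{Y}$ is contained in a proper subvariety of $\widehat{X}^{\times r}$, equivalently has Lebesgue measure zero there. The entire statement thus reduces to the single equivalence
\[
Y \text{ has Lebesgue measure zero in } (\KK^M)^{\times r} \iff \widetilde{Y} \text{ has Lebesgue measure zero in } \widehat{X}^{\times r}.
\]

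The easy direction is: if $\widetilde{Y}\subseteq V \subsetneq \widehat{X}^{\times r}$ for a proper algebraic subvariety $V$, then $Y\subseteq \tilde{\mathscr{X}}^{-1}(V)$ is contained in a proper subvariety of $(\KK^M)^{\times r}$, because surjectivity of $\tilde{\mathscr{X}}$ forbids the preimage of any proper subset to exhaust the domain, and proper subvarieties of an irreducible affine space have strictly smaller dimension. The converse is the principal technical step and the expected obstacle: assuming $\dim Y < rM$, I must deduce $\dim \widetilde{Y} < r\dim\widehat{X}$. Here I would exploit that $Y$ is a \emph{full} $\tilde{\mathscr{X}}$-preimage, so the restricted map $\tilde{\mathscr{X}}|_Y\colon Y \to \widetilde{Y}$ has the same fibers as $\tilde{\mathscr{X}}$; by upper semicontinuity of fiber dimension (using irreducibility of $\widehat{X}^{\times r}$ provided by \Cref{as:irreducible}), every such fiber has dimension at least the generic fiber dimension $rM - r\dim\widehat{X}$, hence
\[
\dim Y \;\ge\; \dim \widetilde{Y} + (rM - r\dim \widehat{X}),
\]
which yields precisely the required bound.

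The delicate aspect is making sure upper semicontinuity is applied correctly over the possibly singular or reducible locus $\widetilde{Y}$, and that the same estimate survives when $\KK = \RR$ (where one must use semialgebraic dimension and the fact that the generic real fiber of $\tilde{\mathscr{X}}$ has the same real dimension as its complex counterpart, thanks to \Cref{as:smooth_real_point}). Since only a uniform lower bound on fiber dimensions is needed, and this is exactly what semicontinuity provides, the argument should go through in both settings.
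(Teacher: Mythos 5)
Your complex case is correct and is essentially the paper's own route: the key inequality $\dim Y \ge \dim \widetilde{Y} + (rM - r\dim\widehat{X})$ is the fiber-dimension theorem for a dominant morphism of irreducible varieties (the paper invokes exactly this via Hartshorne, Exercise II.3.22, in the form ``$\vect{z}_i$ general in $\CC^M$ iff $\xparam(\vect{z}_i)$ general in $\widehat{X}$''), and together with \cref{lem:equiv_definition} it closes both directions over $\CC$.

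The gap is the real case, and it is exactly where you flagged the risk. Lower semicontinuity of fiber dimension (``every fiber is at least as large as the generic fiber'') is a theorem over algebraically closed fields and is simply false over $\RR$: for $(x,y)\mapsto x^2+y^2$ the fiber over $0$ is a single point while the nearby nonempty fibers are circles. Nothing prevents the bad locus $\widetilde{Y}$ from being concentrated over points with abnormally small real fibers of $\tilde{\mathscr{X}}$, in which case $\dim Y < rM$ does not force $\dim\widetilde{Y} < r\dim\widehat{X}$ by your estimate; \cref{as:smooth_real_point} constrains $\widehat{X}$, not the fibers of the parametrization $\xparam$. Your proposed fallback also does not close the loop: \cref{lem:cplxridenimpliesrealriden} is a one-directional implication restricted to $r<r_{gen}$, and \cref{lem:equiv_definition} --- your bridge from $\widetilde{Y}$ to \cref{def:rident_xrank} --- is itself proved only over $\CC$; over $\RR$ it is a second instance of the very same image-versus-preimage dimension transfer, now for the sum map $\widehat{X}^{\times r}\to \Sigma_{r,\widehat{X}}$. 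The paper avoids both difficulties by composing everything into $\mathscr{X}_r\colon(\RR^M)^{\times r}\to \ambspace$ and comparing $Y$ directly with $P=\mathscr{X}_r(Y)\subseteq\Sigma_{r,\widehat{X}}$: in one direction a full-dimensional semialgebraic subset of $\Sigma_{r,\widehat{X}}$ contains a relatively open piece, whose preimage under the continuous map $\mathscr{X}_r$ is open and nonempty; in the other, the semialgebraic Sard theorem produces an open subset of $Y$ on which the Jacobian of $\mathscr{X}_r$ has maximal rank, so the image is full-dimensional. Your argument can be repaired by substituting these two topological steps for semicontinuity, but as written the real half does not go through.
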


Consider the case of Equation~\cref{eq:tens_decomposition_2}.
The Segre variety $Seg(\KK^{I_1}\times \KK^{I_2} \times \KK^{I_3})$ is $2$-identifiable if and only if the decomposition \cref{eq:tens_decomposition_2} is unique for general $\vect{a}_1,\vect{b}_1,\vect{c}_1,\vect{a}_2,\vect{b}_2,\vect{c}_2 $ (\emph{i.e.} drawn randomly with respect to an absolutely continuous probability distribution). Note the decomposition \cref{eq:tens_decomposition_2} is unique does not mean $\vect{a}_1, \dots, \vect{c}_2$ are unique, in fact they are unique up to scaling. Definition in the parameter space is more common in linear algebra and engineering literature. Hence \cref{cor:param_space} establishes correspondence between these two definitions. 

Finally, there is an important corollary of  \cref{def:rident_xrank} (in the case $\KK = \CC$) and \cref{cor:param_space} (in the case $\KK = \RR$).
\begin{corollary}
Let $\KK =\RR$ or $\CC$, $\widehat{X}$ satisfy assumptions of \cref{cor:param_space}.
If $\widehat{X}$ is $r$-identifiable, then
any vector $\vect{v} \in \Sigma_{r,\widehat{\set{X}}}$ is a limit of a sequence of vectors $\vect{v}_k \in \Sigma_{r,\widehat{\set{X}}}$ with a unique decomposition.
\end{corollary}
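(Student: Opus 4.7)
The plan is to lift the problem to the parameter space $(\KK^M)^{\times r}$ and invoke \cref{cor:param_space}. Since $\vect{v} \in \Sigma_{r,\widehat{X}}$, I write $\vect{v} = \xparam(\vect{z}_1) + \cdots + \xparam(\vect{z}_r)$ for some $\vect{z} = (\vect{z}_1,\ldots,\vect{z}_r) \in (\KK^M)^{\times r}$. By \cref{cor:param_space}, $r$-identifiability of $\widehat{X}$ is equivalent to the set $Y$ in \eqref{eq:set_param_nonunique} having Lebesgue measure zero. Any set of Lebesgue measure zero has empty interior, so its complement is Euclidean-dense; thus I can pick a sequence $\vect{z}^{(k)} \to \vect{z}$ with $\vect{z}^{(k)} \notin Y$. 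Setting $\vect{v}_k := \xparam(\vect{z}_1^{(k)}) + \cdots + \xparam(\vect{z}_r^{(k)})$, continuity of the polynomial map $\xparam$ yields $\vect{v}_k \to \vect{v}$, and by construction each $\vect{v}_k$ has a unique $r$-term decomposition.

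The remaining step, which I expect to be the main conceptual obstacle, is verifying $\vect{v}_k \in \Sigma_{r,\widehat{X}}$, i.e.\ $\xrank(\vect{v}_k) = r$ rather than strictly less --- a priori, uniqueness of an $r$-term decomposition does not preclude the existence of shorter decompositions. The argument relies on scale invariance (\cref{as:scale_invariant}): suppose for contradiction that $\xrank(\vect{v}_k) = r' < r$, so $\vect{v}_k = \vect{y}_1 + \cdots + \vect{y}_{r'}$ with all $\vect{y}_j \neq 0$ by minimality. Since $0 \in \widehat{X}$ and $\tfrac{1}{2}\vect{y}_1 \in \widehat{X}$ by \cref{as:scale_invariant}, one produces two $r$-term decompositions of $\vect{v}_k$: the padded decomposition $\{\vect{y}_1,\ldots,\vect{y}_{r'},0,\ldots,0\}$ containing $r-r'$ zeros, and the split decomposition $\{\tfrac{1}{2}\vect{y}_1,\tfrac{1}{2}\vect{y}_1,\vect{y}_2,\ldots,\vect{y}_{r'},0,\ldots,0\}$ containing $r-r'-1$ zeros. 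These multisets have different multiplicities of $0$ (since the $\vect{y}_j$ are all nonzero) and hence are distinct, contradicting the uniqueness of the $r$-term decomposition of $\vect{v}_k$.

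The only degenerate situation is $\vect{v}=0$, which forces $r=0$ and makes the statement trivial; whenever $\vect{v} \neq 0$, continuity forces $\vect{v}_k \neq 0$ for large $k$, so the dichotomy above applies and gives $\vect{v}_k \in \Sigma_{r,\widehat{X}}$, completing the proof.
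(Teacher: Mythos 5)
Your proof is correct and follows exactly the route the paper indicates (the corollary is stated without proof, as an immediate consequence of \cref{cor:param_space}): lift to the parameter space, use that the complement of the Lebesgue-null set $Y$ from \eqref{eq:set_param_nonunique} is Euclidean-dense, and push forward along the continuous map $\mathscr{X}_r$. The one point the paper leaves implicit --- that the approximants $\vect{v}_k$ actually lie in $\Sigma_{r,\widehat{X}}$ rather than dropping rank --- is settled by your zero-padding/splitting argument via \cref{as:scale_invariant}, a genuine and necessary addition (resting only on the natural reading of \eqref{eq:set_param_nonunique} as uniqueness of the $r$-term decomposition).
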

Thus, any rank-$r$ vector $\vect{v}$ can be approximated by rank-$r$  uniquely decomposable vectors to arbitrary precision. 
To our knowledge, in the case $\KK = \RR$, this fact is not explicitly mentioned in the literature.

\subsection{Necessary and sufficient conditions for generic uniqueness}
Here, in what follows, we consider only the case $\KK = \CC$. 
First, by \cite{Strassen83LAaiA-Rank}, 
if $\sigma_r(\widehat{X})$ is defective, then $\widehat{X}$ is not $r$-identifiable. If $\sigma_r(\widehat{X})$ is non-defective, then a general point in $\sigma_{r}(\widehat{X})$ has a finite number of decompositions. 
Thus, already looking at the dimension of $\sigma_{r}(\widehat{X})$ we can already conclude that $\widehat{X}$ is $r$-identifiable.
This can be done numerically using the Terracini's lemma.
\begin{lemma}[Terracini]\label{lem:terracini}
Assume that $\widehat{X}$ satisfies \crefrange{as:scale_invariant}{as:irreducible}.
Then for a general point $\vect{v} = \vect{p}_1, \dots, \vect{p}_r \in \sigma_r(\widehat{X})$, the tangent space is
\[
T_{\vect{v}} \sigma_r(\widehat{X})= \Span{T_{\vect{p}_1}\widehat{X}, \dots, T_{\vect{p}_r}\widehat{X}}.
\]
\end{lemma}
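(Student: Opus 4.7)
My plan is to realize $\sigma_r(\widehat{X})$ as the Zariski closure of the image of the addition map
\[
\phi:\widehat{X}^{\times r}\to \ambspace, \qquad \phi(\vect{p}_1,\ldots,\vect{p}_r)=\vect{p}_1+\cdots+\vect{p}_r,
\]
and then identify the tangent space at a general point as the image of the differential $d\phi$. By construction $\phi(\widehat{X}^{\times r})=\Sigma_{\le r,\widehat{X}}$, so $\overline{\phi(\widehat{X}^{\times r})}=\sigma_r(\widehat{X})$; \cref{as:irreducible} guarantees that $\widehat{X}^{\times r}$ (and hence $\sigma_r(\widehat{X})$) is irreducible, which is what lets me apply generic smoothness below.

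Next, I would compute $d\phi$ explicitly. Since $\phi$ is the restriction to $\widehat{X}^{\times r}$ of the linear summation map on $\ambspace^{\times r}$, and the tangent space of a product equals the product of tangent spaces, $T_{(\vect{p}_1,\ldots,\vect{p}_r)}(\widehat{X}^{\times r}) = T_{\vect{p}_1}\widehat{X}\times\cdots\times T_{\vect{p}_r}\widehat{X}$, the chain rule immediately yields
\[
d\phi_{(\vect{p}_1,\ldots,\vect{p}_r)}(\vect{t}_1,\ldots,\vect{t}_r)=\vect{t}_1+\cdots+\vect{t}_r,
\]
so the image of $d\phi$ is exactly $T_{\vect{p}_1}\widehat{X}+\cdots+T_{\vect{p}_r}\widehat{X}=\Span{T_{\vect{p}_1}\widehat{X},\ldots,T_{\vect{p}_r}\widehat{X}}$, since all the tangent spaces sit inside $\ambspace$ as linear subspaces and their sum coincides with their linear span.

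To finish, I would invoke generic smoothness of dominant morphisms in characteristic zero: the dominant morphism $\phi:\widehat{X}^{\times r}\to \sigma_r(\widehat{X})$ admits a Zariski-open dense subset $U\subset \widehat{X}^{\times r}$ on which each $\widehat{X}$-factor is smooth, $d\phi$ has constant (generic) rank, and $\mathrm{Im}\,d\phi_{\vect{z}}=T_{\phi(\vect{z})}\sigma_r(\widehat{X})$. Since $\phi(U)$ contains a Zariski-open dense subset of $\sigma_r(\widehat{X})$, combining this identification with the explicit formula above gives the claim for a general $\vect{v}=\vect{p}_1+\cdots+\vect{p}_r\in\sigma_r(\widehat{X})$.

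The main delicate point is this last equality: \emph{a priori} $\mathrm{Im}\,d\phi_{\vect{z}}$ could be strictly contained in $T_{\phi(\vect{z})}\sigma_r(\widehat{X})$, and only generic smoothness (which needs irreducibility of source and target plus characteristic zero) rules this out. This forces removal of both the singular locus of $\widehat{X}^{\times r}$ and the locus where $d\phi$ drops rank, which is why the statement is phrased ``for a general point'' rather than pointwise — exactly the form asserted by the lemma.
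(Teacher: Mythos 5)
The paper states this lemma without proof, citing it as the classical Terracini lemma, so there is no internal argument to compare against; your write-up is the standard textbook proof and it is correct. You correctly identify the two ingredients: the elementary computation that the differential of the addition map $\phi:\widehat{X}^{\times r}\to\ambspace$ at a smooth point has image $T_{\vect{p}_1}\widehat{X}+\cdots+T_{\vect{p}_r}\widehat{X}$ (immediate because $\phi$ is the restriction of a linear map and the tangent space of a product is the product of tangent spaces), and the genuinely nontrivial step, namely that $\mathrm{Im}\,d\phi$ equals, rather than merely sits inside, $T_{\vect{v}}\sigma_r(\widehat{X})$ at a general point, which is exactly generic smoothness in characteristic zero applied to the dominant map from the smooth locus of $\widehat{X}^{\times r}$ onto the irreducible variety $\sigma_r(\widehat{X})$ (here \cref{as:irreducible} is used, as you note). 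The only caveat worth flagging is that generic smoothness in the form you invoke is a statement over an algebraically closed field, so your argument as written establishes the lemma over $\CC$; for $\KK=\RR$ one needs the semialgebraic analogue (via the semialgebraic Sard theorem, as in the reference the paper uses elsewhere), though this is consistent with how the paper actually applies the lemma.
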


Hence, the non-defectivity can be checked numerically, by picking $r$ ``random'' points and comparing $\dim T_{\vect{v}} \sigma_r(\widehat{X})$ with $\exp\dim\sigma_r(\widehat{X})$.
A variety $\widehat{X}$ is called $r$-weakly defective if for $r$ general points in $\widehat{X}$ a general hyperplane tangent to them is tangent to $\var{X}$ elsewhere \cite{Chiantini.Ciliberto02-weaklydefective}.
If $\var{X}$ is not $r$-weakly defective, then $\var{X}$ is $r$-identifiable (the converse is not true).

\subsection{Examples: Veronese and Segre-Veronese varieties}
We review here some results on identifiability of varieties from \cref{tab:xrank_examples}, that will be needed.
First, recall a recent result that for all subgeneric ranks, the Veronese variety is $r$-identifiable.
\begin{theorem}[{\cite[Theorem 1.1]{Chiantini.etal16arxiv-generic}}]\label{thm:veronese_identifiability}
Let $d\ge 3$ and $m\ge 2$. Then $\nu_d(\CC^{m})$ is $r$-identifiable for all $r < r_2(m,d)$, where 
\begin{equation}\label{eq:veronese_identifiability}
r_2(m,d) = 
\begin{cases}
r_1(m,d)-1, & \text{if~} (m,d) \in \{(4,4), (3,6), (6,3) \}, \\
r_1(m,d), &  \mbox{otherwise}. \\
\end{cases}
\end{equation}
\end{theorem}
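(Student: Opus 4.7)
The plan is to follow the now-standard Chiantini--Ciliberto strategy, which reduces identifiability to a weak-defectivity statement that can be attacked by Terracini's lemma and apolarity.

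First I would reduce to a non-weak-defectivity question. By the criterion recalled above, it suffices to show that $\nu_d(\CC^m)$ is not $r$-weakly defective for every $r<r_2(m,d)$: then for $r$ general points $\vect{p}_1,\ldots,\vect{p}_r$ on the Veronese, a general hyperplane tangent at all $\vect{p}_k$ is tangent nowhere else, which by the Chiantini--Ciliberto theorem implies $r$-identifiability. A preliminary observation is that one may assume $r$ is as large as possible in the allowed range, because $r$-identifiability propagates downward from $r$ to any smaller value once the corresponding secant variety is non-defective and identifiable (the ``drop-down'' trick via projection onto a general hyperplane).

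Second, I would invoke the Alexander--Hirschowitz theorem, stated in the excerpt: for $d\ge 3$ and $m\ge 2$, and for all $r\le r_1(m,d)$, the secant variety $\sigma_r(\nu_d(\CC^m))$ has the expected dimension, outside the four listed exceptional triples. Terracini's lemma then converts the weak-defectivity condition into an apolarity/Hilbert-function statement: the hyperplanes through $\bigcup_k T_{\vect{p}_k}\nu_d(\CC^m)$ correspond, via the apolarity action, to the degree-$d$ piece of the ideal of the first infinitesimal neighbourhood $2Z$ of the scheme $Z=\{\vect{p}_1,\ldots,\vect{p}_r\}$, and non-weak-defectivity translates into the requirement that this linear system of degree-$d$ hypersurfaces singular at $Z$ cut out only the points of $Z$ set-theoretically.

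Third, with this reformulation in hand, I would combine two ingredients: (i) a degeneration/specialization argument placing $Z$ on a low-degree rational normal curve or on a union of general linear spaces, for which the ideal of $2Z$ in degree $d$ can be computed exactly (this is the Hilbert-function method of Ballico--Chiantini and of Chiantini--Ottaviani--Vannieuwenhoven); (ii) semicontinuity, to transfer the vanishing of the base locus from the special configuration to the generic one. Most cases follow from a uniform inductive argument on $m$ and $d$ starting from small base cases; the borderline triples $(4,4),(3,6),(6,3)$ in the definition of $r_2(m,d)$ arise precisely because the inductive Hilbert-function bound fails by one. These have to be excised by hand, which is why the statement subtracts $1$ from $r_1(m,d)$ there.

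The main obstacle, and the place where the cited paper \cite{Chiantini.etal16arxiv-generic} does real work, is the transition between the numerical Terracini count (which only tells us $\sigma_r(\nu_d(\CC^m))$ has expected dimension) and the stronger geometric non-weak-defectivity statement (which controls the \emph{locus} of tangency, not just its dimension). This is where one cannot avoid either a delicate degeneration argument, a Hilbert-function computation on a carefully chosen subscheme, or in the remaining low-dimensional cases a direct computer-algebra verification, before finally packaging everything via the Chiantini--Ciliberto weak-defectivity criterion to conclude $r$-identifiability for all $r<r_2(m,d)$.
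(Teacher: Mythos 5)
This statement is not proved in the paper at all: it is imported verbatim, with citation, from Chiantini--Ottaviani--Vannieuwenhoven, so there is no internal argument to compare yours against. Judged on its own terms, your sketch correctly assembles the classical ingredients (the Chiantini--Ciliberto weak-defectivity criterion, Terracini, Alexander--Hirschowitz, apolarity and Hilbert functions of double points), but its very first reduction creates a genuine gap.

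You propose to prove the theorem by showing that $\nu_d(\CC^m)$ is not $r$-weakly defective for every $r<r_2(m,d)$. That reduction cannot deliver the full range. Non-weak-defectivity is strictly stronger than identifiability, and the strongest available results of this kind (Ballico for $d\ge 4$, Mella for $d=3$, Chiantini--Ciliberto for small $m$ --- exactly what is recorded as \cref{thm:veronese_wd} in this paper) only reach $r<r_3(m,d)$, which for $d=3$ falls short of $r_1(m,3)$ by roughly $(m-2)/3$. In that remaining window of subgeneric ranks the Veronese can be weakly defective yet still identifiable, so no Hilbert-function computation on the degree-$d$ part of $I(2Z)$ of the type you describe can close the gap: the linear system of hypersurfaces singular at $Z$ genuinely has positive-dimensional singular locus there. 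The real work in the cited paper is precisely this window; it is handled by passing to the strictly weaker notion of \emph{tangential} weak defectivity (controlling the contact locus of the span of the tangent spaces at the $r$ points, rather than of a general tangent hyperplane), running a separate inductive classification for it, and finishing a finite residual list by a computer-assisted check at an explicit point combined with semicontinuity. Two smaller inaccuracies: the ``drop-down'' of $r$-identifiability to smaller $r$ is not an automatic trick (it is weak defectivity, not identifiability, that is trivially monotone), and the three exceptions $(m,d)\in\{(4,4),(3,6),(6,3)\}$ are not artifacts of an inductive bound failing by one --- at the excluded rank the generic form provably has exactly two distinct decompositions, and establishing that these are the \emph{only} exceptions is part of the content of the theorem.
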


Next, we recall stronger results on $r$-weak defectivity of the Veronese varieties.
\begin{theorem}[{\cite{Ballico05CEJM-weak,Mella06-waring,Chiantini.Ciliberto06JLMS-Concept}}]\label{thm:veronese_wd}
Let $d\ge 3$ and $m \ge 2$. Then the Veronese variety $\nu_d (\CC^{m})$ is not $r$-weakly defective\footnote{The case $2\le m \le 3$ was proved in the proof of \cite[Thm 5.1]{Chiantini.Ciliberto06JLMS-Concept}, $d=3$ was proved in \cite[Thm. 4.1]{Mella06-waring}, the case $d\ge4$ is proved in \cite[Thm. 1.1.]{Ballico05CEJM-weak} (see also \cite[Corollary~4.5]{Mella06-waring}).} for $r < r_3(m,d)$, where 
\[
r_3(m,d) =
\begin{cases}
r_1(m,d)-\frac{m-2}{3}, &   d = 3, \\
r_1(m,d), &  \mbox{otherwise}. \\
\end{cases}
\]
\end{theorem}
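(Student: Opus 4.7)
The plan is to treat the statement as a case-by-case compilation, since the bound $r_3(m,d)$ already splits according to whether $d = 3$ or $d \ge 4$, and within the cubic case the small-$m$ regime behaves differently from the general one. I would therefore partition the pairs $(m,d)$ into three disjoint ranges, each of which is covered by one result in the existing literature, and then verify that the numerical thresholds in each cited theorem dominate our $r_3(m,d)$.

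First, I would handle the range $d \ge 4$, $m \ge 2$. Here $r_3(m,d) = r_1(m,d)$, and the non-weak-defectivity of $\nu_d(\CC^m)$ for $r < r_1(m,d)$ is exactly the content of Ballico's Theorem~1.1, so this range requires no more than invoking that result. I would also note the alternate reference via Mella's Corollary~4.5 to handle any borderline $(m,4)$ case that might need it.

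Second, I would take up the cubic case $d = 3$, splitting further into $m \in \{2,3\}$ and $m \ge 4$. For $m \in \{2,3\}$ the threshold $r_3(m,3) = r_1(m,3) - (m-2)/3$ equals $r_1(m,3)$ when $m=2$ and $r_1(3,3) - 1/3$ when $m=3$, so in both instances $r_3(m,3) \le r_1(m,3)$. Here I would quote the argument inside the proof of Theorem~5.1 of Chiantini--Ciliberto, which yields non-weak defectivity for $r$ strictly below $r_1(m,3)$ in these low-dimensional cases. For $m \ge 4$, where the correction $-(m-2)/3$ becomes genuinely restrictive, I would apply Mella's Theorem~4.1 on cubic Veroneses, which gives non-weak defectivity exactly up to the stated $r_3(m,3)$.

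The one genuine step of work, and the main obstacle, is the bookkeeping: checking that the numerical bound each cited paper proves really does match (or exceed) our $r_3(m,d)$. The cited authors each phrase their thresholds differently, typically in terms of binomial coefficients or in terms of the expected dimension defect rather than as $r_1(m,d) - (m-2)/3$, so a careful algebraic reconciliation is needed, especially in the cubic case where the correction term $(m-2)/3$ has to be extracted from Mella's formulation. Had one wished to give a self-contained argument, the natural path would be the Terracini/apolarity route: translate ``$r$-weak defectivity'' into the statement that a general degree-$d$ form $F$ whose partial derivatives vanish at $r$ generic points $v_1,\dots,v_r$ also has an additional critical point; one then uses Hilbert function estimates for the ideal of the fat points $\{2v_i\}$ to rule this out for $r < r_3(m,d)$. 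This is exactly the strategy of the three cited papers, and replicating it would duplicate their work.
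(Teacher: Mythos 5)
The paper gives no proof of this theorem beyond the attributions in its footnote (Chiantini--Ciliberto for $2\le m\le 3$, Mella for $d=3$, Ballico for $d\ge 4$), and your proposal reproduces exactly that case split and those citations, with the numerical reconciliation of thresholds as the only remaining bookkeeping. This is the same approach as the paper, and it is correct.
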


For Segre-Veronese varieties, we are not aware of explicitly available results on identifiability.
However, the identifiability of such varieties can be easily deduced from \cref{thm:veronese_wd} and the results of \cite{Bocci.Chiantini.Ottaviani14-identifiability} on identifiability of Segre products of varieties.  Let 
\begin{equation}\label{eq:gen_rank_segre_veronese}
r_4(m,n,d)= \frac{\binom{m+d-1}{d}}{m+n-1}.
\end{equation}
\begin{corollary}\label{lem:segre_veronese_ident}
Let $m = \dim V \ge 2$, $d \ge 3$, $n = \dim W\ge 1$, and $k  < r_4(m,n,d)$,
where 
\begin{equation}\label{eq:X_identifiability_bound}
r_5(m,n,d) = 
\begin{cases}
r_2(m,d), & \text{if~} n=1, \\
\min\left(r_4(m,n,d), r_3(m,d)\right), & \text{if~} n > 1. \\
\end{cases}
\end{equation}
Then the variety $Seg(\nu_d (V) \times W)$  is $kn$-identifiable.
\end{corollary}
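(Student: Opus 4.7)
The plan is to split into the cases $n=1$ and $n>1$ and in each case reduce to a result already recalled in this section.

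For $n=1$, the factor $W$ is one-dimensional, so
\[
\{\vect{w}\otimes\vect{a}^d : \vect{w}\in W,\ \vect{a}\in V\} \;=\; \{c\,\vect{a}^d : c\in\CC,\ \vect{a}\in V\} \;=\; \nu_d(V)
\]
as affine cones in $S^dV$. Hence $kn$-identifiability of $Seg(\nu_d(V)\times W)$ coincides with $k$-identifiability of $\nu_d(V)$; since the hypothesis reads $k<r_5(m,1,d)=r_2(m,d)$, the claim is immediate from \cref{thm:veronese_identifiability}.

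For $n>1$, I would invoke the identifiability criterion for Segre products proved by Bocci, Chiantini and Ottaviani in \cite{Bocci.Chiantini.Ottaviani14-identifiability}: if $Y$ is not $k$-weakly defective and $kn$ is subgeneric for $Seg(Y\times W)$ (meaning $kn<r_{gen}(Seg(Y\times W))$), then $Seg(Y\times W)$ is $kn$-identifiable. Apply this with $Y=\nu_d(V)$. The bound $k<r_3(m,d)$ together with \cref{thm:veronese_wd} gives non-$k$-weak-defectivity of $\nu_d(V)$, while $k<r_4(m,n,d)$ yields subgenericity: by \cref{cor:defectivity_r_gen} applied to $Seg(\nu_d(V)\times W)$, whose dimension is $m+n-1$ inside an ambient of dimension $n\binom{m+d-1}{d}$ (cf.\ \cref{tab:xrank_examples}),
\[
r_{gen}(Seg(\nu_d(V)\times W))\ \ge\ \left\lceil\frac{n\binom{m+d-1}{d}}{m+n-1}\right\rceil\ \ge\ n\,r_4(m,n,d)\ >\ kn.
\]
Both hypotheses of the BCO criterion are therefore met, so $Seg(\nu_d(V)\times W)$ is $kn$-identifiable.

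The main obstacle is extracting the precise form of the Bocci--Chiantini--Ottaviani criterion and verifying it applies in the generality needed here: one must ensure that non-$k$-weak-defectivity of the Veronese factor, combined with subgenericity of $kn$ for the Segre product, really does force $kn$-identifiability of $Seg(\nu_d(V)\times W)$, rather than only a weaker, factor-wise form of uniqueness. Once that statement is pinned down, the remaining steps -- the dimension count verifying subgenericity via \cref{cor:defectivity_r_gen}, and the two direct applications of \cref{thm:veronese_identifiability,thm:veronese_wd} -- are essentially bookkeeping.
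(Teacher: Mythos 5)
Your proof is correct and follows essentially the same route as the paper: the $n=1$ case is exactly \cref{thm:veronese_identifiability}, and the $n>1$ case combines non-$k$-weak-defectivity of $\nu_d(V)$ (from \cref{thm:veronese_wd}, using $k<r_3$) with the Bocci--Chiantini--Ottaviani criterion for Segre products, which the paper packages as \cref{lem:segre_identifiability}. The only cosmetic difference is that the paper's version of that criterion states the numerical hypothesis as $r(\dim(Y)+1)<\dim(A)$ with $A=S^dV$, which is literally $k<r_4(m,n,d)$, so your detour through $r_{gen}(Seg(\nu_d(V)\times W))$ and \cref{cor:defectivity_r_gen} is not needed (and the "subgeneric" reformulation of the BCO hypothesis you were worried about never has to be justified).
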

\begin{proof}
The proof is given in \cref{sec:basic_generic_equivalence}.
\end{proof}

Although the expression in \eqref{eq:X_identifiability_bound} looks complicated, in fact,
\[
r_5(m,n,d) = r_4(m,n,d)
\]
if $n > 1, d \ge 3$ or if $n=1, (m,d) \not\in \{(4,4), (3,6), (6,3) \}$.

\section{Veronese scrolls}\label{sec:algebraic}
In this section, we recall a variety that is a generalization of  the well-known rational normal scroll \cite{catalano1996-possible}.

\subsection{Simultaneous Waring decompositions}\label{sec:sim_waring}
Let $0\le a_1 \le \cdots \le a_d$ be a sequence of natural numbers\footnote{By convention, $\NN$ is the set of nonnegative integers and includes $0$.} put in one vector $\vect{a} = (a_1, \ldots,a_d) \in \NN^d$ 
and define a shorthand notation
\[
S^{\vect{a}} V \eqdef  S^{a_1} \set{V} \oplus S^{a_2} \set{V} \oplus \cdots \oplus S^{a_d}\set{V},
\]
which is a vector space of dimension
\[
\dim (S^{\vect{a}} V ) = \sum_{k=1}^{d} \binom{m+a_k-1}{a_k}.
\]
We say that $f = (f^{(1)},\ldots,f^{(d)}) \in S^{\vect{a}} V$ has a Waring-like decomposition of rank $r$ if there exist $\vect{v}_1, \ldots,\vect{v}_r$ and $c_{k,l}\in \KK$ such that
\begin{equation}\label{eq:sim_waring}
\begin{array}{rcl}
 f^{(1)} &=& c_{1,1} \vect{v}^{a_1}_1 + \cdots + c_{1,r} \vect{v}^{a_1}_r, \\
 &\vdots& \\
 f^{(d)} &=& c_{d,1} \vect{v}^{a_d}_1 + \cdots + c_{d,r} \vect{v}^{a_d}_r, 
\end{array}
\end{equation}
In other words, decomposition \eqref{eq:sim_waring} is equivalent to simultaneous Waring decompositions with the same vectors but different coefficients.

\begin{example}
Let us show that \cref{ex:n_1} is a special case of the Waring-like decomposition \eqref{eq:sim_waring}.
Since $f(\vect{0}) = \vect{0}$ in \eqref{eq:decoupling_one_output}, we have that 
\[
f(\vect{u}) = f^{(1)}(\vars) + \cdots + f^{(d)}(\vars),
\]
where $f^{(d)}(\vars)$ is the $d$-th degree homogeneous part of $f(\vars)$.
Hence, if the polynomial $f$  admits a decomposition  \eqref{eq:decoupling_one_output}, then all the homogeneous parts $f^{(d)}$ can be decomposed as
\[
\begin{array}{rcl}
f^{(1)} &=& c_{1,1} (\vect{v}_{1}^{\top} \vars) + \cdots + c_{1,r} (\vect{v}_{r}^{\top} \vars),\\ 
f^{(2)} &=& c_{2,1} (\vect{v}_{1}^{\top} \vars)^2 + \cdots + c_{2,r} (\vect{v}^\top_{r} \vars)^2,\\ 
&\vdots&\\
f^{(d)} &=& c_{d,1} (\vect{v}^{\top}_{1} \vars)^d + \cdots + c_{d,r} (\vect{v}^{\top}_{r}\vars)^d.\\ 
\end{array}
\]
which is a special case of \cref{eq:sim_waring} for the vector of integers $\vect{a} = (1,\ldots,d)$.
\end{example}

\subsection{Veronese scrolls: a parametric definition}
The decomposition \cref{eq:sim_waring} can be put in the framework of $X$-rank as follows.
Define the following map:
\begin{equation}\label{eq:polynomial_defining_map}
\begin{array}{rcl}
\psi:  \set{V} \times \KK^{d} & \to & S^{\vect{a}} V \\
(\vect{v},(c_1,\cdots,c_d)) & \mapsto & (c_1 \vect{v}^{a_1}, c_2 \vect{v}^{a_2},\ldots, c_d \vect{v}^{a_d}), 
\end{array}
\end{equation}
and define the image of this map as
\begin{equation}\label{eq:def_xhat}
\widehat{\set{X}}_{\vect{a}} = \widehat{\set{X}}_{\vect{a},V} \eqdef \psi(\set{V} \times \KK^{{d}}),
\end{equation}
and $\var{X}_{\vect{a}} = \var{X}_{\vect{a},\set{V}} \subset \PP S^{\vect{a}} V$ the corresponding subset in the projective space.

It is easy to see that $f=(f^{(1)},\ldots,f^{(d)}) \in S^{\vect{a}} V$ has a Waring-like decomposition if and only if it has  an  $X$-rank decomposition with $\widehat{X} = \widehat{X}_{\vect{a},V}$. It can be shown that $\widehat{X}_{\vect{a},V}$ satisfies \crefrange{as:scale_invariant}{as:irreducible} (affine cone of a  projective  variety $\widehat{X}_{\vect{a},V}$). In particular, when $\dim(V) = m = 2$, $\var{X}_{\vect{a},V}$ is the rational normal ($d$-fold) scroll, a classic object in algebraic geometry \cite{catalano1996-possible}. When $m>2$, we did not find a name of $\var{X}_{\vect{a},V}$ in the literatures, so we call it \emph{Veronese scroll}, as a hybrid of ``rational normal scroll'' and ``Veronese variety''. When $m \ge 2$, $\var{X}_{\vect{a},V}$ can be realized as a projective bundle\footnote{We are not reproducing the bundle construction, since it is difficult without going into technical details.} $\var{X}_{\vect{a},V} \simeq \mathbb{P} (\mathcal{O}_{\PP V} (a_1) \oplus \cdots \oplus \mathcal{O}_{\PP V} (a_d))$ \cite{Angelini.etal16arxiv-number,catalano1996-possible, Comon.etal15conf-polynomial}. In the following sections, we give explicit (ideal-theoretic) defining equations for the set \cref{eq:def_xhat}, which will provide an alternative proof that $\widehat{\var{X}}_{\vect{a},V}$ is a variety.

Now consider the following map
\[
\begin{array}{rcl}
\psi_m:  W \times V \times \KK^{d} & \to & S^{\vect{a}} V  \otimes W \\
(\mathbf{w},\mathbf{v},(c_1,\cdots,c_d)) & \mapsto & 
\begin{array}{rccccl}
(&w_1 c_1 \mathbf{v}^{a_1}, & b_1 c_2 \mathbf{v}^{a_1}, & \cdots & w_1c_d \mathbf{v}^{a_d}, &\\
&\vdots & & &\vdots &\\
&w_n c_1 \mathbf{v}^{a_1}, & b_m c_2 \mathbf{v}^{a_2}, & \cdots & w_nc_d \mathbf{v}^{a_d} &),\\
\end{array}
\end{array}
\]
and define $\widehat{Y}_{(a_1,\ldots,a_d)}$ the image of $\psi_m$.
It is easy to see that $\widehat{Y}_{(a_1,\ldots,a_d)} = Seg(\widehat{X}_{(a_1,\ldots,a_d)} \times W)$,
Moreover, as in \cref{sec:sim_waring}, we can show that the polynomial decomposition \cref{eq:decoupling_additive} is exactly the $X$-rank decomposition for $\widehat{Y}_{(1,\ldots,d)}$.

\subsection{Determinantal construction (defining equations)}\label{sec:determinantal}
This section is not needed to prove the main results of the paper, but still gives more insight in the nature of the Veronese scrolls.

First, recall a definition of the catalecticant matrix \cite[Ch.~1]{Iarobbino.Kanev99-Power}  (we prefer giving it in coordinates). 
Let $f\in S^dV$ be given by coordinates $\{f_{\boldsymbol{\alpha}}\}_{\boldsymbol{\alpha}\in \Delta_{s,m}}$, as defined in \cref{sec:symtdec}. Then the first catalecticant matrix, for $1 \le s \le d$, is defined as\footnote{In fact, this is the matrix representation map $S^{d-s} \set{V}^{*} \to S^{s} \set{V}^{*}$ given by differentiation.} 
\[
C_f \in \KK^{m \times \binom{m+d-2}{d-1}}, \text{ where} (C_f)_{i,\boldsymbol{\beta}} = f_{(\beta_1,\ldots,\beta_i+1,\ldots,\beta_m)},
\]
where the columns are indexed by $\boldsymbol{\beta} \in \Delta_{s,m-1}$.

\begin{proposition}\label{prop:determinantal}
Let $a_k \ge 1$, and ${f}=(f^{(1)},\ldots,f^{(d)}) \in S^{\vect{a}} V$.
Define the stacked  matrix as
\begin{equation}\label{eq:stacked_catalecticant}
S({f}) \eqdef \left[\begin{array}{c|c|c} C_{f_1} & \cdots & C_{f_d}  \end{array}\right].
\end{equation}
Then it holds that
\[
f \in \widehat{\set{X}}_{\vect{a},V} \iff \rank{S(f)} \le 1,
\]
\emph{i.e.} $\widehat{\set{X}}_{\vect{a},V}$ is defined (set-theoretically) by the vanishing of all
$2\times 2$ minors of $S({f})$.
\end{proposition}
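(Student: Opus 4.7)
The proposition is a set equality $\{f \in S^{\vect{a}}V : \rank{S(f)} \le 1\} = \widehat{X}_{\vect{a},V}$, and I would prove the two inclusions separately. The whole argument reduces, via the horizontal block structure of $S(f)$, to the classical single-block fact: for any $a \ge 1$ and $g \in S^a V$, the catalecticant $C_g$ satisfies $\rank{C_g}\le 1$ if and only if $g = c\vect{v}^a$ for some $c \in \KK$, $\vect{v} \in V$. Taking that lemma for granted for the moment, the structure of the proof is straightforward.

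For the forward inclusion ($\widehat{X}_{\vect{a},V} \subseteq \{\rank \le 1\}$), I take $f^{(k)} = c_k\vect{v}^{a_k}$ and compute. Since $\partial_i(c_k \vect{v}^{a_k}) = c_k a_k v_i \vect{v}^{a_k-1}$, the row $i$ of $C_{f^{(k)}}$ equals $c_k a_k v_i$ times the row of coefficients of $\vect{v}^{a_k-1}$ in the monomial basis. Hence each block $C_{f^{(k)}}$ factors as $\vect{v}\, \vect{r}_k^{\top}$ up to scalar, with the \emph{same} column vector $\vect{v}$, so the horizontal stacking $S(f) = \vect{v} \cdot [\vect{r}_1^{\top} \mid \cdots \mid \vect{r}_d^{\top}]$ has rank at most one.

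For the backward inclusion, suppose $\rank{S(f)} \le 1$. If $S(f) = 0$, apply the single-block lemma to each $C_{f^{(k)}} = 0$ and take any $\vect{v}$, $c_k = 0$. Otherwise the column space of $S(f)$ is a line $\KK\vect{v}$ for a nonzero $\vect{v} \in \KK^m$, and since each $C_{f^{(k)}}$ is a horizontal block (hence column submatrix) of $S(f)$, each $C_{f^{(k)}}$ also has column space in $\KK\vect{v}$ and thus rank $\le 1$. Applying the single-block lemma block by block produces $f^{(k)} = c_k \vect{v}_k^{a_k}$; the crucial point is that the vector $\vect{v}_k$ is (up to scalar, absorbed into $c_k$) just whichever vector spans the column space of $C_{f^{(k)}}$, and for every nonzero block this must be $\vect{v}$ itself. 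This is where using \emph{affine} common column spans of the blocks, rather than treating the blocks independently, is essential — otherwise one would only conclude that each $f^{(k)}$ is a power of \emph{some} linear form, and those forms could be distinct.

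The only real calculation is the single-block lemma, which is the main (and classical) obstacle. I would prove it by reading off $(C_g)_{i,\boldsymbol{\beta}} = g_{\boldsymbol{\beta} + \vect{e}_i}$ in coordinates: if column $\boldsymbol{\beta}$ equals $\lambda_{\boldsymbol{\beta}} \vect{v}$, then the coefficient $g_{\boldsymbol{\alpha}}$ of any monomial with $\alpha_i \ge 1$ satisfies $g_{\boldsymbol{\alpha}} = \lambda_{\boldsymbol{\alpha}-\vect{e}_i} v_i$; cross-comparing the expressions obtained by peeling off different indices $i$ (using any $i$ with $v_i \ne 0$) forces $g_{\boldsymbol{\alpha}}$ to be proportional to $\vect{v}^{\boldsymbol{\alpha}}$, i.e., $g = c\vect{v}^a$. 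The multi-index bookkeeping — ensuring consistency of the scalars $\lambda_{\boldsymbol{\beta}}$ across the choices of $i$ — is the one place where care is needed, but it is the standard Sylvester argument for rank-one catalecticants. Once this lemma is in hand, the proposition (and the alternative proof that $\widehat{X}_{\vect{a},V}$ is an algebraic variety) follows immediately.
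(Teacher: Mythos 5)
Your proof is correct and follows essentially the same route as the paper: reduce to the classical single-block fact that $\rank{C_g}\le 1$ iff $g=c\vect{v}^{a}$ (which the paper simply cites from Iarrobino--Kanev, Thm.~1.28, where you sketch the Sylvester-style coordinate argument), and then use the common column space of the stacked blocks to force all the linear forms to be collinear. The paper invokes apolarity for that last collinearity step, but your direct column-space argument is the same observation, so there is nothing substantively different to flag.
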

\begin{proof}
The proof is contained in \cref{sec:defining_equations}.
\end{proof}

A similar construction for the matrix ${S(f)}$ can be found in \cite[\S 3]{Angelini.etal16arxiv-number}.

\begin{proposition}\label{prop:determinantal_ideal}
Let $a_k \ge 1$, and $S({f})$ be defined as in \cref{eq:stacked_catalecticant}.
Then the $2\times 2$ minors of $S({f})$ generate the ideal of $\widehat{\set{X}}_{\vect{a},V}$.
\end{proposition}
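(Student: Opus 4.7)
The plan is to upgrade the set-theoretic equality of \cref{prop:determinantal} to an ideal-theoretic statement. Since $\widehat{\set{X}}_{\vect{a},V}$ is the image of the irreducible set $\set{V}\times\KK^{d}$ under the polynomial map $\psi$ from \cref{eq:polynomial_defining_map}, it is irreducible, and therefore its vanishing ideal $I(\widehat{\set{X}}_{\vect{a},V})$ is prime. Writing $I_{2}(S(f))$ for the ideal generated by the $2\times 2$ minors of $S(f)$, \cref{prop:determinantal} yields $\sqrt{I_{2}(S(f))}=I(\widehat{\set{X}}_{\vect{a},V})$, so it is enough to verify that $I_{2}(S(f))$ is already prime (equivalently, reduced).

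For this I would invoke the classical theorem of Eisenbud on $1$-generic matrices of linear forms: if every ``generalized entry'' $\vect{v}^{\top}M\vect{w}$ of such a matrix $M$ (taken over nonzero $\vect{v}$ and $\vect{w}$) is a nonzero linear form, then the ideal of its $2\times 2$ minors is prime. The crux of the argument is thus to check that $S(f)$ is $1$-generic. I would identify a row vector $\vect{v}=(v_{1},\dots,v_{m})$ with the linear form $v(x)=\sum_{i}v_{i}x_{i}\in\KK[x_{1},\dots,x_{m}]$, and split any column vector $\vect{w}$ into blocks $\vect{w}^{(k)}$ corresponding to the columns of $C_{f_{k}}$, identifying each $\vect{w}^{(k)}$ with the polynomial $w^{(k)}(x)=\sum_{\boldsymbol{\beta}}w^{(k)}_{\boldsymbol{\beta}}x^{\boldsymbol{\beta}}$ of degree $a_{k}-1$. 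Unfolding $(C_{f_{k}})_{i,\boldsymbol{\beta}}=f^{(k)}_{\boldsymbol{\beta}+\vect{e}_{i}}$, one sees that the coefficient of the indeterminate $f^{(k)}_{\boldsymbol{\alpha}}$ in the linear form $\vect{v}^{\top}S(f)\vect{w}$ equals the coefficient of the monomial $x^{\boldsymbol{\alpha}}$ in the product $v(x)\,w^{(k)}(x)$. Hence $\vect{v}^{\top}S(f)\vect{w}=0$ forces $v(x)\,w^{(k)}(x)=0$ for every $k$, and since $\KK[x_{1},\dots,x_{m}]$ is a domain this means either $\vect{v}=\vect{0}$ or all $\vect{w}^{(k)}=0$. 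This is precisely $1$-genericity, from which the theorem yields primality of $I_{2}(S(f))$.

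The main obstacle I anticipate is locating the right form of Eisenbud's result in the literature. His original statement in ``Linear sections of determinantal varieties'' concerns maximal minors of a $p\times q$ matrix, which handles $m=2$ directly (binary forms), but for $m\geq 3$ one needs the $2\times 2$-minor variant developed in subsequent work (e.g.\ Bruns--Vetter, and the treatment of catalecticant-type matrices in the spirit of \cite{Angelini.etal16arxiv-number}). If no clean citation is available, a self-contained fallback is to prove primality directly by exhibiting the injection $\KK[S^{\vect{a}}V]/I_{2}(S(f))\hookrightarrow\KK[\vect{v},c_{1},\dots,c_{d}]$ induced by $\psi^{*}$, which reduces the task to showing $\ker\psi^{*}\subseteq I_{2}(S(f))$. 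This can be organised by induction on $d$, with the base case $d=1$ being the classical fact that the ideal of the Veronese variety $\nu_{a_{1}}(V)$ is generated by the $2\times 2$ minors of its first catalecticant, and the inductive step handled via a colon-ideal computation that exploits the separation of variables across the distinct summands $S^{a_{k}}V$.
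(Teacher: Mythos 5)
Your reduction to primality of the ideal $I_{2}(S(f))$ of $2\times 2$ minors is sound (irreducibility of $\widehat{\set{X}}_{\vect{a},V}$ plus \cref{prop:determinantal} and the Nullstellensatz give $\sqrt{I_{2}(S(f))}=I(\widehat{\set{X}}_{\vect{a},V})$), and your verification that $S(f)$ is $1$-generic is correct: the coefficient of $f^{(k)}_{\boldsymbol{\alpha}}$ in $\vect{v}^{\top}S(f)\vect{w}$ is indeed the coefficient of $x^{\boldsymbol{\alpha}}$ in $v(x)\,w^{(k)}(x)$, so $1$-genericity follows from $\KK[x_1,\dots,x_m]$ being a domain. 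The gap is the step you yourself flag: Eisenbud's theorem on $1$-generic matrices asserts primality (and expected codimension) of the ideal of \emph{maximal} minors of a $p\times q$ matrix of linear forms. Since $S(f)$ has $m$ rows, this covers only $m=2$, where $2\times 2$ minors are maximal. For $m\ge 3$ there is no ``$2\times 2$-minor variant'' for general $1$-generic matrices that you can cite: Bruns--Vetter treats generic determinantal ideals, not specializations like catalecticants, and indeed the codimension of $\widehat{\set{X}}_{\vect{a},V}$ is strictly smaller than the generic codimension $(m-1)(q-1)$ of the rank-$\le 1$ locus, so the variety is a very non-generic linear section and no genericity-based transfer applies. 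Your fallback (injectivity of $\KK[S^{\vect{a}}V]/I_{2}(S(f))\hookrightarrow\KK[\vect{v},c_1,\dots,c_d]$, i.e.\ $\ker\psi^{*}\subseteq I_{2}(S(f))$) is a restatement of the claim to be proved, and the proposed induction on $d$ via colon ideals is not carried out; even its base case $d=1$ (the Veronese ideal is generated by $2\times2$ minors of the first catalecticant) is a nontrivial classical theorem rather than a consequence of $1$-genericity.

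The paper closes exactly this gap by a different mechanism: it exploits the $\operatorname{GL}(V)$-action. Each graded piece $I_k(\widehat{X}_{\vect{a},V})$ is a $\operatorname{GL}(V)$-subrepresentation of $S^k(S^{a_1}V\oplus\cdots\oplus S^{a_d}V)$, and an inheritance argument (\cref{prop:inheritance}, in the spirit of Landsberg--Manivel) shows that an isotypic component $S_{\overline{\pi}}V$ lies in the ideal for $V$ if and only if the corresponding component lies in the ideal for $\CC^2$. This reduces the whole statement to $\dim V=2$, where $\widehat{X}_{\vect{a},\CC^2}$ is the rational normal scroll and Eisenbud's result applies in precisely the maximal-minor form you quoted. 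So your plan is salvageable for $m=2$ as written, but for $m\ge 3$ you need either this representation-theoretic descent or some other genuine input beyond $1$-genericity.
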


The proposition is much stronger than \cref{prop:determinantal}. 
The proof  relies on the tools of representation theory, and is contained in \cref{sec:defining_equations}.

\section{Main results}\label{sec:scrolls}
Throughout this section we assume that $\KK = \CC$. By \cite[Section~5]{Qi.etal16SJMAA-Semialgebraic}, all our results hold for the real case too.
We will also  use a shorthand $\var{X}_{\vect{a}}$ instead of $\var{X}_{\vect{a}, V}$.

\begin{remark}
A common idea to consider our model~(\ref{eq:decoupling_additive}) (suggested to us by one of the reviewers) is that decomposition \eqref{eq:decoupling_one_output} can be brought to the form \eqref{eq:waring_decomposition}, and hence  Waring decomposition can be applied (the same argument can be applied to bring \eqref{eq:decoupling_additive} to the form \eqref{eq:sim_waring_decomposition}).
However, homogenization can increase the number of terms, and does not give a good answer to our questions. 

For example, the homogenization of $f(x,y)$ in \cref{fig:polydec} is the trivariate polynomial
\[
6xy^2+4xy\xrightarrow{homogenization} 6xy^2+4xyz = xy(6y+z).
\]
But it is known \cite{Carlini.etal12JoA-solution} that this homogeneous polynomial does not have a Waring decomposition \cref{eq:waring_decomposition} with less that $4$ terms (compare with $3$ terms in \cref{fig:polydec}).
The reason for that is that the polynomials $g_1,g_2,g_3$ do not correspond to powers of linear forms for the homogenized polynomial.
In fact, homogenization restricts the form of polynomials $g_k$. We will study this model by investigating properties of Veronese scrolls.
\end{remark}

\subsection{Identifiability of Veronese scrolls and polynomial decompositions}
\begin{proposition}\label{prop:veronese_scroll_identifiability}
Let $m = \dim V \ge 2$, $a_d \ge 3$,  $n = \dim W \ge 1$.
Next, consider the Veronese scroll ${\widehat{X}}_{(a_1,\ldots,a_d)}$ with $\vect{a} = (a_1,\ldots,a_d)$, $1 \le a_1 \le \cdots \le a_d$, and the variety $\widehat{Y} = \widehat{Y}_{(a_1,\ldots,a_d)} $.
Then we have the following.
\begin{enumerate}
\item $\widehat{Y}$ is $r$-identifiable if
\begin{equation}\label{eq:identifiability_condition}
r\le\min(\ceil{r_5(m,n,a_d)}-1, \dim S^{a_1} V)n.
\end{equation}
\item  $\widehat{Y}$ cannot be $r$-identifiable for $r > n\dim(S^{a_1} V) $.
\end{enumerate}
\end{proposition}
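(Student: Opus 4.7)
My plan is to address the two halves separately: a linear-dependence argument for part~(2) showing non-identifiability, and a ``recover from the top slice, then solve by linear algebra'' argument for part~(1) showing identifiability.

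For part~(2), start from a generic rank-$r$ decomposition $f = \sum_{k=1}^{r} \psi_m(\vect{w}_k,\vect{v}_k,\vect{c}_k)$ and consider its lowest-degree slice
$$
f^{(1)} = \sum_{k=1}^{r} c_{1,k}\,\vect{w}_k \otimes \vect{v}_k^{a_1} \in W \otimes S^{a_1}V,
$$
a space of dimension exactly $n\dim S^{a_1}V$. When $r > n\dim S^{a_1}V$, the $r$ vectors $\vect{w}_k \otimes \vect{v}_k^{a_1}$ must satisfy a nontrivial linear relation $\sum_k \mu_k(\vect{w}_k \otimes \vect{v}_k^{a_1}) = 0$. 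Replacing $c_{1,k}$ by $c_{1,k} + \varepsilon \mu_k$ (with all $\vect{w}_k,\vect{v}_k$ and all $c_{j,k}$, $j>1$, untouched) yields a one-parameter family of rank-$r$ decompositions of the same $f$: the first slice is preserved by the relation and the remaining slices by construction. For generic initial data the pairs $\{(\vect{w}_k,\vect{v}_k)\}$ lie in distinct scaling orbits, so any equivalence forces $\sigma = \mathrm{id}$ and $\alpha_k = \beta_k = 1$; this in turn forces $\varepsilon \mu_k \equiv 0$, contradicting $\varepsilon \neq 0$ and $\mu_k \neq 0$ for some $k$. Hence distinct members of the family are inequivalent, so $\widehat{Y}$ is not $r$-identifiable.

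For part~(1), project $f$ onto its top-degree slice
$$
f^{(d)} = \sum_{k=1}^{r} c_{d,k}\,\vect{w}_k \otimes \vect{v}_k^{a_d} \in W \otimes S^{a_d}V,
$$
a rank-$r$ element of the Segre-Veronese variety $\mathrm{Seg}(\nu_{a_d}(V) \times W)$. For generic $f\in\widehat{Y}$ of rank $r$, the triples $(\vect{w}_k,\vect{v}_k,c_{d,k})$ are themselves generic, so $f^{(d)}$ is a generic rank-$r$ element of this Segre-Veronese; since $a_d \ge 3$, \cref{lem:segre_veronese_ident} gives uniqueness of its Segre-Veronese decomposition whenever $r \le (\lceil r_5(m,n,a_d)\rceil-1)\,n$, and hence uniqueness of the multi-set $\{(\vect{w}_k,\vect{v}_k)\}$ up to the natural two-parameter scaling already absorbed by the $X$-rank equivalence. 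Once those pairs are fixed, each remaining layer $(c_{j,k})_{k=1}^r$ is the unique solution of the linear system $f^{(j)} = \sum_k c_{j,k}\,\vect{w}_k \otimes \vect{v}_k^{a_j}$, provided the $r$ vectors $\{\vect{w}_k \otimes \vect{v}_k^{a_j}\}$ are linearly independent in $W\otimes S^{a_j}V$; for generic $(\vect{w}_k,\vect{v}_k)$ this holds iff $r \le n\dim S^{a_j}V$, with the binding constraint coming from the smallest exponent $a_1$. Intersecting the two bounds yields exactly $r \le \min(\lceil r_5(m,n,a_d)\rceil-1,\,\dim S^{a_1}V)\,n$.

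The principal obstacle is that \cref{lem:segre_veronese_ident} is stated as $kn$-identifiability (i.e.\ only at ranks that are multiples of $n$), whereas the proposition allows any $r$ in the range. I would bridge this either by appealing to a monotonicity principle for subgeneric identifiability (so that $kn$-identifiability at the rounded-up multiple of $n$ descends to all smaller ranks in the range) or, more robustly, by re-running the Terracini / non--weak-defectivity estimate behind \cref{lem:segre_veronese_ident} at the intermediate values of $r$, where the underlying tangent-space dimension count is insensitive to this restriction. A minor bookkeeping check is that genericity of $f\in\widehat{Y}$ propagates to genericity of $f^{(d)}$ in the $r$th secant of the Segre-Veronese and to genericity of $(\vect{w}_k,\vect{v}_k)$ for the linear-independence step; both are immediate from the parameterization $\psi_m$.
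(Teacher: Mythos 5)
Your proof follows essentially the same route as the paper's: recover the pairs $(\vect{w}_k,\vect{v}_k)$ uniquely from the top-degree slice via identifiability of the Segre--Veronese variety (\cref{lem:segre_veronese_ident}), then determine the remaining coefficients $c_{j,k}$ from the generic linear independence of $\{\vect{w}_k\otimes\vect{v}_k^{a_j}\}$, with non-identifiability in part (2) coming from the forced linear dependence in the $a_1$-slice. The $kn$-versus-general-$r$ issue you flag is genuine but is shared by the paper's own proof (which simply sets $r=kn$), so your proposed monotonicity fix only makes the argument more complete.
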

 The proof is given in \cref{sec:proofs_identifiability}, and the idea of the proof is based on two facts:
\begin{enumerate}
\item Under the condition \eqref{eq:identifiability_condition}, the highest degree terms are generically unique, and $\vect{w}_k$ and $\vect{v}_k$ are uniquely determined.
\item The lower degree terms (coefficients $c_{k,l}$) can be recovered using a simple linear algebra.
\end{enumerate}

\cref{prop:veronese_scroll_identifiability} has immediate implications for the polynomial decomposition \eqref{eq:decoupling_additive}, which corresponds to the case where degrees are defined by $\vect{a} = (1,\ldots,d)$.

\begin{corollary}\label{prop:polydec_identifiability}
Let $d$, $m$, $n$ be such that $d \ge 3$, $m \ge 2$, and consider the field $\CC$ \begin{enumerate}
\item
The decomposition  \eqref{eq:decoupling_additive} is $r$-identifiable if
\begin{equation}\label{eq:theoretical_bound_identifiability}
r \le\min(m,\ceil{r_5(m,n,d)}-1)\cdot n.
\end{equation}
 In particular, if $m <r_5(m,n,d)$, then the model \eqref{eq:decoupling_additive} is $mn$-identifiable. 
\item 
The model \eqref{eq:decoupling_additive} cannot be $r$-identifiable for $r> mn$.
\end{enumerate}
\end{corollary}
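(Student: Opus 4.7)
The plan is to derive this corollary as a direct specialization of \cref{prop:veronese_scroll_identifiability} to the exponent vector $\vect{a} = (1, 2, \ldots, d)$. The key preliminary step, already established in the parametric discussion at the end of \cref{sec:sim_waring}, is that the polynomial decomposition \eqref{eq:decoupling_additive} coincides with the $X$-rank decomposition for $\widehat{Y}_{(1,\ldots,d)}$: a summand $\vect{w}_k g_k(\vect{v}_k^{\top}\vars)$ with $g_k(t) = c_{1,k} t + \cdots + c_{d,k} t^d$ corresponds exactly to one image point of the parameterization $\psi_m$ with exponents $(1,\ldots,d)$. So the identifiability of \eqref{eq:decoupling_additive} and the $r$-identifiability of $\widehat{Y}_{(1,\ldots,d)}$ are literally the same question.

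Next, for part (1), I would plug $a_1 = 1$ and $a_d = d$ into the hypothesis $a_d \ge 3$ of \cref{prop:veronese_scroll_identifiability}, which is satisfied because of the standing assumption $d \ge 3$; the hypotheses $m \ge 2$ and $n \ge 1$ carry over verbatim. Since $\dim S^{1} V = m$, condition \eqref{eq:identifiability_condition} specializes to
\[
r \le \min\bigl(\lceil r_5(m,n,d)\rceil - 1,\; m\bigr)\cdot n,
\]
which is exactly \eqref{eq:theoretical_bound_identifiability}. The ``in particular'' clause then follows by noting that $m < r_5(m,n,d)$ forces $m \le \lceil r_5(m,n,d)\rceil - 1$, so the minimum on the right equals $m$ and the identifiability bound reads $r \le mn$.

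For part (2), I would invoke the second assertion of \cref{prop:veronese_scroll_identifiability}, which states that $\widehat{Y}_{\vect{a}}$ fails to be $r$-identifiable for $r > n\dim(S^{a_1} V)$. With $a_1 = 1$ this becomes $r > mn$, giving the claim.

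The only real obstacle here is verifying the identification between the affine model \eqref{eq:decoupling_additive} and the scroll $\widehat{Y}_{(1,\ldots,d)}$, in particular checking that the degree-$0$ vanishing assumption $\vect{f}(\vect{0}) = \vect{0}$ matches the absence of an $S^0 V$ component in $S^{\vect{a}} V$; once that bookkeeping is done, everything else is a one-line substitution. The genuinely hard mathematical work has already been carried out in \cref{prop:veronese_scroll_identifiability}, where the highest-degree homogeneous parts recover $\vect{w}_k$ and $\vect{v}_k$ generically via \cref{lem:segre_veronese_ident}, and the lower-degree coefficients $c_{k,l}$ are then recovered by solving a linear system.
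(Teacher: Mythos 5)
Your proposal is correct and matches the paper's treatment exactly: the paper gives no separate proof of this corollary, presenting it as the immediate specialization of \cref{prop:veronese_scroll_identifiability} to $\vect{a}=(1,\ldots,d)$, using $\dim S^1 V = m$ and the previously established identification of \eqref{eq:decoupling_additive} with the $X$-rank decomposition for $\widehat{Y}_{(1,\ldots,d)}$. Your verification of the ``in particular'' clause ($m < r_5$ forces $m \le \lceil r_5\rceil - 1$) is also right.
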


First, let us give some examples. In Tables~\ref{tab:nbound3} and~\ref{tab:ndbound4}, we provide the calculated bound  on  identifiability \eqref{eq:theoretical_bound_identifiability} for $d=3,4$.
For comparison, we show the maximal non-defective rank obtained numerically\footnote{We also checked that the weak tangential nondefectivity described in  \cite{Chiantini.etal14SJMAA-Algorithm} holds for all cases Tables~\ref{tab:nd3} and~\ref{tab:nd4}, except when $r=mn$ (in that case, the weak tangential nondefectivity criterion works up to $mn-1$). } using  \cref{lem:terracini}.
In all tables, the cases when the rank coincides with $mn$ (i.e., the maximal possible  rank  by \cref{prop:polydec_identifiability}, part 2) are shown in bold.

\begin{table}[!hbt]
\centering
\subfloat[Our bound]{\label{tab:nbound3}
\pgfplotstableset{create on use/new/.style
  = {create col/expr accum={\pgfmathaccuma+1}{0}},
    my multistyler/.style={
    @my multistyler/.style={
    display columns/##1/.append style={column name={$##1$},
  postproc cell content/.code={%
  \pgfmathparse{int(equal((\pgfplotstablecol*(\pgfplotstablerow+1)),########1))}%
     \ifnum\pgfmathresult=1\relax%
     \pgfkeysalso{@cell content/.add={$\bf}{$}}\fi},}},
    @my multistyler/.list={#1}
  }
}
\pgfplotstableread{matlab/rank_non_def_bound3.txt}\mytable
\pgfplotstabletypeset[columns={new,0,1,2,3,4,5,6,7},skip rows between index={0}{1},skip rows between index={8}{12},every head row/.style={before row={\hline},after row=\hline},%
my multistyler={1,...,8}, 
every last column/.style={column type/.add={}{|}}, 
every last row/.style={after row=\hline},  
string replace={NaN}{},%
columns/new/.style={column name={\diagbox[width=2em, height=2em]{$m$}{$n$}},column type/.add={|}{|}}]\mytable}
\subfloat[Terraccini lemma]{\label{tab:nd3}
\pgfplotstableset{create on use/new/.style
  = {create col/expr accum={\pgfmathaccuma+1}{0}},
    my multistyler/.style={
    @my multistyler/.style={
    display columns/##1/.append style={column name={$##1$},
  postproc cell content/.code={%
  \pgfmathparse{int(equal((\pgfplotstablecol*(\pgfplotstablerow+1)),########1))}%
     \ifnum\pgfmathresult=1\relax%
     \pgfkeysalso{@cell content/.add={$\bf}{$}}\fi},}},
    @my multistyler/.list={#1}
  }
}  
\pgfplotstableread{matlab/rank_non_def3.txt}\mytable
\pgfplotstabletypeset[columns={new,0,1,2,3,4,5,6,7},skip rows between index={0}{1},skip rows between index={8}{12},every head row/.style={before row={\hline},after row=\hline},%
my multistyler={1,...,8}, 
every last column/.style={column type/.add={}{|}}, 
every last row/.style={after row=\hline},  
string replace={NaN}{},%
columns/new/.style={column name={\diagbox[width=2em, height=2em]{$m$}{$n$}},column type/.add={|}{|}}]\mytable}
\caption{Case $d=3$.}
\end{table}
\begin{table}[!hbt]
\centering
\subfloat[Our bound]{\label{tab:ndbound4}
\pgfplotstableset{create on use/new/.style
  = {create col/expr accum={\pgfmathaccuma+1}{0}},
    my multistyler/.style={
    @my multistyler/.style={
    display columns/##1/.append style={column name={$##1$},
  postproc cell content/.code={%
  \pgfmathparse{int(equal((\pgfplotstablecol*(\pgfplotstablerow+1)),########1))}%
     \ifnum\pgfmathresult=1\relax%
     \pgfkeysalso{@cell content/.add={$\bf}{$}}\fi},}},
    @my multistyler/.list={#1}
  }
}
\pgfplotstableread{matlab/rank_non_def_bound4.txt}\mytable
\pgfplotstabletypeset[columns={new,0,1,2,3,4,5,6,7},skip rows between index={0}{1},skip rows between index={8}{12},every head row/.style={before row={\hline},after row=\hline},%
my multistyler={1,...,8}, 
every last column/.style={column type/.add={}{|}}, 
every last row/.style={after row=\hline},  
string replace={NaN}{},%
columns/new/.style={column name={\diagbox[width=2em, height=2em]{$m$}{$n$}},column type/.add={|}{|}}]\mytable}
\subfloat[Terraccini's lemma]{\label{tab:nd4}
\pgfplotstableset{create on use/new/.style
  = {create col/expr accum={\pgfmathaccuma+1}{0}},
    my multistyler/.style={
    @my multistyler/.style={
    display columns/##1/.append style={column name={$##1$},
  postproc cell content/.code={%
  \pgfmathparse{int(equal((\pgfplotstablecol*(\pgfplotstablerow+1)),########1))}%
     \ifnum\pgfmathresult=1\relax%
     \pgfkeysalso{@cell content/.add={$\bf}{$}}\fi},}},
    @my multistyler/.list={#1}
  }
}  
\pgfplotstableread{matlab/rank_non_def4.txt}\mytable
\pgfplotstabletypeset[columns={new,0,1,2,3,4,5,6,7},skip rows between index={0}{1},skip rows between index={8}{12},every head row/.style={before row={\hline},after row=\hline},%
my multistyler={1,...,8}, 
every last column/.style={column type/.add={}{|}}, 
every last row/.style={after row=\hline},  
string replace={NaN}{},%
columns/new/.style={column name={\diagbox[width=2em, height=2em]{$m$}{$n$}},column type/.add={|}{|}}]\mytable}
\caption{Case $d=4$.}
\end{table}

As it is easy to see from \cref{tab:nbound3,tab:nd4},  that the bound given by \eqref{eq:theoretical_bound_identifiability} does not detect the maximum identifiability bound obtained by Terraccini's lemma (especially for $m < n$), but does perform well for the case $m \ge n$.
Moreover, the following remark can be made.

\begin{corollary}\label{cor:mn_ident_1}
$\phantom{x}$

\begin{enumerate}
\item 
For fixed $m$ and $n$, there exists $d_0$ such that the inequality $m < r_5(m,n,d)$  holds for all $d \ge d_0$.
\item 
If $m\ge n \ge 4$, then $m < r_5(m,n,d)$  holds true for all $d \ge 4$.
\end{enumerate}
\end{corollary}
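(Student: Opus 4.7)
The plan is to reduce both parts of the corollary to a single binomial inequality and then verify it by elementary estimates. First I would simplify $r_5(m,n,d)$ in the relevant regime: for $d \ge 4$ one has $r_3(m,d) = r_1(m,d) = \binom{m+d-1}{d}/m$, and when $n > 1$ the inequality $m+n-1 > m$ forces $r_4(m,n,d) < r_3(m,d)$, so the minimum in the definition of $r_5$ collapses to $r_5(m,n,d) = \binom{m+d-1}{d}/(m+n-1)$. For $n = 1$, outside the finite exception set $\{(4,4),(3,6),(6,3)\}$ one simply has $r_5(m,1,d) = r_1(m,d) = \binom{m+d-1}{d}/m$. In all these cases the target inequality $m < r_5(m,n,d)$ becomes
$$
m(m+n-1) \;<\; \binom{m+d-1}{d}.
$$

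For Part 1, fix $m \ge 2$ and $n$. The right-hand side is a polynomial in $d$ of degree $m-1 \ge 1$ with positive leading coefficient, hence tends to $+\infty$, so the inequality is automatic for $d$ sufficiently large. Choosing $d_0$ large enough (say $d_0 \ge 7$) sidesteps the exceptional pairs in the definition of $r_2$ as well as the $d=3$ adjustment in $r_3$, giving Part 1 immediately.

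For Part 2, assume $m \ge n \ge 4$ and $d \ge 4$; these hypotheses avoid every exception, so the reduction above applies. Since $\binom{m+d-1}{d}$ is non-decreasing in $d$, it suffices to check the base case $d = 4$, where
$$
\binom{m+3}{4} = \frac{m(m+1)(m+2)(m+3)}{24}.
$$
Dividing both sides by $m$, the inequality reduces to $(m+1)(m+2)(m+3) > 24(m+n-1)$. Using $n \le m$ bounds the right-hand side by $24(2m-1)$; a direct check at $m=n=4$ (giving $210 > 168$), combined with the observation that the left-hand side grows cubically in $m$ while the right-hand side is linear, closes the argument for all $m \ge 4$.

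The only substantive step is the elementary binomial estimate; there is no deeper obstacle. What requires care is purely bookkeeping — tracking the exceptional pairs $\{(4,4),(3,6),(6,3)\}$ and the $d=3$ correction to $r_3$ — and the hypotheses of Part 2 are tailored precisely to remove these, while in Part 1 they are absorbed into the choice of $d_0$.
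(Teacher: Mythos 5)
Your proof is correct and follows essentially the same route as the paper: part 1 from the growth of $\binom{m+d-1}{d}$ in $d$, and part 2 by reducing to the base case $d=4$ and checking $(m+1)(m+2)(m+3) > 24(m+n-1)$ after identifying $r_5$ with $r_4(m,n,d)=\binom{m+d-1}{d}/(m+n-1)$. The only cosmetic difference is that the paper verifies $4 \le m \le 5$ by table lookup and applies the explicit inequality only for $m \ge 5$, whereas you handle all $m \ge n \ge 4$ uniformly from the single check at $m=n=4$.
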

\begin{proof}
\begin{enumerate}
\item This fact follows since the numerator of \eqref{eq:gen_rank_segre_veronese} for $m > 1$ is a strictly increasing in $d$.
\item If $5 \ge m \ge n \ge 4$, this can be verified from Table~\ref{tab:ndbound4}.
If $d\ge 4$ and $m\ge n > 1$, $m \ge 5$, then 
\[
\frac{r_5 (m,n,d)}{m} \ge \frac{r_5 (m,n,4)}{m} = \frac{(m+1)(m+2)(m+3)}{2\cdot 3 \cdot 4 \cdot (m+n-1)}  > 1.
\]
\end{enumerate}
\end{proof}

\begin{remark}\label{rem:DIS_generic_uniqueness}
The authors in \cite{Dreesen.etal14-Decoupling} suggest the bound
\begin{equation}\label{eq:uniqueness_bound_DIS}
(m-1)m(n-1)n \ge 2(r-1)r
\end{equation}
for decomposition \eqref{eq:decoupling_additive}, also shown in \cref{tab:ndbounddis}.
The bound \cref{eq:uniqueness_bound_DIS} appears from Kruskal-type generic uniqueness conditions for unstructured $m\times n\times N$ tensors \cite{DeLathauwer06SIMAX-Link}.
In fact, a better bound exists for unbalanced tensors, which is $(m-1)(n-1)$ \cite{Chiantini.etal14SJMAA-Algorithm}.

We make two remarks here:
\begin{enumerate}
\item The bound $mn$ is better than the heuristic bound \eqref{eq:uniqueness_bound_DIS} (see the values Table~\ref{tab:ndbounddis}).

\item The tensor considered in \cite{Dreesen.etal14-Decoupling} is structured, and 
the bound \eqref{eq:uniqueness_bound_DIS} cannot be directly applied to  model \eqref{eq:decoupling_additive} \footnote{Take for instance the simple case of symmetry. The maximal symmetric rank $R_s^o$ for which symmetric tensors will have a unique CP decomposition is smaller \cite{Chiantini.etal16arxiv-generic} than the maximal rank $R^o$ for which unconstrained tensors will have a unique CP decomposition \cite{ComoTDC09:laa,Chiantini.etal14SJMAA-Algorithm}.}.
In fact, for degree $2$ (see  \cref{tab:nd2}), the model can be non-identifiable even if the bound \eqref{eq:uniqueness_bound_DIS} holds.
\end{enumerate}
\end{remark}

\begin{table}[!hbt]
\centering
\subfloat[The heuristic bound given in \eqref{eq:uniqueness_bound_DIS}.]
{\label{tab:ndbounddis}
\pgfplotstableset{create on use/new/.style
  = {create col/expr accum={\pgfmathaccuma+1}{0}},
    my multistyler/.style={
    @my multistyler/.style={
    display columns/##1/.append style={column name={$##1$},
  postproc cell content/.code={%
  \pgfmathparse{int(equal((\pgfplotstablecol*(\pgfplotstablerow+1)),########1))}%
     \ifnum\pgfmathresult=1\relax%
     \pgfkeysalso{@cell content/.add={$\bf}{$}}\fi},}},
    @my multistyler/.list={#1}
  }
}
\pgfplotstableread{matlab/rank_non_def_bound_dis.txt}\mytable
\pgfplotstabletypeset[columns={new,0,1,2,3,4,5,6,7},skip rows between index={0}{1},skip rows between index={8}{12},every head row/.style={before row={\hline},after row=\hline},%
my multistyler={1,...,8}, 
every last column/.style={column type/.add={}{|}}, 
every last row/.style={after row=\hline},  
string replace={NaN}{},%
columns/new/.style={column name={\diagbox[width=2em, height=2em]{$m$}{$n$}},column type/.add={|}{|}}]\mytable}
\subfloat[Our bound]{\label{tab:nd2}
\pgfplotstableset{create on use/new/.style
  = {create col/expr accum={\pgfmathaccuma+1}{0}},
    my multistyler/.style={
    @my multistyler/.style={
    display columns/##1/.append style={column name={$##1$},
  postproc cell content/.code={%
  \pgfmathparse{int(equal((\pgfplotstablecol*(\pgfplotstablerow+1)),########1))}%
     \ifnum\pgfmathresult=1\relax%
     \pgfkeysalso{@cell content/.add={$\bf}{$}}\fi},}},
    @my multistyler/.list={#1}
  }
}  
\pgfplotstableread{matlab/rank_non_def2.txt}\mytable
\pgfplotstabletypeset[columns={new,0,1,2,3,4,5,6,7},skip rows between index={0}{1},skip rows between index={8}{12},every head row/.style={before row={\hline},after row=\hline},%
my multistyler={1,...,8}, 
every last column/.style={column type/.add={}{|}}, 
every last row/.style={after row=\hline},  
string replace={NaN}{},%
columns/new/.style={column name={\diagbox[width=2em, height=2em]{$m$}{$n$}},column type/.add={|}{|}}]\mytable}
\caption{Case $d=2$.}
\end{table}

In fact even if the model is non-identifiable, the decomposition can be partially unique.
\begin{corollary}\label{cor:polydec_identifiability_partial}
Let $s$ be a number $1 < s < d$ such that $\binom{m+s-1}{s}< r_5(m,n,d)$.
Then for all $r \le \binom{m+s-1}{s} n$, the decomposition \eqref{eq:decoupling_additive} is partially identifiable except the terms of degree less than $s$.
That is, all the elements in the decomposition \eqref{eq:decoupling_additive} can be determined uniquely (up to trivial indeterminacies), except the coefficients $c_{k,l}$, for $k < s$.
\end{corollary}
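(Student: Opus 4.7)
The plan is to deduce this corollary directly from \cref{prop:veronese_scroll_identifiability} by restricting attention to the homogeneous components of degrees $s$ through $d$. The key observation is that the low-degree coefficients $c_{k,l}$ for $k<s$ are precisely the parameters that drop out under such a truncation, so identifiability of the truncated decomposition yields uniqueness of all other parameters.

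First, I would introduce the truncation map $\pi\colon S^{(1,\ldots,d)}V\otimes W \to S^{(s,s+1,\ldots,d)}V\otimes W$ that forgets the homogeneous components of degree less than $s$. If $\vect{f}$ admits the decomposition \eqref{eq:decoupling_additive}, then its image $\pi(\vect{f})$ lies in $\widehat{Y}_{(s,s+1,\ldots,d)}$ with a Waring-like decomposition using the same vectors $\vect{v}_l,\vect{w}_l$ and the same coefficients $c_{k,l}$ for $k\ge s$. Hence a rank-$r$ decomposition of $\vect{f}$ induces a rank-at-most-$r$ decomposition of $\pi(\vect{f})$ in the smaller Veronese scroll.

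Second, I would check that the corollary's hypothesis matches the bound \eqref{eq:identifiability_condition} applied to the degree sequence $\vect{a}=(s,s+1,\ldots,d)$. With this choice $a_1=s$, so $\dim S^{a_1}V=\binom{m+s-1}{s}$, and $a_d=d$, so the relevant value of $r_5$ is exactly $r_5(m,n,d)$. The assumption $\binom{m+s-1}{s}<r_5(m,n,d)$ implies $\binom{m+s-1}{s}\le\lceil r_5(m,n,d)\rceil-1$, so the minimum in \eqref{eq:identifiability_condition} is attained by $\dim S^{s}V$ and reduces to $r\le\binom{m+s-1}{s}n$, which is the hypothesis on $r$. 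Consequently \cref{prop:veronese_scroll_identifiability} yields $r$-identifiability of $\widehat{Y}_{(s,s+1,\ldots,d)}$.

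Finally, I would pull this back to the original decomposition. Since the parameter map for \eqref{eq:decoupling_additive} factors through $\pi$ simply by forgetting the coordinates $c_{k,l}$ with $k<s$, the image of a general parameter tuple $(\vect{v}_l,\vect{w}_l,c_{k,l})$ is still general in $\widehat{Y}_{(s,s+1,\ldots,d)}$. By $r$-identifiability of the truncated scroll, the data $\vect{v}_l$, $\vect{w}_l$ and $c_{k,l}$ for $k\ge s$ are recovered from $\pi(\vect{f})$ uniquely up to the usual permutation and scaling indeterminacies, while the coefficients $c_{k,l}$ for $k<s$ lie in the kernel of $\pi$ and are unconstrained by this argument. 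The main step is the first one, and the only subtlety is the index bookkeeping needed to match the hypothesis to the bound of \cref{prop:veronese_scroll_identifiability}; no new geometric input is required.
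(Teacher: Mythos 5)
Your proposal is correct and matches the paper's intent: the corollary is stated as an immediate consequence of \cref{prop:veronese_scroll_identifiability} applied to the truncated degree vector $\vect{a}=(s,s+1,\ldots,d)$, which is exactly your argument, and your bookkeeping showing that $\binom{m+s-1}{s}<r_5(m,n,d)$ forces the minimum in \eqref{eq:identifiability_condition} to be $\dim S^{s}V$ is accurate. No gaps.
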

 
\subsection{Dimensions of secant varieties}

From Proposition~\ref{prop:polydec_identifiability} we can immediately find dimensions of secant varieties for small ranks.

\begin{proposition}\label{prop:veronese_scroll_dimensions}
Let $m = \dim V \ge 2$,  $n = \dim W \ge 1$, and  $\vect{a} = (a_1,\ldots,a_d)$, $1 \le a_1 \le \cdots \le a_{d-1} < a_d$, with $a_d \ge 3$.
Consider the variety $\widehat{Y} = \widehat{Y}_{(a_1,\ldots,a_d)}$, and assume that
\[
r \le (\ceil{r_5(m,n,a_d)}-1)n.
\]
Then we have that: 
\begin{enumerate}
\item If $r \le n \dim S^{a_1} V$, then $\widehat{Y}$ is non-defective, \emph{i.e.}
\[
\dim{\sigma}_r (\widehat{Y}) = \exp\dim{\sigma}_r (\widehat{Y}) = r(m+n+d-2),
\]
\item If $n  \dim S^{a_s} V < r \le n \dim S^{a_{s+1}} V$ then 
\begin{equation}\label{eq:veronese_scroll_defective_secant}
\sigma_r (\widehat{Y}) =  \left(S^{(a_1,\ldots,a_s)} V \otimes W \right)  \times {\sigma}_r (\widehat{Y}_{(a_{s+1}, \ldots, a_{d})}),
\end{equation}
and hence
\begin{align}
\dim{\sigma}_r (\widehat{Y}) & = \exp\dim{\sigma}_r (\widehat{Y}_{(a_{s+1}, \ldots, a_{d})}) + \dim \left(S^{(a_1,\ldots,a_s)} V \otimes W \right),\\
&=  r(m+n+d-2-s) + n \left(\sum\limits_{j=1}^{s}  \dim S^{a_j} V\right).
\end{align}
\end{enumerate}
\end{proposition}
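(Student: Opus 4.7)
The plan is to handle Part~1 by invoking identifiability (\cref{prop:veronese_scroll_identifiability}), and then to handle Part~2 by expressing the secant variety as a product via a generic surjectivity argument, reducing to Part~1 applied to the truncated scroll $\widehat{Y}_{(a_{s+1},\ldots,a_d)}$.

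For Part~1, the overall hypothesis $r \le (\ceil{r_5(m,n,a_d)}-1)n$ of the proposition together with the case assumption $r \le n\dim S^{a_1}V$ gives exactly the identifiability bound \eqref{eq:identifiability_condition}, so $\widehat{Y}$ is $r$-identifiable. Generic uniqueness of rank-$r$ decompositions means that the parametrization map $\widehat{Y}^{\times r}\to \sigma_r(\widehat{Y})$ is generically finite, so
\[
\dim \sigma_r(\widehat{Y}) = r\dim \widehat{Y} = r(m+n+d-2).
\]
This agrees with the expected dimension: if $r\dim\widehat{Y}$ exceeded $\dim A$, identifiability would be impossible.

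For Part~2, the inclusion $\sigma_r(\widehat{Y}_{(a_1,\ldots,a_d)}) \subseteq (S^{(a_1,\ldots,a_s)}V \otimes W) \times \sigma_r(\widehat{Y}_{(a_{s+1},\ldots,a_d)})$ is immediate from projecting any decomposition onto its last $d-s$ components. The converse rests on the following claim: for generic $\vect{v}_1,\ldots,\vect{v}_r \in V$ and $\vect{w}_1,\ldots,\vect{w}_r \in W$, the linear map
\[
\Phi_j:\KK^r \to S^{a_j}V \otimes W, \quad (c_k)_{k=1}^{r} \mapsto \sum_{k=1}^{r} c_k\, \vect{w}_k \otimes \vect{v}_k^{a_j},
\]
is surjective for each $j = 1,\ldots,s$. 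This holds because $Seg(W\times \nu_{a_j}(V)) \subset W \otimes S^{a_j}V$ is non-degenerate, and the hypothesis $r > n\dim S^{a_s}V \ge n\dim S^{a_j}V = \dim(W \otimes S^{a_j}V)$ ensures that $r$ general points of this variety span the ambient space. To conclude the converse inclusion, take a general $(g, f) = (g^{(1)},\ldots,g^{(s)}, f^{(s+1)},\ldots,f^{(d)})$ in the right-hand side. Applying Part~1 to the truncated scroll $\widehat{Y}_{(a_{s+1},\ldots,a_d)}$ (whose hypotheses are inherited), $f$ admits a unique rank-$r$ decomposition with parameters $(\vect{v}_k,\vect{w}_k)$ that are generic by dominance of the parametrization map. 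The claim then permits us to solve $\Phi_j((c_{j,k})_k) = g^{(j)}$ for each $j \le s$, yielding an $r$-term decomposition of $(g,f)$ over $\widehat{Y}_{(a_1,\ldots,a_d)}$. Since both sides are closed and irreducible, dense containment upgrades to equality, and the dimension formula follows by combining Part~1 for the truncated scroll with the product structure.

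The main obstacle is the generic surjectivity of $\Phi_j$; although it ultimately reduces to the non-degeneracy of the Segre--Veronese variety in $W\otimes S^{a_j}V$, some care is needed to verify that the parameters $(\vect{v}_k,\vect{w}_k)$ inherited from a general rank-$r$ decomposition of $f$ are themselves sufficiently generic, which follows from dominance of the parametrization map of $\sigma_r(\widehat{Y}_{(a_{s+1},\ldots,a_d)})$.
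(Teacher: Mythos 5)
Your proposal is correct and follows essentially the same route as the paper: Part~1 via the identifiability result of \cref{prop:veronese_scroll_identifiability} (identifiable implies non-defective), and Part~2 by projecting onto the degrees $a_{s+1},\ldots,a_d$, taking a general decomposition there, and solving the lower-degree systems \eqref{eq:lower_terms} using the fact that $r$ general points of the non-degenerate Segre--Veronese $Seg(W\times\nu_{a_j}(V))$ span $S^{a_j}V\otimes W$ once $r > n\dim S^{a_j}V$. The paper phrases this as $\pi^{-1}(U)\subset\sigma_r(\widehat{Y})$ for the canonical projection $\pi$, which is the same argument as your surjectivity claim for $\Phi_j$, and your handling of the genericity of the inherited parameters matches the paper's appeal to a Zariski-open set of decompositions with $\vect{v}_l\otimes\vect{w}_l$ in general position.
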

The proof is based on \cref{prop:polydec_identifiability}, and is contained in \cref{sec:proofs_identifiability}.

It may be easier to look at the dimensions in terms of so-called defects of $\widehat{Y}$, defined as
\[
\delta_r(\var{Y}) \eqdef \exp\dim{\sigma}_r (\widehat{Y})- \dim{\sigma}_r (\widehat{Y}),
\]
where $\delta_r(\widehat{Y})$ is called the defect of ${\sigma}_r (\widehat{Y})$.
Then Proposition~\ref{prop:veronese_scroll_dimensions} can be reformulated as follows.
\begin{proposition}[Proposition~\ref{prop:veronese_scroll_dimensions} reformulated.]
Under the assumptions of Proposition~\ref{prop:veronese_scroll_dimensions}, the defect can the expressed as
\[
\delta_r(\widehat{Y}) = \sum\limits_{j=1}^d
\max(r-n\dim S^{a_j} V,0).
\]
\end{proposition}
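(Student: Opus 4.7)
The plan is to derive the formula purely as bookkeeping from Proposition~\ref{prop:veronese_scroll_dimensions}, since the hard algebro-geometric content (dimensions of the secant varieties) has already been established there. First I would record that $\dim\widehat{Y}=m+n+d-2$ (as $\widehat{Y}=Seg(\widehat{X}_{\vect{a}}\times W)$ and $\dim\widehat{X}_{\vect{a}}=m+d-1$), so that in the range of $r$ considered, $\exp\dim\sigma_r(\widehat{Y})=r(m+n+d-2)$.

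Next I would use the monotonicity $\dim S^{a_1}V \le \dim S^{a_2}V \le \cdots \le \dim S^{a_d}V$ (which follows from $a_1\le a_2\le\cdots\le a_d$) to locate $r$ in the partition of $[0,+\infty)$ determined by the thresholds $n\dim S^{a_j}V$. If $r\le n\dim S^{a_1}V$, then $r\le n\dim S^{a_j}V$ for every $j$, so all the summands $\max(r-n\dim S^{a_j}V,0)$ vanish; and by part~1 of Proposition~\ref{prop:veronese_scroll_dimensions} the secant variety is non-defective, giving $\delta_r(\widehat{Y})=0$. This handles the first case.

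For the remaining case, I would fix the unique $s\in\{1,\dots,d-1\}$ with $n\dim S^{a_s}V<r\le n\dim S^{a_{s+1}}V$. Then for $j\le s$ one has $n\dim S^{a_j}V\le n\dim S^{a_s}V<r$, so $\max(r-n\dim S^{a_j}V,0)=r-n\dim S^{a_j}V$; while for $j\ge s+1$ one has $r\le n\dim S^{a_{s+1}}V\le n\dim S^{a_j}V$, so the corresponding summand is zero. Therefore
\[
\sum_{j=1}^d \max(r-n\dim S^{a_j}V,0) \;=\; sr \;-\; n\sum_{j=1}^s \dim S^{a_j}V.
\]
On the other hand, by part~2 of Proposition~\ref{prop:veronese_scroll_dimensions},
\[
\delta_r(\widehat{Y}) \;=\; r(m+n+d-2) \;-\; \Bigl(r(m+n+d-2-s) + n\sum_{j=1}^s \dim S^{a_j}V\Bigr) \;=\; sr - n\sum_{j=1}^s \dim S^{a_j}V,
\]
and the two expressions coincide.

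There is no real obstacle to overcome: the only risk is a miscount in the bookkeeping at the boundary values $r=n\dim S^{a_j}V$, so I would state explicitly that the conventions used are $a_0:=0$ (i.e.\ the first case corresponds to $s=0$) and that the chain of inequalities $a_{d-1}<a_d$ ensures the threshold $n\dim S^{a_{d}}V$ is strictly above the others, so the case decomposition is well defined within the permitted range $r\le(\lceil r_5(m,n,a_d)\rceil-1)n$.
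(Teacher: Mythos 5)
Your derivation is correct and is essentially the paper's own (implicit) argument: the paper states this reformulation without proof as immediate bookkeeping from Proposition~\ref{prop:veronese_scroll_dimensions}, and your case analysis on the index $s$, using the monotonicity of $\dim S^{a_j}V$ in $j$ and the identity $\sum_{j\le s}(r-n\dim S^{a_j}V)=sr-n\sum_{j\le s}\dim S^{a_j}V$, reproduces exactly that computation. The only caveat is that both you and the paper read $\exp\dim\sigma_r(\widehat{Y})$ as $r(m+n+d-2)$ without the truncation by $\dim(S^{\vect{a}}V\otimes W)$ appearing in the paper's definition of expected dimension; since this is clearly the convention intended in the statement being proved, your argument matches the paper's.
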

\subsection{Generic ranks}\label{sec:generic_results}
In this section,  we consider only the case $n=1$, and $\vect{a} = (1,\ldots,d)$. From \cref{prop:veronese_scroll_dimensions} it follows that the behaviour of the ranks of secant varieties depends only on higher degrees.
As shown by the next lemma, for fixed $d$ and large $m$ everything depends on two higher degrees.
\begin{lemma}\label{lem:SdV_generic_maximal}
Let $d \ge 3$.
\begin{enumerate}
\item For all $m \ge 2$, it holds that $r_1(m,d) < \dim S^{d-1} V$.
\item For all $m > (d-2)(d-1)$ it holds that $\dim S^{d-2} V <  r_1(m,d)$.
\end{enumerate}
\end{lemma}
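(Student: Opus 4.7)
The plan is purely computational: both inequalities reduce to polynomial inequalities in $m$ and $d$ that factor nicely once the binomial coefficients are rewritten using the ratio identity.

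For part 1, I would use the identity
\[
\binom{m+d-1}{d} = \frac{m+d-1}{d}\binom{m+d-2}{d-1}
\]
to rewrite
\[
r_1(m,d) = \frac{\binom{m+d-1}{d}}{m} = \frac{m+d-1}{md}\binom{m+d-2}{d-1} = \frac{m+d-1}{md}\dim S^{d-1}V.
\]
The desired inequality $r_1(m,d) < \dim S^{d-1}V$ is thus equivalent to $m+d-1 < md$, i.e.\ $(m-1)(d-1) > 0$, which is immediate from $m\ge 2$ and $d\ge 3$.

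For part 2, I would apply the identity twice to get
\[
\binom{m+d-1}{d} = \frac{(m+d-1)(m+d-2)}{d(d-1)}\binom{m+d-3}{d-2},
\]
which gives
\[
r_1(m,d) = \frac{(m+d-1)(m+d-2)}{md(d-1)}\dim S^{d-2}V.
\]
So $\dim S^{d-2}V < r_1(m,d)$ is equivalent to $md(d-1) < (m+d-1)(m+d-2)$. Expanding the right-hand side as $m^2 + (2d-3)m + (d-1)(d-2)$ and rearranging, this becomes
\[
m^2 - (d^2-3d+3)m + (d-1)(d-2) > 0.
\]
The key observation is that the roots of this quadratic in $m$ are $m = 1$ and $m = (d-1)(d-2)$: the constant term is $(d-1)(d-2)$ and the sum $1 + (d-1)(d-2) = d^2-3d+3$ matches the linear coefficient. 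Hence the quadratic factors as $(m-1)(m - (d-1)(d-2))$. For $m > (d-1)(d-2)$, and since $(d-1)(d-2) \ge 2 > 1$ when $d\ge 3$, both factors are strictly positive, which completes the proof.

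There is really no serious obstacle here; the only mildly non-obvious step is spotting the factorization of the quadratic in part 2, which follows from inspection of the constant term combined with Vieta's formulas.
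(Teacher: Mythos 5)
Your proof is correct and takes essentially the same approach as the paper: both parts rewrite $r_1(m,d)$ as an explicit ratio against $\binom{m+d-2}{d-1}$ resp.\ $\binom{m+d-3}{d-2}$ and reduce to an elementary inequality in $m$ and $d$. The only difference is that in part 2 you explicitly verify the inequality $md(d-1) < (m+d-1)(m+d-2)$ by factoring the quadratic as $(m-1)\bigl(m-(d-1)(d-2)\bigr)$, a computation the paper leaves to the reader.
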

\begin{proof}$\phantom{x}$

\begin{enumerate}
\item First, for $d \ge 2$ and $m \ge 2$ it holds that $m+d-1 < md$.
Therefore,
\[
r_1(m,d) = 
  \binom{m+d-2}{d-1} \frac{m+d-1}{md} <  \binom{m+d-2}{d-1}.
\]

\item As in the previous item, we have that
\[
\frac{r_1(m,d)}{ \binom{m+d-3}{d-2}} = 
 \frac{(m+d-2) (m+d-1)}{m(d-1)d} >1
\]
The ratio is greater than one since $m \ge2$ and $m > (d-2)(d-1)$.
\end{enumerate}
\end{proof}

From \cref{prop:veronese_scroll_dimensions} and \cref{lem:SdV_generic_maximal}, we have the following immediate corollary.
\begin{corollary}
Under the conditions 1--2 in \cref{lem:SdV_generic_maximal},  we have
\[
r_{gen} (\widehat{X}_{(1,\ldots,d)})= r_{gen} (\widehat{X}_{(d-1,d)}).
\]
\end{corollary}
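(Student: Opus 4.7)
\emph{Proof plan.} The strategy is to apply Proposition~\ref{prop:veronese_scroll_dimensions} to the full Veronese scroll $\widehat{X}_{(1,\ldots,d)}$ at the cut-off $s=d-2$, which splits off all but the top two symmetric summands; once this product structure is available, the equality of generic ranks is immediate.

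First I would verify that the generic rank on both sides sits inside the ``high-rank'' regime where Proposition~\ref{prop:veronese_scroll_dimensions} engages the $s=d-2$ case. By hypothesis, Lemma~\ref{lem:SdV_generic_maximal} gives
\[
\dim S^{d-2}V < r_1(m,d) < \dim S^{d-1}V.
\]
Applying the dimension lower bound from Corollary~\ref{cor:defectivity_r_gen} to $\widehat{X}_{(d-1,d)}$, whose ambient has dimension $\binom{m+d-2}{d-1}+\binom{m+d-1}{d}$ and whose variety dimension is $m+1$, together with the identity $\binom{m+d-1}{d}=\binom{m+d-2}{d-1}\frac{m+d-1}{d}$ and the inequality $m\geq 2$, yields $r_{gen}(\widehat{X}_{(d-1,d)}) \geq r_1(m,d) > \dim S^{d-2}V$. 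The same bound applied to $\widehat{X}_{(1,\ldots,d)}$ places $r_{gen}(\widehat{X}_{(1,\ldots,d)})$ in the same window $\dim S^{d-2}V < r \le \dim S^{d-1}V$ needed to invoke the $s=d-2$ case of Proposition~\ref{prop:veronese_scroll_dimensions}.

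With the hypothesis verified, Proposition~\ref{prop:veronese_scroll_dimensions} (taking $n=1$, $W=\KK$, $\vect{a}=(1,\ldots,d)$, $s=d-2$) produces
\[
\sigma_r\bigl(\widehat{X}_{(1,\ldots,d)}\bigr) \;=\; S^{(1,\ldots,d-2)}V \;\times\; \sigma_r\bigl(\widehat{X}_{(d-1,d)}\bigr),
\]
sitting inside the split ambient space $S^{(1,\ldots,d)}V = S^{(1,\ldots,d-2)}V \oplus S^{(d-1,d)}V$ for every $r$ in the relevant window. Since the first factor already exhausts its summand, $\sigma_r(\widehat{X}_{(1,\ldots,d)})$ fills $S^{(1,\ldots,d)}V$ if and only if $\sigma_r(\widehat{X}_{(d-1,d)})$ fills $S^{(d-1,d)}V$. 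Taking the smallest such $r$ in each case yields $r_{gen}(\widehat{X}_{(1,\ldots,d)}) = r_{gen}(\widehat{X}_{(d-1,d)})$.

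The main technical wrinkle is the auxiliary hypothesis $r \le \lceil r_5(m,1,d)\rceil-1$ required to run Proposition~\ref{prop:veronese_scroll_dimensions}: since $r_5(m,1,d)=r_2(m,d)$ differs from $r_1(m,d)$ by at most one, and since the generic ranks under study are $\Theta(r_1(m,d))$, one needs some care with ceilings and with the four exceptional $(m,d)$ pairs in \eqref{eq:veronese_identifiability}. I would argue the generic case uniformly from the window above, and dispose of any residual boundary cases by comparing $r_{gen}(\widehat{X}_{(d-1,d)})$ with $\lceil r_5(m,1,d)\rceil-1$ directly (in particular, verifying that when equality threatens, $\sigma_r$ still saturates the ambient space on both sides).
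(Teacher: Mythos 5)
Your overall strategy --- split off the low-degree summands via the product structure \eqref{eq:veronese_scroll_defective_secant} and compare the ranks at which the two secant varieties saturate their ambient spaces --- is exactly the paper's route (the result is presented there as an immediate consequence of \cref{prop:veronese_scroll_dimensions} and \cref{lem:SdV_generic_maximal}), and your verification that $r_{gen}(\widehat{X}_{(d-1,d)}) > \dim S^{d-2}V$ via \cref{cor:defectivity_r_gen} is sound. The step that would fail is the one you label a ``technical wrinkle'': \cref{prop:veronese_scroll_dimensions} carries the hypothesis $r \le (\lceil r_5(m,n,a_d)\rceil - 1)n$, and at $r = r_{gen}$ this hypothesis is violated \emph{systematically}, not by a ceiling or by the exceptional pairs. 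Indeed $\lceil r_5(m,1,d)\rceil - 1 < r_1(m,d)$, whereas your own lower bound gives
\[
r_{gen}(\widehat{X}_{(d-1,d)}) \;\ge\; \frac{\binom{m+d-2}{d-1}+\binom{m+d-1}{d}}{m+1} \;=\; r_1(m,d)\cdot\frac{m(m+2d-1)}{(m+1)(m+d-1)} \;>\; r_1(m,d),
\]
with an excess of order $\tfrac{d-1}{m}\,r_1(m,d)$, which is unbounded in $m$. So the proposition is inapplicable for every $r$ between $\lceil r_1(m,d)\rceil$ and $r_{gen}$, and ``disposing of residual boundary cases'' cannot close this.

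The repair is to observe that the only part of \cref{prop:veronese_scroll_dimensions} you actually use --- the product decomposition \eqref{eq:veronese_scroll_defective_secant} --- does not need the identifiability hypothesis at all: its proof only requires that for generic $\vect{v}_1,\dots,\vect{v}_r$ with $r > \dim S^{a_j}V$ the powers $\vect{v}_1^{a_j},\dots,\vect{v}_r^{a_j}$ span $S^{a_j}V$, so that \eqref{eq:lower_terms} is solvable for every prescribed low-degree part. Hence for \emph{every} $r > \dim S^{d-2}V$ one has $\sigma_r(\widehat{X}_{(1,\ldots,d)}) = S^{(1,\ldots,d-2)}V \times \sigma_r(\widehat{X}_{(d-1,d)})$, and the two saturation ranks coincide. (Two smaller points: the upper end of your window, $r \le \dim S^{d-1}V$, is not actually needed --- if $r$ also exceeds $\dim S^{d-1}V$, both sides simply split off $S^{d-1}V$ as well; and for the lower bound on $r_{gen}(\widehat{X}_{(1,\ldots,d)})$ it is cleaner to use monotonicity under the projection onto the degree-$d$ component, giving $r_{gen}(\widehat{X}_{(1,\ldots,d)}) \ge r_{gen}(\nu_d(V)) \ge r_1(m,d) > \dim S^{d-2}V$, since the raw count $\lceil \dim S^{(1,\ldots,d)}V/(m+d-1)\rceil$ is not obviously large enough near the threshold $m = (d-2)(d-1)+1$.)
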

The main result in this subsection is on the bound on generic rank of $\widehat{X}_{(d-1,d)}$.

\begin{proposition}\label{prop:rgen_VS}
Let $d \ge 4$ and $m > 5$. Then
\[
\left\lceil
\frac{  \binom{m+d-2}{d-1} +   \binom{m+d-1}{d}}{m+1}\right\rceil \le r_{gen} (\widehat{X}_{(d-1,d)}) \le \left\lceil \frac{\binom{m+d-2}{d-1} + (m-1)\lceil \frac{\binom{m+d-1}{d}}{m} \rceil}{m} \right\rceil
\]
\end{proposition}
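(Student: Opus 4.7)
The plan is to prove the two inequalities separately.

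\textbf{Lower bound.} I would apply \cref{cor:defectivity_r_gen}. The parameterization $\psi : V \times \KK^{2} \to S^{d-1}V \oplus S^{d}V$ from \eqref{eq:polynomial_defining_map} with $\vect{a}=(d-1,d)$ has one-dimensional generic fibres coming from the scaling $(\vect{v},c_{d-1},c_{d}) \sim (\lambda\vect{v},\lambda^{-(d-1)}c_{d-1},\lambda^{-d}c_{d})$, so $\dim \widehat{X}_{(d-1,d)} = m+1$; since the ambient dimension is $\binom{m+d-2}{d-1}+\binom{m+d-1}{d}$, the inequality \eqref{eq:rank_generic} gives the stated lower bound.

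\textbf{Upper bound via Terracini at a specialization.} Set $N=\binom{m+d-2}{d-1}$, $R=\lceil\binom{m+d-1}{d}/m\rceil$, $s=\lceil(N-R)/m\rceil$; elementary ceiling arithmetic gives $R+s = \lceil(N+(m-1)R)/m\rceil$, so it suffices to show $r_{gen}(\widehat{X}_{(d-1,d)}) \le R+s$. The hypotheses $d\ge 4$ and $m>5$ ensure that neither $(m,d)$ nor $(m,d-1)$ is an Alexander--Hirschowitz exception, so $\nu_d(V)$ and $\nu_{d-1}(V)$ are non-defective. I would take $r=R+s$ specialized points $\vect{p}_{k} = (c_{d-1,k}\vect{v}_{k}^{d-1}, c_{d,k}\vect{v}_{k}^{d}) \in \widehat{X}_{(d-1,d)}$ with $\vect{v}_k,c_{d-1,k}$ generic, $c_{d,k}\ne 0$ for $k\le R$, and $c_{d,k}=0$ for $k>R$, and consider the joint tangent $T = \sum_k T_{\vect{p}_k}\widehat{X}_{(d-1,d)}$. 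Because $\dim T$ is lower semicontinuous in the parameters (being the rank of a parameter-dependent matrix), showing $T = S^{d-1}V\oplus S^{d}V$ at this specialization propagates to a Zariski-open set of decompositions, and \cref{lem:terracini} then forces $\dim\sigma_{R+s}(\widehat{X}_{(d-1,d)})$ to equal the ambient dimension.

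\textbf{Annihilator computation and main obstacle.} Identifying $(\phi_1,\phi_2)\in(S^{d-1}V)^{*}\oplus(S^{d}V)^{*}$ with polynomials of degrees $d-1$ and $d$ via apolarity, inspection of the generators of each $T_{\vect{p}_k}$ shows that $(\phi_1,\phi_2)\in T^{\perp}$ iff (i)~$\phi_2(\vect{v}_k)=0$ for all $k$; (ii)~$\phi_1(\vect{v}_k)=0$ for $k\le R$; (iii)~$\phi_1$ and $\nabla\phi_1$ vanish at $\vect{v}_k$ for $k>R$; and (iv)~$c_{d-1,k}\nabla\phi_1(\vect{v}_k)+c_{d,k}\nabla\phi_2(\vect{v}_k)=0$ for $k\le R$. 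I would first force $\phi_1=0$ in two stages: the AH non-defectivity of $\sigma_s(\nu_{d-1}V)$ implies that condition (iii) cuts out a subspace $U\subset(S^{d-1}V)^{*}$ of codimension $sm$, and then the $R$ generic simple evaluations (ii) drop $\dim U$ by $1$ each, since a nonzero element of the running kernel vanishes only on a proper subvariety of $V$; after $R$ such reductions the kernel is trivial because $R+sm\ge N$. With $\phi_1=0$ in hand, (iv) yields $\nabla\phi_2(\vect{v}_k)=0$ for $k\le R$, making $\phi_2$ have $R$ double zeros at those points, and non-defectivity of $\sigma_R(\nu_d V)$ forces $\phi_2=0$ because $Rm\ge\binom{m+d-1}{d}$. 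The most delicate step is the iterative reduction in stage one, where one must verify carefully that each generic $\vect{v}_k$ indeed drops the kernel dimension by exactly $1$; I anticipate the only real subtlety is keeping track of the Zariski-open conditions on the generic points $\vect{v}_k$, which I would handle by an induction on $k$ using irreducibility of $V$ and the fact that nonzero polynomials do not vanish identically on $V$.
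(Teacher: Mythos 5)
Your proof is correct, and both inequalities land exactly where the paper's do. The lower bound is identical to the paper's (which likewise just cites \cref{cor:defectivity_r_gen} after noting $\dim \widehat{X}_{(d-1,d)}=m+1$). For the upper bound, your underlying specialization is the same as the paper's --- $R=\lceil\binom{m+d-1}{d}/m\rceil$ points carrying a minimal Waring decomposition of the degree-$d$ component, plus $s=\lceil(N-R)/m\rceil$ extra points with vanishing degree-$d$ component --- but the verification is genuinely different in technique. The paper works globally: it fixes a general $p\in S^dV$, exhibits inside the fibre $\pi_d^{-1}(p)\cap\operatorname{Im}\Sigma_r$ an explicit family whose projection to $S^{d-1}V$ has dimension $R+(r-R)m\ge\dim S^{d-1}V$, invoking Alexander--Hirschowitz for $\nu_{d-1}$ to certify that dimension. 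You instead argue infinitesimally: Terracini at the specialized configuration, lower semicontinuity of the rank of the joint differential, and an apolarity computation showing the annihilator of the span of tangent spaces is zero (double points of $\nu_{d-1}$ from the $s$ degenerate summands, simple points from the other $R$, then double points of $\nu_d$ once $\phi_1=0$). The two arguments are dual to one another and consume the same inputs ($d\ge4$, $m>5$ to avoid the AH exceptions, and the counts $R+sm\ge N$, $Rm\ge\binom{m+d-1}{d}$); yours has the small bonus of making the independence of conditions completely explicit at the level of linear functionals, at the cost of the bookkeeping in your ``stage one'' reduction, which you handle correctly. The only cosmetic slip is the omission of the harmless constants $d-1$ and $d$ in your condition (iv); they do not affect the conclusion.
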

The lower bound just follows from \cref{cor:defectivity_r_gen}, the whole proof is given in \cref{sec:gen_ranks_proofs}.
For large $m$, the lower bound is exact.

\begin{proposition}\label{prop:rgen_VS_exact}
Let $d \ge 4$ and $m > (d-1)^2$. then
\[
r_{gen} = \left\lceil
\frac{  \binom{m+d-2}{d-1} +   \binom{m+d-1}{d}}{m+1}\right\rceil.
\]
\end{proposition}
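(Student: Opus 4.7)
The plan is to match the bounds on $r_{gen}$ using a projection argument that connects $\widehat{X}_{(d-1,d)}$ to a Veronese variety in one more dimension. Let $L=\binom{m+d-2}{d-1}$, $H=\binom{m+d-1}{d}$, and $R^{**}:=\lceil(L+H)/(m+1)\rceil$. The lower bound $r_{gen}\ge R^{**}$ follows immediately from \cref{cor:defectivity_r_gen}, using that $\dim\widehat{X}_{(d-1,d)}=m+1$ (parameters $(\vect{v},c_1,c_2)$ modulo the scaling $(\vect{v},c_1,c_2)\sim(\lambda\vect{v},\lambda^{-(d-1)}c_1,\lambda^{-d}c_2)$) and that $\dim(S^{d-1}V\oplus S^{d}V)=L+H$.

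For the matching upper bound, enlarge $V$ to $V':=V\oplus\KK\vect{e}$ of dimension $m+1$ and define the linear surjection $\pi\colon S^{d}V'\twoheadrightarrow S^{d-1}V\oplus S^{d}V$ that keeps the components of $\vect{e}$-degree $1$ and $0$ (after identifying $\vect{e}\cdot S^{d-1}V\cong S^{d-1}V$). The binomial expansion $(\vect{v}+t\vect{e})^d=\sum_{k=0}^{d}\binom{d}{k}t^k\vect{e}^k\vect{v}^{d-k}$ yields $\pi\bigl((\vect{v}+t\vect{e})^d\bigr)=(dt\vect{v}^{d-1},\vect{v}^d)$; together with the rescaling $\vect{v}'\mapsto\lambda\vect{v}'$ this realizes every $(c_1\vect{v}^{d-1},c_2\vect{v}^d)$ with $c_2\ne 0$, and since both $\overline{\pi(\nu_d(V'))}$ and $\widehat{X}_{(d-1,d)}$ are irreducible of dimension $m+1$ they coincide. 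Pascal's identity $\binom{m+d}{d}=L+H+\binom{m+d-2}{d-2}$ then gives $\dim\ker\pi=\binom{m+d-2}{d-2}$, and consequently $\sigma_r(\widehat{X}_{(d-1,d)})=\overline{\pi(\sigma_r(\nu_d(V')))}$.

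Now take $r=R^{**}$. The hypotheses $d\ge 4$ and $m>(d-1)^2$ make the pair $(\dim V',d)=(m+1,d)$ avoid the Alexander--Hirschowitz exceptional cases $\{(3,3),(4,3),(4,5),(4,6)\}$, so $\nu_d(V')$ is non-defective and $Y:=\sigma_{R^{**}}(\nu_d(V'))$ has dimension $R^{**}(m+1)\ge L+H$. The decisive input is then a Kleiman-type transversality applied to the translation action of $S^{d}V'$ on itself: for generic $g\in S^{d}V'$ the translate $g+\ker\pi$ meets $Y$ in the expected dimension
\[
R^{**}(m+1)+\binom{m+d-2}{d-2}-\binom{m+d}{d}=R^{**}(m+1)-(L+H),
\]
which forces $Y+\ker\pi=S^{d}V'$ and hence $\pi(Y)$ to be dense in $S^{d-1}V\oplus S^{d}V$. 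This yields $\sigma_{R^{**}}(\widehat{X}_{(d-1,d)})=S^{d-1}V\oplus S^{d}V$ and thus the upper bound $r_{gen}\le R^{**}$.

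The principal obstacle is that transversality step: since $\ker\pi=\vect{e}^2\cdot S^{d-2}V'$ is a very specific linear subspace of $S^dV'$ determined by the distinguished vector $\vect{e}$, one must rule out the secant variety $Y$ sitting in special position with respect to it. In characteristic zero, Kleiman's transversality theorem for transitive group actions closes this gap cleanly; in a more elementary proof one would instead exhibit one admissible configuration by a Horace-type degeneration---specializing some of the underlying points of $\nu_d(V')$ to the hyperplane $V\subset V'$ and invoking AH non-defectivity of $\nu_d(V)$ and $\nu_{d-1}(V)$---and the numerical assumption $m>(d-1)^2$ is precisely what makes the balancing between on-hyperplane and off-hyperplane points admissible inside the AH-non-exceptional range. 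Once this transversality is granted, combining the two bounds gives $r_{gen}(\widehat{X}_{(d-1,d)})=R^{**}$, as claimed.
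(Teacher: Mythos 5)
Your reduction is sound up to the last step, and it is essentially the paper's reduction in different clothing: the paper also passes to $\nu_d(V\oplus\CC)$, realized inside $S^{(0,\ldots,d)}V$ as $\widehat{Z}_d=\{(c^d,c^{d-1}\vect{v},\ldots,\vect{v}^d)\}$, and studies the linear projection onto the two top graded pieces, whose kernel is exactly your $\vect{e}^2\cdot S^{d-2}V'\cong S^{(0,\ldots,d-2)}V$. The lower bound, the identification $\overline{\pi(\nu_d(V'))}=\widehat{X}_{(d-1,d)}$, the computation $\dim\ker\pi=\binom{m+d-2}{d-2}$, and the Alexander--Hirschowitz input are all correct. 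The problem is the transversality step, and it is not a technicality you can outsource to Kleiman. Kleiman's theorem for the translation action says that for \emph{generic} $g$ in the ambient space $S^dV'$ the translate $g+\ker\pi$ meets $Y=\sigma_r(\nu_d(V'))$ either in the expected dimension \emph{or not at all}; the set of $g$ whose translate actually meets $Y$ is precisely $Y+\ker\pi$, and deciding whether $Y+\ker\pi$ is dense is literally equivalent to deciding whether $\pi(Y)$ is dense --- the statement you are trying to prove. Equivalently, the fibers of $\pi|_Y$ are the sets $(y+\ker\pi)\cap Y$ for $y\in Y$, and points of the proper subvariety $Y$ are not generic points of $S^dV'$, so Kleiman's genericity never applies to them. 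Translating a fixed linear subspace also cannot change its direction, so no group-theoretic general position argument rules out $Y$ sitting in special position with respect to $\ker\pi$.

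That this is a genuine obstruction, not a formality, is shown by the case $m=2$, $d=4$ (excluded by the hypothesis $m>(d-1)^2$, but every step of your argument except the transversality goes through there): $R^{**}=\lceil 9/3\rceil=3$, $\sigma_3(\nu_4(\CC^3))$ is non-defective of dimension $9$, and yet $r_{gen}(\widehat{X}_{(3,4)})=4$, because a general binary quartic has a \emph{unique} rank-$3$ Waring decomposition and the three cubes $\vect{v}_1^3,\vect{v}_2^3,\vect{v}_3^3$ span only a hyperplane of $S^3\CC^2$. So your argument, if valid, would prove a false statement; the hypothesis $m>(d-1)^2$ must enter the proof of transversality itself, and in your write-up it never does (your Horace sketch gestures at this but is not carried out). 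The paper closes this gap by a different mechanism: it shows that for $r\ge\dim S^{(0,\ldots,d-2)}V=\binom{m+d-2}{d-2}$ the lower-degree components $(c_i^d,\ldots,c_i^2\vect{v}_i^{d-2})$ of $r$ general summands already span all of $S^{(0,\ldots,d-2)}V$, which is what forces the fibers of $\pi$ restricted to $\sigma_r(\widehat{Z}_d)$ to be full and yields $\sigma_r(\widehat{Z}_d)=S^{(0,\ldots,d-2)}V\oplus\sigma_r(\widehat{X}_{(d-1,d)})$; the hypothesis $m>(d-1)^2$ is exactly what guarantees $R^{**}\ge\binom{m+d-2}{d-2}$ so that this spanning argument is available at $r=R^{**}$. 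Some version of that spanning (or an explicit Horace-type specialization) is what your proof is missing.
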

In fact from the proof of Proposition~\ref{prop:rgen_VS_exact}, we can also obtain
\begin{proposition}\label{prop:Xd-1driden}
When $d \ge 4$, $m > (d-1)^2$, and $r < r_{gen}$, $\widehat{X}_{(d-1,d)}$ is $r$-identifiable.
\end{proposition}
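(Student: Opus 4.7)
The plan is to obtain \cref{prop:Xd-1driden} as a byproduct of the constructive argument used in the proof of \cref{prop:rgen_VS_exact}, by upgrading non-defectivity to non-weak-defectivity and invoking the Chiantini-Ciliberto mechanism recalled after \cref{lem:terracini}. More precisely, the proof of \cref{prop:rgen_VS_exact} already yields, via Terracini's lemma, that for $r<r_{gen}$ and $r$ general points $p_i = (c_{d-1,i}\vect{v}_i^{d-1}, c_{d,i}\vect{v}_i^d) \in \widehat{X}_{(d-1,d)}$, the tangent spaces $T_{p_i}\widehat{X}_{(d-1,d)}$ are in direct sum and their span has the expected dimension $r(m+1)$. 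What remains is a uniqueness statement: show that any hyperplane $H\subset S^{d-1}V\oplus S^d V$ containing $\sum_i T_{p_i}\widehat{X}_{(d-1,d)}$ is tangent to $\widehat{X}_{(d-1,d)}$ only at $p_1,\dots,p_r$, i.e., that $\widehat{X}_{(d-1,d)}$ is not $r$-weakly defective.

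The key calculation is a translation into apolarity. The tangent space at $p=(c_{d-1}\vect{v}^{d-1}, c_d\vect{v}^d)$ is the affine span of $(\vect{v}^{d-1},0)$, $(0,\vect{v}^d)$ and the ``directional'' vectors $((d-1)c_{d-1}\vect{v}^{d-2}\vect{e},\, d\,c_d\vect{v}^{d-1}\vect{e})$ for $\vect{e}\in V$. Dualising, a hyperplane $H$ containing each $T_{p_i}\widehat{X}_{(d-1,d)}$ corresponds to a pair $(g_{d-1},g_d)\in S^{d-1}V^*\oplus S^d V^*$ which is apolar to each $[\vect{v}_i]$ with a double-point multiplicity condition. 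The contact locus of $H$ with $\widehat{X}_{(d-1,d)}$ is then the subscheme of $\PP V^*$ cut out by these combined apolarity conditions, and non-weak-defectivity amounts to showing this scheme is reduced of length exactly $r$ for a generic such $H$. The stacked catalecticant $S(g)=[C_{g_{d-1}}\mid C_{g_d}]$ from \cref{prop:determinantal} gives the matrix-theoretic form of the system.

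The main obstacle is the contact-locus estimate. With only the degree-$d$ form $g_d$ available, one can control identifiability only up to $\lceil r_1(m,d)\rceil-1$ (this is essentially what \cref{prop:veronese_scroll_identifiability} delivers), which is strictly smaller than $r_{gen}-1$ in the regime of interest. The extra apolarity conditions coming from $g_{d-1}$ are thus crucial, and the quadratic hypothesis $m>(d-1)^2$ is used precisely to ensure their independence: by the same dimension count that makes $\dim S^{d-1}V$ dominate $r$ in the proof of \cref{prop:rgen_VS_exact}, the degree-$(d-1)$ apolar conditions, after saturating the degree-$d$ ones, impose independent linear constraints at any hypothetical extra tangency point. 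A Cayley-Bacharach style argument, parallel to the one used to establish non-defectivity, then forces the contact scheme to coincide with $\{[\vect{v}_1],\dots,[\vect{v}_r]\}$. This gives non-weak-defectivity and, combined with non-defectivity, yields $r$-identifiability.
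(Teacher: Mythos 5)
Your proposal takes a genuinely different route from the paper, and the route has a gap at exactly the step that matters. The two claims carrying all the weight --- that the degree-$(d-1)$ apolarity conditions ``impose independent linear constraints at any hypothetical extra tangency point,'' and that ``a Cayley--Bacharach style argument \ldots forces the contact scheme to coincide with $\{[\vect{v}_1],\dots,[\vect{v}_r]\}$'' --- are asserted, not proved. Bounding the contact locus of a general tangent hyperplane \emph{is} the content of a non-weak-defectivity proof (this is why even the Veronese case required the separate works behind \cref{thm:veronese_wd}), and there is nothing in this paper ``parallel'' to adapt: the dimension statements for $\widehat{X}_{(d-1,d)}$ are obtained from Alexander--Hirschowitz after homogenization, not from any contact-locus analysis. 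Your starting point is also unjustified: the proof of \cref{prop:rgen_VS_exact} does \emph{not} show that the tangent spaces at $r$ general points span a space of the expected dimension $r(m+1)$ for all $r<r_{gen}$. The new content of \cref{prop:Xd-1driden} lies precisely in the range $r_1(m,d)\le r<r_{gen}(\widehat{X}_{(d-1,d)})$ (nonempty, since $r_{gen}(\widehat{X}_{(d-1,d)})>r_1(m,d)$), where $\pi_d$ already maps $\sigma_r(\widehat{X}_{(d-1,d)})$ onto all of $S^dV$; the fiber analysis in \cref{prop:rgen_VS} indicates that $\sigma_r(\widehat{X}_{(d-1,d)})$ can fail to reach dimension $r(m+1)$ there, and a defective $r$-th secant variety automatically makes the variety $r$-weakly defective (Chiantini--Ciliberto). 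So the statement you set out to prove may simply be false exactly where it is needed; non-weak-defectivity is sufficient but not necessary for identifiability, and identifiability in that range has to come from somewhere else.

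The paper's proof is a short descent that sidesteps all of this. By \cref{eq:homogenized_variety}, $\widehat{Z}_d\cong\nu_d(V\oplus\CC)$ is a Veronese variety in $m+1$ variables, hence $r$-identifiable for $r<\binom{m+d}{d}/(m+1)$ by \cref{thm:veronese_identifiability} (the exceptional triples are excluded because $m>(d-1)^2\ge 9$). A point of $\widehat{X}_{(d-1,d)}$ with nonzero degree-$d$ component lifts uniquely to $\widehat{Z}_d$, since the components of degree $\le d-2$ of a lift are determined by the degree-$(d-1)$ and degree-$d$ components; two distinct rank-$r$ decompositions of a general point of $\sigma_r(\widehat{X}_{(d-1,d)})$ would therefore lift to two distinct rank-$r$ decompositions upstairs, contradicting identifiability of $\widehat{Z}_d$, and $r<r_{gen}(\widehat{X}_{(d-1,d)})$ forces $r<\binom{m+d}{d}/(m+1)$. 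To salvage your approach you would need to actually carry out the contact-locus bound, and before that verify that non-weak-defectivity even holds up to $r_{gen}-1$; the descent through $\widehat{Z}_d$ is both shorter and avoids that risk.
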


As a corollary of \cref{prop:rgen_VS_exact} and \cref{thm:r_gen_r_max} we obtain the following bound on $r_{max}$ for polynomial decomposition \cref{eq:decoupling_one_output}.
\begin{corollary}\label{cor:X1_maximal}
Let $\KK = \CC$ or $\RR$, and fix $m$ and $d$ such that $d \ge 4$ and $m > (d-1)^2$. Then the maximal rank for the decomposition \cref{eq:decoupling_one_output} is bounded by
\begin{equation}\label{eq:bound_rmax_ours}
 r_{max} \le 2 \cdot \left\lceil
\frac{  \binom{m+d-2}{d-1} +   \binom{m+d-1}{d}}{m+1}\right\rceil.
\end{equation}
\end{corollary}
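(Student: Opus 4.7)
The proof is a short three-step chain that combines results already established in the paper, so the task is essentially bookkeeping of the hypotheses; I expect no substantive obstacle, only verification that the numerical hypothesis $m > (d-1)^2$ is strong enough to invoke each prior result.

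The plan is as follows. First, I would observe that the decomposition \eqref{eq:decoupling_one_output} is exactly the $X$-rank decomposition for $\widehat{X}_{(1,\ldots,d)}$, so $r_{max}$ for \eqref{eq:decoupling_one_output} coincides with the maximal $X$-rank for this variety. Next, I would invoke the corollary following \cref{lem:SdV_generic_maximal}, which requires only $m > (d-2)(d-1)$ and $d \ge 3$ (both comfortably implied by $m > (d-1)^2$ and $d \ge 4$), to reduce the generic rank computation to the two-degree scroll:
\[
r_{gen}(\widehat{X}_{(1,\ldots,d)}) \;=\; r_{gen}(\widehat{X}_{(d-1,d)}).
\]
Then \cref{prop:rgen_VS_exact} applies under exactly the hypothesis $d \ge 4$, $m > (d-1)^2$ and yields the closed form
\[
r_{gen}(\widehat{X}_{(d-1,d)}) \;=\; \left\lceil \frac{\binom{m+d-2}{d-1} + \binom{m+d-1}{d}}{m+1} \right\rceil.
\]

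Finally, I would apply \cref{thm:r_gen_r_max} to convert the generic rank bound into a maximal rank bound. Over $\CC$ this is immediate: $r_{max} \le 2 r_{gen}$. Over $\RR$, \cref{thm:r_gen_r_max} gives $r_{max} \le 2 r_{typ,min}$, and the subsequent theorem (relating real typical and complex generic ranks) supplies $r_{typ,min}(\widehat{X}_{\RR}) = r_{gen}(\widehat{X}_{\CC})$; this requires the variety to satisfy \crefrange{as:scale_invariant}{as:smooth_real_point}, which has already been noted for Veronese scrolls in \cref{sec:algebraic}. Substituting the closed-form expression above then yields \eqref{eq:bound_rmax_ours} in both fields simultaneously, completing the proof.
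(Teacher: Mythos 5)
Your proposal is correct and follows essentially the same route as the paper, which derives \cref{cor:X1_maximal} directly from \cref{prop:rgen_VS_exact} and \cref{thm:r_gen_r_max}, with the reduction $r_{gen}(\widehat{X}_{(1,\ldots,d)})=r_{gen}(\widehat{X}_{(d-1,d)})$ supplied by the corollary of \cref{lem:SdV_generic_maximal} (whose hypothesis $m>(d-2)(d-1)$ is indeed implied by $m>(d-1)^2$). Your handling of the real case via $r_{max}\le 2r_{typ,min}$ and $r_{typ,min}(\widehat{X}_{\RR})=r_{gen}(\widehat{X}_{\CC})$ is exactly what the paper relies on implicitly.
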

The bound in Corollary~\ref{cor:X1_maximal} implies that
\begin{equation}\label{eq:bound_rmax_ours_simplified}
r_{max} \le  
2 \left\lceil \binom{m+d-2}{d-1} \frac{m+2d-1}{(m+1)d}  \right\rceil \le \frac{2}{d}
\binom{m+d-2}{d-1} \left(1+ \frac{2(d-1)}{(m+1)}\right).
\end{equation}
Hence, the bound \eqref{eq:bound_rmax_ours_simplified} is  better than
\eqref{eq:bound_rmax_BBS}  if  $m>8$,  and  the ratio  between the bounds  \eqref{eq:bound_rmax_ours_simplified} and \eqref{eq:bound_rmax_BBS} approaches $\frac{2}{d}$ asymptotically as $m \to \infty$.

\section{Proofs}
\subsection{Basic results on generic uniqueness}\label{sec:basic_generic_equivalence}

\begin{proof}[Proof of \cref{lem:equiv_definition}]
Let
\[
\operatorname{Sec}_r^{\circ}(\widehat{X}) \eqdef \{(\vect{p}_1, \dots, \vect{p}_r, \vect{v})\in \widehat{X}^{\times r}\times V \colon \vect{p}_1, \dots, \vect{p}_r \in \widehat{X}, \vect{v} = \vect{p}_1 + \cdots + \vect{p}_r\},
\]
and let $\pi_1 \colon \widehat{X}^{\times r}\times V \to \widehat{X}^{\times r}$ and $\pi_2 \colon \widehat{X}^{\times r}\times V \to V$ be the projections. 
Observe that ${\sigma}_r(\widehat{X})$ is the Zariski closure of $\pi_2(\operatorname{Sec}_r^{\circ}(\widehat{X}))$, and $\pi_1 \colon \operatorname{Sec}_r^{\circ}(\widehat{X}) \to \widehat{X}^{\times r}$ is an isomorphism.

Then that $\sigma_r(\widehat{X})$ is identifiable implies $\pi_2 \colon \operatorname{Sec}_r^{\circ}(\widehat{X}) \to {\sigma}_r(\widehat{X})$ is birational, and thus the model $\widehat{X}$ is $r$-identifiable in the sense of \cref{eq:uniqueness_X}. On the other hand, if the model $\widehat{X}$ is $r$-identifiable in the sense of \cref{eq:uniqueness_X}, the cardinality of $\pi_2^{-1}\pi_2(p)$ is $1$ for a general $p \in \operatorname{Sec}_r^{\circ}(\widehat{X})$. Since $\pi_2(\operatorname{Sec}_r^{\circ}(\widehat{X}))$ contains a Zariski dense open subset of ${\sigma}_r(\widehat{X})$, then $\pi_2 \colon \operatorname{Sec}_r^{\circ}(\widehat{X}) \to {\sigma}_r(\widehat{X})$ is birational, which implies ${\sigma}_r(\widehat{X})$ is identifiable.
\end{proof}

\begin{proof}[Proof of \cref{cor:param_space}]
First, we consider the case $\KK = \CC$. 
By \cite[Exercise~II~3.22]{HartshorneAG77}, each $\vect{z}_i$ is general in $\CC^{M}$ if and only if $\xparam(\vect{z}_i)$ is general in $\widehat{X}$. Then the statement follows from Lemma~\ref{lem:equiv_definition}.

Next, we prove the statements in the case $\KK = \RR$, using  basic properties of semialgebraic sets. For convenience, we introduce the polynomial map $\mathscr{X}_r:  (\KK^{M})^{\times r} \to \ambspace$: 
\[
\mathscr{X}_r((\vect{z}_1,\ldots,\vect{z}_r)) = \xparam(\vect{z}_1) + \cdots + \xparam(\vect{z}_r).
\]
For the set $Y$ defined in \cref{eq:set_param_nonunique}, define $P = \mathscr{X}_r(Y)$.
The set $\Sigma_{r,\widehat{\set{X}}}$ is semialgebraic and denote its dimension by $d_1$.
The sets $Y$ and $P$ are also semialgebraic.

\fbox{``only if''} 
In this case, $\dim Y <  Mr$, and we need to prove that $\dim P < d_1$.
Suppose $\dim P = d_1$. Hence, there is an open ball $B \subset \ambspace$ such that $B \cap P = B \cap \Sigma_{r,\widehat{\set{X}}}$.
By continuity of the map $\mathscr{X}_r$, we have that $\mathscr{X}^{-1}_r(B) \subset Y$ is open, and hence $\dim \mathscr{X}^{-1}_r(B) = Mr$, hence a contradiction.

\fbox{``if''} 
In this case, we are given that $\dim P < d_1$ and we need to prove that $\dim Y < Mr$.
Suppose that it is not the case. 
By semialgebraic version of the Sard's theorem \cite[Lemma 2.1]{Qi.etal16SJMAA-Semialgebraic}, there exists an open ball $U \subset Y$ such that the rank of the Jacobian $J_{\mathscr{X}_r}$ is maximal (equal to $d_1$) on $U$. 
That implies that $\dim {\mathscr{X}_r (U)} = d_1$, which leads to a contradiction since ${\mathscr{X}_r (U)} \subset P$.
\end{proof}

In order to get results on identifiability of some Segre-Veronese varieties, we use a lemma that is a weaker version of the general result from \cite{Bocci.Chiantini.Ottaviani14-identifiability}.

\begin{lemma}[Corollary of {\cite[Lemma~3.1, Corollary~3.3]{Bocci.Chiantini.Ottaviani14-identifiability}.}]\label{lem:segre_identifiability}
Let $\var{X} \subset {\PP}A$  is a smooth non-degenerate projective variety, and $W$ be a vector space.
Let $\var{Y} = Seg(\var{X} \times {\PP}W)$ be the Segre embedding (such that $\dim({\var{Y}}) =  \dim({\var{X}}) + \dim(W) -1$).
If $\var{X}$ is not $r$-weakly defective, and
\begin{equation}\label{eq:condition_Segre_identifiability}
r(\dim(\var{Y})+1) < \dim(A),
\end{equation}
then $\var{Y}$ is $(r\cdot\dim(W))$-identifiable.
\end{lemma}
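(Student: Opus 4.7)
The plan is to specialize \cite[Lemma~3.1 and Corollary~3.3]{Bocci.Chiantini.Ottaviani14-identifiability} directly, and verify that the two hypotheses of our statement are precisely those required by the BCO machinery. Recall that their results assert the following general principle: if $\var{X} \subset \PP A$ is a smooth, non-degenerate, irreducible projective variety which is not $r$-weakly defective, and a dimensional non-filling condition on $\sigma_r(\var{X})$ is satisfied, then the Segre embedding $Seg(\var{X} \times \PP W)$ is $(r\cdot \dim W)$-identifiable. The $\PP W$-factor contributes an automatic ``slicing'' argument: once a generic sum of $r$ points $p_1,\ldots,p_r$ of $\var{X}$ is identifiable, any tensoring with vectors $w_{i,j} \in W$ can be disentangled via the weak non-defectivity of $\var{X}$ applied slice by slice.

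The first hypothesis---non $r$-weak-defectivity of $\var{X}$---is assumed outright. For the second, use the Segre dimension formula $\dim(\var{Y})+1 = \dim(\var{X}) + \dim(W)$ to rewrite the assumption \eqref{eq:condition_Segre_identifiability} as
\[
r\cdot\dim(W)\cdot\bigl(\dim(\var{X})+\dim(W)\bigr) < \dim(A)\cdot\dim(W).
\]
The left-hand side is the expected projective dimension of $\sigma_{r\cdot\dim(W)}(\var{Y})$ in $\PP(A\otimes W)$, and the right-hand side is (up to the usual $\pm 1$) the ambient dimension; hence this is precisely the non-filling condition needed to rule out trivial failure of identifiability (where the relevant secant variety equals the full ambient space). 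With both hypotheses translated, the conclusion that $\var{Y}$ is $(r\cdot\dim(W))$-identifiable follows directly from \cite[Corollary~3.3]{Bocci.Chiantini.Ottaviani14-identifiability}.

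The main (minor) obstacle is the bookkeeping between the affine-cone conventions used throughout this paper and the projective conventions used in BCO, in particular the $\pm 1$ shifts in dimensions. Beyond that, the argument is a direct invocation and requires no new ideas; all the substantive content lies in the BCO proof, which supplies both the Terracini-type tangent analysis and the inductive weak-defectivity argument that makes the Segre product inherit identifiability from $\var{X}$.
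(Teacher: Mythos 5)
Your proposal takes exactly the route the paper does: the paper offers no independent proof of this lemma, presenting it purely as a specialization of \cite[Lemma~3.1, Corollary~3.3]{Bocci.Chiantini.Ottaviani14-identifiability}, which is precisely your strategy. Your bookkeeping — using $\dim(\var{Y})+1=\dim(\var{X})+\dim(W)$ and noting that condition \eqref{eq:condition_Segre_identifiability}, after multiplying through by $\dim(W)$, is the expected-dimension non-filling condition for $\sigma_{r\cdot\dim(W)}(\var{Y})$ in $\PP(A\otimes W)$ — is a correct verification that the hypotheses line up, so nothing further is needed.
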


\begin{proof}[Proof of \cref{lem:segre_veronese_ident}]
For the case $n=1$, this is just Theorem~\ref{thm:veronese_identifiability}.
Now we consider $n>1$, and check the conditions of Lemma~\ref{lem:segre_identifiability}. In this case, the condition \eqref{eq:condition_Segre_identifiability} is equivalent to $r < r_4(m,n,d)$.
Since ${r_5(m,n,d)} = {\min (r_4(m,n,d), r_3(m,d))}$, the proof is complete.
\end{proof}

\subsection{Defining equations of Veronese scrolls}\label{sec:defining_equations}
\begin{proof}[Proof of \cref{prop:determinantal}]
$\boxed{\Rightarrow}$ This direction is evident. 
In this case $(f^{(1)},\ldots,f^{(d)}) = (c_1 \vect{v}^{a_1},\ldots,c_d \vect{v}^{a_d})$.
Since each $f_k$ is rank-one,
by \cite[Thm. 1.28]{Iarobbino.Kanev99-Power} each catalecticant matrix $C_{f^{(k)}}$ has rank $\le 1$.
Moreover the column space of each rank-one $C_{f^{(1)}}$ is spanned by the vector $\vect{v}$, therefore the column space of ${S(f)}$ is spanned by $\vect{v}$, and its rank does not exceed $1$.

$\boxed{\Leftarrow}$ Now consider ${S(f)}$ with rank $1$ (the case of rank $0$ is obvious). 
Define as $\vect{v}$ the vector that spans the column space of ${S(f)}$.
Since each of the matrices $C_{f^{(k)}}$ has rank $\le 1$,
from \cite[Thm. 1.28]{Iarobbino.Kanev99-Power} we have that $f = (f_1,\ldots,f_d) = (c_1 \vect{v}_1^{a_1},\ldots,c_d \vect{v}_d^{a_d})$.
But, from the apolarity \cite[Ch. 1]{Iarobbino.Kanev99-Power}, \cite{EhreR93:ejc}, all the vectors $\vect{v}_k$ must be collinear to $\vect{v}$.
Therefore $f \in \widehat{\set{X}}_{\vect{a},V}$.
\end{proof}

 Since $\widehat{X}_{\vect{a}, V}$ is invariant under the general linear group $\operatorname{GL}(V)$, each degree-$k$ component of the ideal $I(\sigma_r(\widehat{X}_{\vect{a}, V}))$, denoted by $I_k(\sigma_r(\widehat{X}_{\vect{a}, V}))$, in $S^k(S^{a_1}V \oplus \cdots \oplus S^{a_d}V)$ is a representation of $\operatorname{GL}(V)$. For any $V$,
\[
S^k(S^{a_1}V \oplus \cdots \oplus S^{a_d}V) = \bigoplus_{l_1 + \cdots + l_d = k} S^{l_1}(S^{a_1} V) \otimes \cdots \otimes S^{l_d}(S^{a_d} V),
\]
which is isomorphic to a direct sum of some irreducible representations $S_{\mu} V$ of $\operatorname{GL}(V)$, where $\mu$ is a partition of $l_1a_1 + \cdots + l_da_d$. Therefore, $I_k(\sigma_r(\widehat{X}_{\vect{a}, V}))$ is isomorphic to a direct sum of some $S_{\mu} V$'s. Let $S_{\overline{\mu}} V$ denote a special realization of $S_{\mu} V$ in $S^k(S^{a_1}V \oplus \cdots \oplus S^{a_d}V)$, see for example \cite[Section~6]{Landsberg12-Tensors} for more details. Similar to \cite[Proposition~4.4]{LandsbergManivel04FCM} we have

\begin{proposition}\label{prop:inheritance}
Given vector spaces $V, W$ with $r \le \dim V \le \dim W$ and $\dim V \ge 2$, then $S_{\overline{\pi}} V \subset I_k(\sigma_r(\widehat{X}_{\vect{a}, V}))$ if and only if $S_{\overline{\pi}} W \subset I_k(\sigma_r(\widehat{X}_{\vect{a}, W}))$.
\end{proposition}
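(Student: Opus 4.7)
The plan is to adapt the standard inheritance argument of Landsberg--Manivel \cite[Proposition~4.4]{LandsbergManivel04FCM} from the Segre setting to Veronese scrolls, using two natural maps induced by any choice of a direct sum decomposition $W = V \oplus V'$: the projection $\pi \colon W \twoheadrightarrow V$ and the inclusion $\iota \colon V \hookrightarrow W$.

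First I would establish the geometric compatibility with our variety. The map $\pi$ induces $\pi \colon S^{\vect{a}} W \to S^{\vect{a}} V$, and by construction it sends $\psi(\vect{w}, \vect{c}) = (c_1 \vect{w}^{a_1}, \ldots, c_d \vect{w}^{a_d})$ to $\psi(\pi(\vect{w}), \vect{c})$, so $\pi(\widehat{X}_{\vect{a}, W}) \subseteq \widehat{X}_{\vect{a}, V}$ and hence $\pi(\sigma_r(\widehat{X}_{\vect{a}, W})) \subseteq \sigma_r(\widehat{X}_{\vect{a}, V})$. Dually, the pullback of polynomials $\pi^{*}$ carries $I(\sigma_r(\widehat{X}_{\vect{a}, V}))$ into $I(\sigma_r(\widehat{X}_{\vect{a}, W}))$, since for $f$ vanishing on $\sigma_r(\widehat{X}_{\vect{a}, V})$ the composition $f \circ \pi$ vanishes on the preimage, which contains $\sigma_r(\widehat{X}_{\vect{a}, W})$. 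Symmetrically, $\iota$ gives $\sigma_r(\widehat{X}_{\vect{a}, V}) \subseteq \sigma_r(\widehat{X}_{\vect{a}, W})$, so the restriction map $\iota^{*}$ sends $I(\sigma_r(\widehat{X}_{\vect{a}, W}))$ into $I(\sigma_r(\widehat{X}_{\vect{a}, V}))$. By naturality of the Schur functor and the coherent choice of realization $\overline{\pi}$ (as in \cite[\S 6]{Landsberg12-Tensors}), the induced maps on $S^{k}(S^{\vect{a}}\,\cdot\,)$ restrict to a $GL(V)$-equivariant injection $\pi^{*} \colon S_{\overline{\pi}} V \hookrightarrow S_{\overline{\pi}} W$ and a $GL(V)$-equivariant surjection $\iota^{*} \colon S_{\overline{\pi}} W \twoheadrightarrow S_{\overline{\pi}} V$; both are nonzero provided $\ell(\overline{\pi}) \le \dim V$ (otherwise the statement is vacuous since $S_{\overline{\pi}} V = 0$).

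With these tools, the $(\Leftarrow)$ direction is immediate: applying $\iota^{*}$ to the containment $S_{\overline{\pi}} W \subseteq I_k(\sigma_r(\widehat{X}_{\vect{a}, W}))$ yields $S_{\overline{\pi}} V = \iota^{*}(S_{\overline{\pi}} W) \subseteq I_k(\sigma_r(\widehat{X}_{\vect{a}, V}))$. For $(\Rightarrow)$, applying $\pi^{*}$ to $S_{\overline{\pi}} V \subseteq I_k(\sigma_r(\widehat{X}_{\vect{a}, V}))$ places the nonzero subspace $\pi^{*}(S_{\overline{\pi}} V)$ inside the intersection $S_{\overline{\pi}} W \cap I_k(\sigma_r(\widehat{X}_{\vect{a}, W}))$; this intersection is $GL(W)$-invariant, and since $S_{\overline{\pi}} W$ is an irreducible $GL(W)$-representation, it must equal all of $S_{\overline{\pi}} W$, which gives the desired containment.

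The main obstacle is the bookkeeping step of verifying that the maps $\pi^{*}$ and $\iota^{*}$ on the ambient spaces $S^{k}(S^{\vect{a}}\,\cdot\,)$ really do send the chosen realization $S_{\overline{\pi}} V$ into (respectively onto) the chosen realization $S_{\overline{\pi}} W$, and not into some different isotypic copy of the same irreducible. This is a functoriality check that follows from the standard construction of the realizations $S_{\overline{\pi}}$ via appropriate Young symmetrizers applied uniformly to $V$ and $W$; the argument is essentially identical to that of \cite[Proposition~4.4]{LandsbergManivel04FCM}, with the role of the Segre variety there played by $\widehat{X}_{\vect{a}}$, and the key compatibility being the identity $\pi \circ \psi(\vect{w},\vect{c}) = \psi(\pi(\vect{w}),\vect{c})$ above.
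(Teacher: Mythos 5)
Your proof is correct, and the ``$W\Rightarrow V$'' direction coincides with the paper's (restrict along the inclusion $\sigma_r(\widehat{X}_{\vect{a},V})\subseteq\sigma_r(\widehat{X}_{\vect{a},W})$ and use that $\iota^*$ maps $S_{\overline{\pi}}W$ onto $S_{\overline{\pi}}V$). The ``$V\Rightarrow W$'' direction, however, is genuinely different. The paper restricts to the dense open locus $\sigma_r^{\circ}$ of points supported on $r$ linearly independent vectors and uses $\operatorname{GL}(W)\cdot\sigma_r^{\circ}(\widehat{X}_{\vect{a},V})=\sigma_r^{\circ}(\widehat{X}_{\vect{a},W})$ --- this is precisely where the hypothesis $r\le\dim V$ enters --- and then evaluates $f(p)=(g^{-1}\cdot f)(g\cdot p)$. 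You instead pull back along the linear projection $\pi\colon W\to V$, observing that $\pi(\sigma_r(\widehat{X}_{\vect{a},W}))\subseteq\sigma_r(\widehat{X}_{\vect{a},V})$ for \emph{every} $r$, and finish by noting that a nonzero $\operatorname{GL}(W)$-submodule of the irreducible $S_{\overline{\pi}}W$ must be everything. Your route therefore never uses $r\le\dim V$ (only $\ell(\overline{\pi})\le\dim V$ so that the source module is nonzero and $\pi^*$, being injective, has nonzero image), so it proves a marginally stronger statement; this is the classical ``inheritance via projection'' mechanism. The price is the functoriality check you flag: that $\pi^*$ carries the chosen realization $S_{\overline{\pi}}V$ into the chosen realization $S_{\overline{\pi}}W$ rather than into another isotypic copy. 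That check does go through --- the highest weight vector of $S_{\overline{\pi}}V$ maps to that of $S_{\overline{\pi}}W$ under $e_i^*\mapsto e_i^*$, and $\pi^*$ intertwines $g$ with $g\oplus\mathrm{id}_{V'}$, so $\pi^*(S_{\overline{\pi}}V)\subseteq\operatorname{span}(\operatorname{GL}(W)\cdot v_{hw})=S_{\overline{\pi}}W$ --- but it should be spelled out, since it is the one place your argument could silently fail; the paper's orbit argument sidesteps this bookkeeping at the cost of needing $r\le\dim V$.
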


\begin{proof} 
Given a basis $\{ \vect{v}_1, \dots, \vect{v}_{\dim V} \}$ for $V$, and a basis $\{ \vect{w}_1, \dots, \vect{w}_{\dim W} \}$ for $W$, fix an embedding $i \colon V \hookrightarrow W$ such that $i(\vect{v}_j) = \vect{w}_j$ for $1 \le j \le \dim V$. Since each irreducible representation is generated by its highest weight vector, then
\[
S_{\overline{\pi}} W = \operatorname{GL}(W) \cdot S_{\overline{\pi}} V
\]
for any $\pi$ with length $\ell(\pi) \le \dim V$ (See \cite{FultonHarris13-Rep, Landsberg12-Tensors}). The map $i$ induces an embedding
\[
\sigma_r(\widehat{X}_{\vect{a}, V}) \xhookrightarrow{i} \sigma_r(\widehat{X}_{\vect{a}, W}).
\]
So in $S^k(S^{a_1}W \oplus \cdots \oplus S^{a_d}W)$, we have $I_k(\sigma_r(\widehat{X}_{\vect{a}, W})) \subset I_k(i(\sigma_r(\widehat{X}_{\vect{a}, V})))$, which implies if $S_{\overline{\pi}} W \subset I_k(\sigma_r(\widehat{X}_{\vect{a}, W}))$ then $S_{\overline{\pi}} V \subset I_k(\sigma_r(\widehat{X}_{\vect{a}, V}))$.

Now we need to show for any $S_{\overline{\mu}} V \subset I_k(\sigma_r(\widehat{X}_{\vect{a}, V}))$, $S_{\overline{\mu}} W \subset I_k(\sigma_r(\widehat{X}_{\vect{a}, W}))$. Let
\begin{align}
\sigma_r^{\circ}(\widehat{X}_{\vect{a}, V}) \coloneqq \{& p \in \sigma_r(\widehat{X}_{\vect{a}, V}) \colon p = (c_{1,1} \vect{u}_1^{a_1}, \dots, c_{d,1} \vect{u}_1^{a_d}) + \cdots + (c_{1,r} \vect{u}_r^{a_1}, \dots, c_{d,r} \vect{u}_r^{a_d}), \\
& \text{where } \vect{u}_1, \dots, \vect{u}_r \text{ are linearly independent}\},
\end{align}
which is a Zariski dense open subset of $\sigma_r(\widehat{X}_{\vect{a}, V})$. Since $I(\sigma_r(\widehat{X}_{\vect{a}, V})) = I(\sigma_r^{\circ}(\widehat{X}_{\vect{a}, V}))$, we only need to show for any $f \in S_{\overline{\mu}} W \subset I_k(i(\sigma_r^{\circ}(\widehat{X}_{\vect{a}, V})))$, $f \in I_k(\sigma_r^{\circ}(\widehat{X}_{\vect{a}, W}))$. But this is true due to the fact $\operatorname{GL}(W) \cdot \sigma_r^{\circ}(\widehat{X}_{\vect{a}, V}) = \sigma_r^{\circ}(\widehat{X}_{\vect{a}, W})$. More precisely, for any $p \in \sigma_r^{\circ}(\widehat{X}_{\vect{a}, W})$, since there is some $g \in \operatorname{GL}(W)$ such that $g \cdot p \in i(\sigma_r^{\circ}(\widehat{X}_{\vect{a}, V}))$,
\[
f(p) = f(g^{-1} \cdot g \cdot p) = (g^{-1} \cdot f) (g \cdot p) = 0,
\]
which implies $f \in I_k(\sigma_r^{\circ}(\widehat{X}_{\vect{a}, W}))$.
\end{proof}

As a corollary of Proposition \ref{prop:inheritance} we have
\begin{proposition}\label{prop:idealveronesesscroll}
Given a vector space $V$ with $2 \le \dim V$, then
\[
S_{\overline{\pi}} V \subset I_k(\widehat{X}_{\vect{a}, V}) \iff S_{\overline{\pi}} V \subset I_k(\widehat{X}_{\vect{a}, \CC^2}).
\]
\end{proposition}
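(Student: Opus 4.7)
The plan is to deduce this statement as the special case $r=1$ of \cref{prop:inheritance}. Since $\widehat{X}_{\vect{a},V}$ is irreducible (it satisfies \crefrange{as:scale_invariant}{as:irreducible} as noted in \cref{sec:algebraic}), the first secant variety coincides with the variety itself, i.e., $\sigma_1(\widehat{X}_{\vect{a},V}) = \widehat{X}_{\vect{a},V}$, and therefore $I_k(\sigma_1(\widehat{X}_{\vect{a},V})) = I_k(\widehat{X}_{\vect{a},V})$; the same holds with $V$ replaced by $\CC^2$. This reduction is the only preliminary observation needed.

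First I would instantiate \cref{prop:inheritance} with $r=1$, taking the smaller vector space to be $\CC^2$ and the larger one to be $V$. The numerical hypotheses of \cref{prop:inheritance} become $1 \le \dim \CC^2 = 2 \le \dim V$ together with $\dim \CC^2 = 2 \ge 2$, both of which hold under the standing assumption $\dim V \ge 2$. The biconditional conclusion of \cref{prop:inheritance}, after these substitutions and the identification of $\sigma_1$ with the variety itself, is exactly the biconditional claimed in \cref{prop:idealveronesesscroll}.

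There is essentially no new mathematical content to verify: all the substantive work has already been carried out in the proof of \cref{prop:inheritance}, which combines the pullback along the inclusion $\CC^2 \hookrightarrow V$ (to transfer ideal membership from $V$ down to $\CC^2$) with the orbit identity $\operatorname{GL}(V)\cdot S_{\overline{\pi}}\CC^2 = S_{\overline{\pi}}V$ valid for $\ell(\pi)\le 2$ (to push ideal membership from $\CC^2$ up to $V$). The only point worth flagging, and the only place a subtlety could arise, is the edge case $\ell(\pi) > 2$: here $S_{\overline{\pi}}\CC^2 = 0$ so the $\CC^2$-side is vacuous, and the conclusion for the $V$-side still follows because $\operatorname{GL}(V)\cdot 0 = 0$ behaves compatibly under the Zariski-density argument of \cref{prop:inheritance}. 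In short, $\dim = 2$ is the minimal dimension to which the inheritance principle applies, which explains why $\CC^2$ plays the role of ``base case'' for this corollary.
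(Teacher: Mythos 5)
Your proposal is correct and is exactly the paper's argument: the paper states this proposition as an immediate corollary of \cref{prop:inheritance}, and your instantiation with $r=1$, smaller space $\CC^2$, larger space $V$ (together with the observation that $\sigma_1(\widehat{X}_{\vect{a},V})=\widehat{X}_{\vect{a},V}$) is precisely that deduction. Your extra remark on the edge case $\ell(\pi)>2$ goes beyond what the paper records but does not alter the substance of the argument.
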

Since the ideal of the $\widehat{X}_{\vect{a}, \CC^2}$ is generated by $2 \times 2$ minors of $S(f)$ \cite[Proposition 4.5]{Eisenbud88AJM-linear},   we conclude that \cref{prop:determinantal_ideal}   is proved.

\subsection{Identifiability and dimensions of secant varieties of Veronese scrolls}\label{sec:proofs_identifiability}
\begin{proof}[Proof of \cref{prop:veronese_scroll_identifiability}]
$\phantom{x}$

\begin{enumerate}
\item
We have that 
\[
 S^{\vect{a}}V \otimes W \simeq (S^{a_1}V \otimes W) \oplus \cdots \oplus(S^{a_d}V \otimes W),
\]
and consider the $j$-th canonical projection $\pi_j: S^{\vect{a}}V \otimes W \to S^{a_j}V \otimes W$.
Let $r=kn$.

Consider $\var{\widehat{Y}}_{(a_d)} = \pi_{d}(\widehat{Y})$. 
Then, by properties of Zariski closures, we have that ${\sigma}_{r}(\widehat{Y}_{(a_d)}) = \overline{\pi_d({\sigma}_r(\widehat{Y}))}$,
and by \cref{cor:closure}, a general point in ${\sigma}_{r}(\var{\widehat{Y}}_{(a_d)})$ belongs to $\pi_d(\widehat{\sigma}_r(\var{Y}))$.

Hence, we can take a general element 
\[
\vect{p} = (f_1^{(1)},\ldots,f_n^{(1)}, \ldots, f_1^{(d)},\ldots,f_n^{(d)}) \in \widehat{\sigma}_{r}(\var{Y})
\]
such that ${\vect{p}} =  \vect{y}_1 + \cdots + \vect{y}_r$, $\vect{y}_k \in \widehat{\set{Y}}$ and the decomposition
\[
\pi_d({\vect{p}}) = \pi_d(\vect{y}_1) + \cdots + \pi_d(\vect{y}_r),
\]
is unique as $X$-rank decomposition with respect to $\var{\widehat{Y}}_{(a_d)}$ (due to $r$-identifiability of $\var{Z}$, which follows from \cref{lem:segre_veronese_ident}).
A general $\vect{y}_l \in \widehat{Y}$, has the form
\[
\vect{y}_l  =
(c_{1,l} \vect{v}^{a_1}_l \otimes \vect{w}_l, \cdots, c_{d-1,1} \vect{v}^{a_{d-1}}_l \otimes \vect{w}_l, \vect{v}^{a_d}_l \otimes \vect{w}_l).
\]
where  the vectors ($\vect{v}_l \otimes \vect{w}_l$) are determined uniquely, and
$\{\vect{v}_l\otimes \vect{w}_l\}_{l=1}^r$ are linearly independent since $r \le \dim (S^{a_1}V \otimes W)$.

Finally, the coefficients $c_{k,l}$ for $k < d$ should satisfy the equation
\begin{equation}\label{eq:lower_terms}
\pi_{k}(\vect{p}) = c_{k,1} \vect{v}^{a_k}_1 \otimes \vect{w}_1 + \cdots + c_{k,r} \vect{v}^{a_k}_r \otimes \vect{w}_r.
\end{equation}
By properties of Veronese embeddings, the vectors in  $\{\vect{v}_l^{a_k} \otimes \vect{w}_l\}_{l=1}^r$ are also linearly independent, and therefore $c_{k,l}$ are determined uniquely.  

\item Again, look at \eqref{eq:lower_terms} for $k=1$. We have that any system $\{\vect{v}_l^{a_1} \otimes \vect{w}_l\}_{l=1}^r$ is linearly dependent due to the fact that $r > \dim (S^{a_1}V \otimes W)$. Therefore, $\var{Y}_{(a_1,\ldots,a_d)}$ cannot be $r$-identifiable.
\end{enumerate}
\end{proof}

\begin{proof}[Proof of \cref{prop:veronese_scroll_dimensions}]
$\phantom{x}$

\begin{enumerate}
\item By Proposition~\ref{prop:veronese_scroll_identifiability},  ${\sigma}_r (\widehat{Y})$ is identifiable and thus nondefective.

\item It is sufficient to prove \cref{eq:veronese_scroll_defective_secant}, the rest follows automatically.
Let $\pi = \pi_{(s,\ldots,d)} :S^{(a_1,\ldots,a_d)} V \otimes W \to S^{(a_{s+1},\ldots,a_{d})} V \otimes W$ denote the canonical projection \emph{i.e.},
\[
\begin{array}{rcl}
\pi_{(s,\ldots,d)} \colon  (f_1^{(1)},\ldots,f_1^{(d)}, \ldots, f_n^{(1)},\ldots,f_n^{(d)})& \mapsto & (f_1^{(s)},\ldots,f_1^{(d)}, \ldots, f_n^{(s)},\ldots,f_n^{(d)}).
\end{array}
\]
As in the proof of Proposition~\ref{prop:veronese_scroll_identifiability},  we have that ${\sigma}_r(\widehat{Z}) = \overline{\pi({\sigma}_r(\widehat{Y}))}$ and by \cref{cor:closure} a general element ${\sigma}_r(\widehat{Z})$ can be taken from  $\pi({\sigma}_r(\widehat{Y}))$.

Next, as in Proposition~\ref{prop:veronese_scroll_identifiability}, there exists a Zariski-open subset of $\set{U} \subset \pi({\sigma}_r(\widehat{Y}))$ such that any $\widehat{u} \in \set{U}$ has the decomposition $\widehat{u} = \widehat{u}_1 + \cdots + \widehat{u}_r$, where 
\[
\widehat{u}_l  =
(c_{s+1,l} \vect{v}^{a_{s+1}}_l \otimes \vect{w}_l, \cdots, c_{d-1,l} \vect{v}^{a_{d-1}}_l \otimes \vect{w}_k, \vect{v}^{a_d}_l \otimes \vect{w}_l),
\]
and $v_l \otimes w_l$ are in general position. Therefore, for any $\widehat{p} \in S^{(a_1,\ldots,a_d)} V$ with $\pi(\widehat{p}) = \widehat{u}$ and  all $k \le s$, the equation
\eqref{eq:lower_terms} is always solvable.
Thus we have that
\[
\pi^{-1}(U) = (S^{(a_1,\ldots,a_s)} V \otimes W) \times U \subset {\sigma}_r(\widehat{Y}),
\]
and, moreover, ${\sigma}_r(\widehat{Z}) = \overline{\pi^{-1}(U)}  = \overline{(S^{(a_1,\ldots,a_s)} V \otimes W) \times U}$, which implies \eqref{eq:veronese_scroll_defective_secant}.
\end{enumerate}
\end{proof}

\subsection{Generic ranks}\label{sec:gen_ranks_proofs}
\begin{proof}[Proof of \cref{prop:rgen_VS}]
Recall the morphism $\Sigma_r$ defined by
\begin{align*}
\Sigma_r \colon (\widehat{{X}}_{(d-1,d)})^{\times r} &\to S^{d-1}V \oplus S^dV \\
\left((\mu_1 \vect{v}_1^{d-1}, \lambda_1 \vect{v}_1^d), \dots, (\mu_r \vect{v}_r^{d-1}, \lambda_r \vect{v}_r^d)\right) &\mapsto (\mu_1 \vect{v}_1^{d-1} + \cdots + \mu_r \vect{v}_r^{d-1}, \lambda_1 \vect{v}_1^d + \cdots + \lambda_r \vect{v}_r^d).
\end{align*}
Let $\pi_{d-1} \colon S^{d-1}V \oplus S^dV \to S^{d-1}V$ be the natural projection, and likewise for $\pi_d$. Then $r \ge r_{gen} (\var{X}^{(d-1,d)})$ if and only if
\[
\dim \pi_{d-1}(\pi_d^{-1}(p) \cap \operatorname{Im} \Sigma_r) = \dim S^{d-1}V
\]
for a general $p \in S^dV$. Since
\[
\dim \pi_{d-1}(\pi_d^{-1}(p) \cap \operatorname{Im} \Sigma_r) = \dim \operatorname{Im}(\Sigma_r) - \dim S^dV \le \dim (\widehat{{X}}_{(d-1,d)})^{\times r} - \dim S^dV,
\]
then $r \ge \frac{  \binom{m+d-2}{d-1} +   \binom{m+d-1}{d}}{m+1}$. On the other hand,
\[
\rank{p} = \rGenSdV \eqdef r_{gen}(\nu_{d}(\PP V)) = \lceil \frac{\binom{m+d-1}{d}}{m} \rceil,
\]
so we may assume $p = \vect{u}_1^d + \cdots + \vect{u}_{\rGenSdV}^d$ is a rank-$\rGenSdV$ decomposition of $p$. Then inside $\pi_d^{-1}(p) \cap \operatorname{Im} \Sigma_r$ there is a quasi-affine variety $Y$ parametrized by
\begin{align*}
Y &\eqdef \{ (\mu_1 \cdot \vect{u}_1^{d-1} + \cdots + \mu_{\rGenSdV} \cdot \vect{u}_{\rGenSdV}^{d-1} + \vect{u}_{\rGenSdV+1}^{d-1} + \cdots + \vect{u}_r^{d-1}, \\
&\qquad \vect{u}_1^{d} + \cdots + \vect{u}_{\rGenSdV}^{d} + 0 \cdot \vect{u}_{\rGenSdV+1}^{d} + \cdots + 0 \cdot \vect{u}_r^{d}) \in S^{d-1}V \oplus S^dV \colon \\
&\qquad   \mu_1,\dots,\mu_{\rGenSdV} \in \CC, \vect{u}_{\rGenSdV+1}, \dots, \vect{u}_r \in V \}.
\end{align*}
Since $\dim Y \le \dim \pi_d^{-1}(p) \cap \operatorname{Im} \Sigma_r$, $\dim \pi_{d-1}(Y) \le \dim \pi_{d-1}(\pi_d^{-1}(p) \cap \operatorname{Im} \Sigma_r) \le \dim S^{d-1}V$. Since $\rGenSdV \le \dim S^{d-1}V$, $p$ being general guarantees $\vect{u}_1^{d-1}, \dots, \vect{u}_{\rGenSdV}^{d-1}$ are linearly independent. Then when $r < \dim S^{d-1}V$ we can choose $\vect{u}_{\rGenSdV+1}, \dots, \vect{u}_r$ such that $\vect{u}_1^{d-1}, \dots, \vect{u}_r^{d-1}$ are linearly independent. By semicontinuity, for general $\vect{u}_{\rGenSdV+1}, \dots, \vect{u}_r$, we have $\vect{u}_1^{d-1}, \dots, \vect{u}_r^{d-1}$ are linearly independent. By Alexander-Hirschowitz theorem \cite{AlexanderHirschowitz95jag-nondefectivity}, when $r - \rGenSdV < r_{gen}(\nu_{d-1}(\PP V))$, the quasi-affine variety parametrized by
\[
\{\vect{u}_{\rGenSdV+1}^{d-1} + \cdots + \vect{u}_r^{d-1} \colon \vect{u}_{\rGenSdV+1}, \dots, \vect{u}_r \in V \},
\]
which contains an open Zariski subset of $\widehat{\sigma}_{r - \rGenSdV} (\nu_d(\PP V))$, has the expected dimension $(r - \rGenSdV)m$. Therefore 
\[
\pi_{d-1}(Y) = \{ \mu_1 \cdot \vect{u}_1^{d-1} + \cdots + \mu_{\rGenSdV} \cdot \vect{u}_{\rGenSdV}^{d-1} + \vect{u}_{\rGenSdV+1}^{d-1} + \cdots + \vect{u}_r^{d-1} \colon  \mu_1,\dots,\mu_{\rGenSdV} \in \CC,  \vect{u}_{\rGenSdV+1}, \dots, \vect{u}_r \in V \}
\]
has dimension $\rGenSdV + (r - \rGenSdV)m \le \dim S^{d-1}V$, which implies
\[r_{gen} (\var{X}_{(d-1,d)}) \le \bigg\lceil \frac{\binom{m+d-2}{d-1} + (m-1)\lceil \frac{\binom{m+d-1}{d}}{m} \rceil}{m} \bigg\rceil.\]
\end{proof}

\begin{proof}[Proof of \cref{prop:rgen_VS_exact}]
Consider the isomorphism:
\[
S^{d}(V\oplus \CC) \cong S^{(0,\ldots,d)} V.
\]
Then we have that $\nu_d(V\oplus \CC)$ is isomorphic to
\begin{equation}\label{eq:homogenized_variety}
\widehat{Z}_{d} = \widehat{Z}_{d,V} \eqdef \{(c^d,c^{d-1} \vect{v}, \ldots, c \vect{v}^{d-1}, \vect{v}^d) : c\in\CC, \vect{v} \in V \} \subset\widehat{X}_{(0,\cdots,d),V}.
\end{equation}
 Thus when $r > \dim S^{(0, \dots, d-2)}V$,   for any $\vect{p} \in S^{(0, \dots, d-2)}V$ and any general
\[
T = (c_1^d, \dots, \vect{v}_1^d) + \cdots + (c_r^d, \dots, \vect{v}_r^d) \in \sigma_r (\widehat{Z}_d),
\] 
there are some $\alpha_1, \dots, \alpha_r$ such that
\[
\vect{p} = \alpha_1 (c_1^d, \dots, c_1^{d-2} \vect{v}_1^{d-2}) + \cdots + \alpha_r (c_r^d, \dots, c_r^{d-2} \vect{v}_r^{d-2}) \in \sigma_r(\widehat{Z}_{d-2}) = S^{(0, \dots, d-2)}V,
\]
which implies
\[
\sigma_r(\widehat{Z}_d) = S^{(0, \dots, d-2)}V \oplus \sigma_r(\widehat{X}_{(d-1,d)}).
\]
By Alexander-Hirschowitz Theorem, when $r < \frac{\binom{m+d}{d}}{m+1}$, $\dim \sigma_r(\widehat{Z}_d) = r(m+1)$. Therefore
\[
\dim \sigma_r(\widehat{X}_{(d-1,d)}) = r(m+1) - \dim S^{(0, \dots, d-2)}V.
\]
 Since $m > (d-1)^2$, $ \left \lceil\frac{\binom{m+d-2}{d-1}+\binom{m+d-1}{d}}{m+1} \right \rceil \ge \dim S^{(0, \dots, d-2)}V$. 
In particular,
\[
r_{gen} (\widehat{X}_{(d-1,d)}) = \left \lceil\frac{\binom{m+d-2}{d-1}+\binom{m+d-1}{d}}{m+1} \right \rceil.
\]
\end{proof}

\begin{proof}[Proof of \cref{prop:Xd-1driden}]
When $r < \frac{\binom{m+d}{d}}{m+1}$, $\widehat{Z}_d$ is $r$-identifiable, which implies when $r < r_{gen}(\widehat{X}_{(d-1,d)})$, $\widehat{X}_{(d-1,d)}$ is $r$-identifiable.
\end{proof}

\section*{Acknowledgement} 
We would like to thank Ignat Domanov,  Philippe Dreesen, Mariya Ishteva, Giorgio Ottaviani and Nick Vannieuwenhoven  for enlightening discussions. We truly appreciate the help of the editors and the referees, their careful proofreading, and many thoughtful comments. 

\appendix
\section{Basic definitions}
\subsection{Symmetric tensors and homogeneous polynomials}\label{sec:symtdec}
Here we recall  some basic properties of symmetric tensors, which can be found in \cite{Comon.etal08SJMAA-Symmetric}.
A tensor $\mathcal{T} \in \KK^{m\times \cdots \times m}$ is called symmetric if 
\[
\mathcal{T}_{i_1,\ldots,i_d} =  \mathcal{T}_{i_{\pi(1)}, \ldots, i_{\pi(d)}}, 
\]
for any permutation of indices $\pi$. In this case, we write $\mathcal{T} \in S^d(\KK^m)$.
There is a one-to-one correspondence between symmetric tensors and homogeneous polynomials.
The contraction
\[
f(u) = \mathcal{T} \bullet_1 \vars \bullet_2 \vars \cdots \bullet_d \vars.
\]
gives a homogenous polynomial of degree $d$. 
Vice versa, any homogeneous polynomial corresponds to a unique element in $S^d(\KK^m)$ via polarization. 
In this paper, to avoid unnecessary extra symbols, for a homogeneous polynomial $f(\vars)$ we use the same letter for the corresponding $f\in S^d(\KK^m)$.

Next, a rank-one symmetric tensor of  order $d$ corresponds to the $d$-th power  of a linear form:
\[
(\vect{v} \otimes \cdots \otimes \vect{v}) \bullet_1 \vars \bullet_2 \vars \cdots \bullet_d \vars = (\vect{v}^{\top}\vars)^{d},
\]
which explains the equivalence between \eqref{eq:symtdec} and \eqref{eq:waring_decomposition}.

Finally, it is often convenient to give homogeneous polynomials in the following coordinates.
Let $\boldsymbol{\alpha} = (\alpha_1,\ldots,\alpha_m) \in \NN^m$ be a multi index\footnote{By convention, the set  $\NN$ includes $0$.}, we define the set
\[
\Delta_{s,m} \eqdef \{ \boldsymbol{\alpha} \in \NN^m : \alpha_1 + \cdots + \alpha_m = s\}.
\]
Now  the  homogeneous polynomial  $f \in S^d \set{V}$ can be represented in the following coordinates
\[
f(\vect{u}) = \sum_{\boldsymbol{\alpha} = (\alpha_1,\ldots,\alpha_m) \in \Delta_{d,m}} \frac{(\alpha_1+\cdots+\alpha_m)!}{\alpha_1!\cdots \alpha_m!} f_{\boldsymbol{\alpha}} \vect{u}^{\boldsymbol{\alpha}},
\]
where $\vect{u}^{\boldsymbol{\alpha}} = u_1^{\alpha_1} \cdots u_d^{\alpha_d}$.

\subsection{Algebraic varieties}\label{sec:varieties}
This subsection is devoted to a short summary of basic definitions in algebraic geometry that will be needed in this paper.
We choose a simplistic view, on a level of the popular book of Cox, Little and O'Shea \cite{Cox.etal97-Ideals}.
A quick and simple overview of the main definitions used here can be also found in the paper \cite{Sottile16-Real}.
As it was mentioned in the introduction, we only consider the case $\KK = \RR,\CC$. 

\begin{definition}[Algebraic variety]
A subset $\set{Z} \subseteq \KK^N$ is called an affine algebraic variety\footnote{As in \cite{Cox.etal97-Ideals}, we do not require a variety to be irreducible, contrary to some classic definitions.} if there exist a finite set of polynomials $p_1, \cdots, p_M \in \KK\brackets{\vect{z}}$  such that
\begin{equation}\label{eq:polynomial_system}
\vect{z} \in \set{Z} \iff \begin{cases}
p_1(\vect{z}) &= 0, \\
&\vdots\\
p_M(\vect{z}) &= 0 \\
\end{cases}
\end{equation}
\emph{i.e.} $\set{Z}$ is a zero locus of $p_1, \ldots, p_M$.
A set $\set{X}$ is a  called a proper subvariety of $Z$, if $X \subsetneq Z$ and $X$ is also a variety.
\end{definition}
\begin{remark}
 $\KK^N$ is also an algebraic variety: a zero locus of the zero polynomial $p(\vect{z}) \equiv 0$.
\end{remark}

\begin{definition}[Zariski closure]
For any set $Y \subset \KK^{N}$, by $\overline{Y}$ we denote the smallest algebraic variety $Z$, such that $Y \subseteq Z$. $\overline{Y}$ is called the Zariski closure of $Y$.
\end{definition}

\begin{definition}[Irreducibility]
A nonempty variety $X$  is called irreducible \cite{Cox.etal97-Ideals} it cannot be represented as a union of two distinct varieties.
(More precisely, if for a decomposition $X = Y \cup {Z}$ with $Y$, $Z$ varieties, it holds that either ${Y} \subseteq {Z} $ or ${Z} \subseteq {Y}$.) 
\end{definition}

\begin{definition}[Generic property]
We say that some property is \emph{generic} in an irreducible variety $Z$ if there exists a proper subvariety $\set{V} \subsetneq Z$ (of smaller dimension) such that the property is true  for all points in $Z \setminus V$.
\end{definition}

\begin{remark}[Generic properties in $\CC^{N}$]
If the property is generic in $\CC^{N}$, it implies\footnote{This follows from the fact that any proper algebraic subvariety has Lebesgue measure zero.} that a random vector in $\CC^N$ (drawn from  any absolutely continuous distribution)  satisfies a given generic property with probability $1$. 
\end{remark}

\begin{definition}
Let $Z$ be an irreducible variety in $\KK^{N}$, and $p_1, \ldots, p_M$ be a set of generators of its ideal.
Let $d$ be the maximal rank of the Jacobian matrix $J_{p}(\vect{z}) \eqdef [\frac{\partial{p}_{j}}{\partial z_j}]_{i,j=1}^{M,N}$ at $\vect{z}\in Z$.
Then the dimension is, by definition, $N - d$. 
A point $\vect{z}$ is called smooth if $J_{p}(\vect{z})$ has maximal rank at that point. 
Finally,  the dimension of a reducible variety is equal to the maximal dimension of its  irreducible components.
\end{definition}

\subsection{Polynomial images of algebraic varieties}
\begin{definition}
The set  $Z \subset \CC^{N}$  is called constructible, if it can be written as a finite union 
\[
Z = (X_1 \setminus Y_1) \cup \cdots \cup(X_r \setminus Y_r)
\]
where $X_k$, $Y_k$ are varieties.
\end{definition}
\begin{theorem}[Chevalley]
An image of a constructible set under a polynomial map  is constructible.
\end{theorem}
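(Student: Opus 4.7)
The plan is to prove Chevalley's theorem by reducing to the case of a single coordinate projection applied to an irreducible affine variety, and then invoking Noetherian induction on the dimension of the target.

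First I would perform two reductions. Since a constructible set is by definition a finite union of locally closed sets $X_k \setminus Y_k$ with $X_k, Y_k$ varieties, and since images distribute over finite unions, it suffices to show that the image of a single locally closed set (hence of a quasi-affine variety, and by a further stratification into irreducible components, of an irreducible variety) is constructible. Second, any polynomial map $\phi \colon \CC^N \to \CC^M$ factors as the graph embedding $\vect{z} \mapsto (\vect{z}, \phi(\vect{z}))$ — a closed embedding of $\CC^N$ into $\CC^{N+M}$ whose image is a variety — followed by the coordinate projection $\CC^{N+M} \to \CC^M$. Since a closed embedding sends varieties to varieties (in particular to constructible sets), and since any coordinate projection is an iterated composition of projections that drop one coordinate, it suffices to prove the following: \emph{for any irreducible variety $X \subseteq \CC^{N+1}$, its image under the projection $\pi \colon \CC^{N+1} \to \CC^N$ forgetting the last coordinate is constructible}.

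Next I would proceed by Noetherian induction on $\dim \overline{\pi(X)}$. Let $Y = \overline{\pi(X)}$, which is irreducible since it is the Zariski closure of the image of an irreducible set. The central claim is the \emph{generic surjectivity lemma}: there exists a nonempty Zariski-open $U \subseteq Y$ with $U \subseteq \pi(X)$. Granted this, write
\[
\pi(X) = U \cup \bigl(\pi(X) \cap (Y \setminus U)\bigr) = U \cup \pi\bigl(X \cap \pi^{-1}(Y \setminus U)\bigr).
\]
The set $X \cap \pi^{-1}(Y \setminus U)$ is an affine variety (intersection of two varieties) whose projection has closure contained in the proper subvariety $Y \setminus U$, hence of strictly smaller dimension. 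The inductive hypothesis then gives that $\pi\bigl(X \cap \pi^{-1}(Y \setminus U)\bigr)$ is constructible, and so $\pi(X)$ — the union of an open set $U$ of $Y$ with a constructible set — is constructible. The base case $\dim Y = 0$ is trivial since $Y$ is then a finite set of points and $\pi(X) \subseteq Y$ is finite.

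The main obstacle is establishing the generic surjectivity lemma. The standard route is via elimination theory: writing the ideal $I(X) \subseteq \KK[z_1, \ldots, z_N, z_{N+1}]$, one considers the elimination ideal $I(X) \cap \KK[z_1, \ldots, z_N]$, whose zero set is exactly $Y = \overline{\pi(X)}$; the \emph{Closure Theorem} of \cite{Cox.etal97-Ideals} then produces, by explicit analysis of the generators, a proper subvariety $W \subsetneq Y$ such that $Y \setminus W \subseteq \pi(X)$. Taking $U = Y \setminus W$ yields the claim. An equivalent approach, which also works smoothly over $\CC$, is to invoke the dimension theorem for dominant morphisms of irreducible varieties: since $\pi\vert_X \colon X \to Y$ is dominant, for generic $y \in Y$ the fiber has dimension $\dim X - \dim Y \ge 0$ and is in particular nonempty, so a suitable open $U$ of $Y$ lies in $\pi(X)$. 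Either route requires a nontrivial input from commutative algebra, but both are classical and can be imported as a black box, completing the induction.
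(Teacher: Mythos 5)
The paper does not actually prove this statement: Chevalley's theorem is quoted in the appendix as a classical black box, used only to derive \cref{cor:closure}, so there is no in-paper argument to compare yours against. Judged on its own, your proposal is the standard textbook proof --- factor the map through its graph, reduce to single-coordinate projections, and run Noetherian induction on $\dim \overline{\pi(X)}$ using a generic-surjectivity lemma --- and the architecture is sound. Importing that lemma from the Closure Theorem of \cite{Cox.etal97-Ideals} (the very reference the paper leans on for its algebraic-geometry background) or from the fiber-dimension theorem for dominant morphisms is legitimate, since either input is more elementary than Chevalley's theorem itself and is proved in the cited sources.

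One step needs tightening. Your first reduction asserts that it suffices to treat the image of an irreducible \emph{variety}, but a constructible set is a union of locally closed pieces $X_k \setminus Y_k$, and $\pi(X_k \setminus Y_k)$ is not determined by $\pi(X_k)$; a locally closed set is not a closed variety, so the parenthetical ``hence \dots of an irreducible variety'' does not follow as written. Two standard fixes: either run the induction directly on locally closed (or constructible) subsets of $\CC^{N+1}$ --- the argument goes through because $X \setminus Y$ is dense in the irreducible $X$, so $\overline{\pi(X\setminus Y)} = \overline{\pi(X)}$ and generic surjectivity still applies to the dominant map $X \setminus Y \to Y$ --- or realize $X \setminus \{g=0\}$ as the closed variety $\{(\vect{z},t) : \vect{z} \in X,\ t\,g(\vect{z}) = 1\}$ in one more ambient dimension and project. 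Relatedly, $X \cap \pi^{-1}(Y\setminus U)$ need not be irreducible, so the inductive hypothesis must be applied to each irreducible component (each of which has image of dimension strictly less than $\dim Y$); this is routine but should be said. Finally, the statement is true only over $\CC$ (over $\RR$ the image of $x \mapsto x^2$ is $[0,\infty)$, which is semialgebraic but not Zariski-constructible), which is consistent with your working over $\CC$ and with how the paper uses the theorem.
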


\begin{corollary}\label{cor:closure}
Assume that  $X\subset \CC^m$ is a variety,  $p: \CC^m \to \CC^n$  is a polynomial map and  $Y = \overline{p(X)}$, such that $Y$ is irreducible.
Then a general element in $Y$ lies in $p(X)$, \emph{i.e.} there exists a subvariety $Z \subsetneq Y$ of strictly smaller dimension such that $Y \setminus Z \in p(X)$.
\end{corollary}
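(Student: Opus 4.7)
The plan is to deduce the corollary directly from Chevalley's theorem, which is stated immediately above. Since $X$ is a variety it is constructible, so Chevalley's theorem gives us a representation
\[
p(X) = \bigcup_{i=1}^{r} (X_i \setminus Y_i),
\]
where each $X_i, Y_i$ is a variety. The first step is to take Zariski closures: using that closure commutes with finite unions and that $\overline{X_i \setminus Y_i} \subseteq X_i$, we obtain
\[
Y \;=\; \overline{p(X)} \;=\; \bigcup_{i=1}^{r} \overline{X_i \setminus Y_i}.
\]

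The key step is then to invoke irreducibility of $Y$: one of the closed pieces, say $\overline{X_{i_0} \setminus Y_{i_0}}$, must equal $Y$. From $\overline{X_{i_0} \setminus Y_{i_0}} \subseteq X_{i_0}$ we get $Y \subseteq X_{i_0}$, and therefore
\[
Y \setminus Y_{i_0} \;=\; (Y \cap X_{i_0}) \setminus Y_{i_0} \;\subseteq\; X_{i_0} \setminus Y_{i_0} \;\subseteq\; p(X).
\]
So the candidate for the exceptional subvariety is $Z := Y \cap Y_{i_0}$.

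The final step is to verify that $Z \subsetneq Y$ has strictly smaller dimension. If instead $Y \subseteq Y_{i_0}$, then combined with $p(X) \subseteq Y$ we would have $X_{i_0} \setminus Y_{i_0} \subseteq Y \setminus Y_{i_0} = \emptyset$, contradicting $\overline{X_{i_0} \setminus Y_{i_0}} = Y$ (since $Y$ is nonempty, being the closure of a nonempty image). Thus $Z$ is a proper subvariety of the irreducible variety $Y$, so $\dim Z < \dim Y$, completing the proof.

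The only mildly subtle point is ensuring that the component realizing $\overline{X_{i_0} \setminus Y_{i_0}} = Y$ actually yields a proper subvariety $Y \cap Y_{i_0}$; this is handled by the short contradiction argument above, and is really the heart of the statement. Everything else is essentially bookkeeping on top of Chevalley's theorem, so there is no serious obstacle.
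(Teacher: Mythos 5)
Your proof is correct and follows exactly the route the paper intends: the corollary is stated as an immediate consequence of Chevalley's theorem, and you have simply filled in the details (decompose $p(X)$ into locally closed pieces, use irreducibility of $Y$ to single out the piece whose closure is all of $Y$, and take $Z = Y \cap Y_{i_0}$). The contradiction argument showing $Z \subsetneq Y$ is the right way to handle the one nontrivial point, so nothing is missing.
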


\bibliographystyle{siam}
\bibliography{polydec-ranks}

\end{document}